\title{A Short Decidability Proof for DPDA Language Equivalence 
via First-Order Grammars
}
\author{Petr Jan\v{c}ar\\
{\small Techn. Univ. Ostrava, Czech Republic,}\\
{\small http://www.cs.vsb.cz/jancar/,
 email:petr.jancar@vsb.cz}
}
\date{}
\begin{document}

\maketitle

\begin{abstract}
\noindent
The main aim of the paper is to give a short 
self-contained proof of the decidability of
language equivalence for deterministic pushdown automata, 
which is the famous problem 
solved by G. S\'enizergues, for which  C. Stirling has 
derived a primitive recursive complexity upper bound.
The proof here is given in 
the framework of first-order grammars,
which seems to be particularly apt for the aim.
\end{abstract}

\noindent
\textbf{Keywords:} pushdown automaton, deterministic context-free
language, first-order grammar, language equivalence, trace
equivalence,
decidability

\section{Introduction}
\label{sec:intro}

The decidability question for language equivalence of two
deterministic pushdown automata (dpda) is  a famous problem in language
theory. The question was explicitly stated in the 1960s 
\cite{Ginsburg66} (when language inclusion was found undecidable);
then a series of works solving various subcases followed,
until the question was answered positively
by S\'enizergues in 1997 
(a full version appeared in~\cite{Senizergues:TCS2001}).
G. S\'enizergues was
awarded G\"odel prize in 2002 for this significant achievement. 
Later Stirling~\cite{Stirling:TCS2001},
and also S\'enizergues~\cite{Senizergues:TCS2001_simple},
provided simpler proofs than the original long
technical proof. A modified version, which 
showed a primitive recursive complexity upper bound, appeared as a conference
paper by Stirling in 2002~\cite{Stir:DPDA:prim}.

Nevertheless, even the above mentioned simplified proofs are rather
technical, and they seem not well understood in the (theoretical)
computer science community.
One reason might be that the frameworks like that of strict deterministic
grammars, which were chosen by S\'enizergues and Stirling, are not
ideal for presenting this topic to a broader audience. 

From some (older) works by Courcelle,
Harrison and others we know that the dpda-framework
and strict-deterministic grammar framework are equivalent 
 to the
framework of first-order schemes, or first-order grammars. 
In this paper, a proof in the framework of first-order grammars is
presented.

\emph{Author's remark.}
I have been reminded that J.~R.~B\"uchi in his book 
``Finite automata, their algebras and grammars: towards a theory of
formal expressions'' (1989)
argues that using terms
is the way proofs on context-free grammars should be done. 
I have not managed to verify myself, but this can be an indication that
the framework of first-order terms might be ``inherently more
suitable'' here.

In fact, the proof here shows
the decidability of trace equivalence (a variant of
language equivalence, coinciding with  bisimulation equivalence on 
deterministic labelled transition systems) for deterministic first-order
grammars; the states (configurations) are first-order terms which can
change by performing actions according to the root-rewriting rules.
To make the paper self-contained, a reduction from 
the dpda language equivalence problem to the above trace equivalence
problem is given in an appendix.

In principle, the proof is lead in a similar manner as the proofs of
S\'enizergues and Stirling, being based on the same abstract ideas.
Nevertheless, the framework of the first-order terms seems to allow to
highlight the basic ideas in a more
clear way and to provide
a shorter proof in the form of
a sequence of relatively simple observations.
Though the proof is the ``same'' as the previous ones on an abstract
level, and each particular idea used here might be embedded somewhere in the
previous proofs, it is by no means a ``mechanical translation'' of a
proof (or proofs) from one framework to another.
Another appendix then shows that the framework
chosen here also has a potential to concisely comprise
and slightly strengthen the previous
knowledge of the complexity of the problem, though the proofs in that
part are not so detailed as in the decidability part.

\emph{Author's remarks.}
I hope that the presentation here should significantly extend the number of
people in the community who will 
understand
the problem which seems to belong to fundamental ones; this might also
trigger new attempts regarding the research of complexity. 
Another remark concerns
the previous version(s) of this arxiv-paper where I also claimed to
provide a smooth generalization of the decidability proof to the case
of bisimilarity for nondeterministic first-order grammars.
G\'eraud S\'enizergues was present at my talk at 
http://www.lsv.ens-cachan.fr/Events/Pavas/ (20 January, 2011) and he
later put a counterexample on arxiv: http://arxiv.org/abs/1101.5046.
(My mistake was, in fact, embarrassingly simple, mixing the
absolute equivalence levels with the eq-levels relative to fixed
strategies. At the moment, I do not speculate how this can be
corrected.)

In the rest of this section, the main ideas are sketched.
A GNF (Greibach Normal Form) grammar $\calG=(\calN,\act,\calR)$, with finite sets
$\calN$ of nonterminals and $\act$ of terminals, 
has the rewriting rules 
$X\gt{}a Y_1\dots Y_n$ where $a\in\act$ and $X,Y_1,\dots,Y_n\in\calN$.
Such a rewrite rule can be written as  $Xx\gt{a} Y_1\dots Y_nx$, for a
formal variable $x$, and read as follows: 
any sequence $X\alpha\in\calN^*$ (a `state', or a `configuration') can perform
action $a$ while changing into  $Y_1\dots Y_n\alpha$\,;
this includes the case  when $n=0$ and thus $Y_1\dots Y_n=\varepsilon$, the
empty word.
The language $L(\alpha)$ is the set of words $w=a_1a_2\dots a_m\in\act^*$ such that 
$\alpha\gt{a_1}..\gt{a_2}..\cdots ..\gt{a_m}\varepsilon$. 

In a first-order grammar $\calG=(\calN,\act,\calR)$, each nonterminal $X$
has a finite arity (not only arity 1 like in the GNF grammar), 
and the (root rewriting) rules are $Xx_1\dots x_m\gt{a}E(x_1,\dots,x_m)$
where $m=arity(X)$ and $E$ is a finite term over $\calN$ where all
occurring variables are from the set $\{x_1,\dots,x_m\}$.
When $G_1,\dots,G_m$ are terms then 
 $XG_1\dots G_m\gt{a}E(G_1,\dots,G_m)$ where $E(G_1,\dots,G_m)$
 denotes the result of substitution
$E(x_1,\dots,x_m)[G_1/x_1,\dots,G_m/x_m]$.
Grammar $\calG$ is \emph{deterministic} if for each pair $X\in\calN$,
$a\in\act$ there is at most one rule of the type $Xx_1\dots x_m\gt{a}\dots$.
We note that states (configurations) are no longer strings (as in the
case of GNF grammars) but
\emph{terms}, naturally represented as \emph{trees}. 

It is a routine to reduce dpda language equivalence 
to deterministic first-order grammar \emph{trace equivalence}:
two terms $T,T'$ are equivalent, $T\sim T'$,
iff the words (traces) $w\in\act^*$ enabled in $T$
($T\gt{w}T_1$ for some $T_1$) are the same as those enabled in $T'$.
It is natural to define the equivalence-level
$\eqlevel(T,T')$  as the maximal $k\in\Nat$ for which we have 
$T\sim_k T'$, which means that 
$T,T'$ enable the same words upto length $k$; 
we put $\eqlevel(T,T')=\omega$ iff $T\sim T'$, i.e. iff  
$T\sim_k T'$ for all $k\in\Nat$.

Given a deterministic first-order grammar
$\calG$ and an initial pair of terms $T_0,U_0$,  the first idea for
deciding $T_0\stackrel{?}{\sim} U_0$ 
is to use the (breadth-first) search for a
shortest 
word $w$ which is enabled in just one of $T_0,U_0$.
We call such a
\emph{word} as \emph{offending} (adopting the viewpoint of a
defender of the claim $T_0\sim U_0$).

To get a
terminating algorithm in the case of $T_0\sim U_0$,
we can think of some \emph{sound system} enabling to establish for certain words
$u\in\calA^*$ that they are not \emph{offending prefixes}, i.e.
prefixes of offending words for  $T_0,U_0$; e.g., if $T_0\gt{u} T$
and $U_0\gt{u} T$ then $u$ cannot be an offending prefix.
We aim at \emph{completeness},
i.e., look for some means which enable to recognize sufficiently many
nonoffending prefixes, finally showing that $\varepsilon$ is not
offending, which means  $T_0\sim U_0$.

We use a simple observation that
the eq-level of a pair $(T,U)$ can drop by at most $1$ when both sides
perform the same
action, 
and it 
really \emph{drops by $1$ in each step} when we follow an offending
word.
It is also easy to observe a \emph{congruence property} of
\emph{subterm replacement}; in particular, given terms $U_1,U_2$ with 
$\eqlevel(U_1,U_2)=k$ and $T_1,T_2$ with  $\eqlevel(T_1,T_2)\geq k+1$
where $U_1$ has a subterm $T_1$, i.e. $U_1=E(T_1)$,
by replacing $T_1$ with $T_2$ we get for the arising $U'_1=E(T_2)$ 
that $\eqlevel(U'_1,U_2)=\eqlevel(U_1,U_2)=k$; 
the eq-level has been unaffected, and moreover, even the offending words for
$(U'_1,U_2)$ 
are
the same as the offending words for $(U_1,U_2)$. 

The above simple observations allow to build a (sound and) 
complete system,
when we add the notion of a \emph{basis}, a finite set of pairs of `equivalent heads'
(tree-tops, tree-prefixes)
$E(x_1,\dots,x_n), F(x_1,\dots,x_n)$, for which 
we have $E(T_1,\dots,T_n)\sim F(T_1,\dots,T_n)$ for every instance.
To enable a smooth completeness proof, showing
even the existence of a fixed sufficient basis
$\calB$ for each grammar $\calG$ (not depending on the initial pair),
it is helpful
to start immediately in a more general setting of
\emph{regular terms}, 
which are finite or infinite terms with only finitely many
subterms (where a subterm can possibly have an infinite number of
occurrences); such terms have natural finite graph presentations.

The structure of the paper is clear from (sub)section titles.
There are also two (above mentioned) appendices.

\section{Basic definitions and simple facts}
\label{sec:definitions}

In this section we introduce the basic definitions and observe 
some simple facts on which the main proof is based. 
\\
By $\Nat$ we denote the 
set $\{0,1,2,\dots\}$ of natural numbers; symbol $\omega$ 
is taken as the first infinite ordinal number,
which satisfies $n<\omega$ and $\omega-n=\omega$ for any
$n\in \Nat$.
\\
For a set $A$, by $A^*$ 
we denote the set of finite sequences, i.e. words, of
elements of $A$; the length of $w\in A^*$ is denoted 
$|w|$,
and we use $\varepsilon$ for the empty sequence,
so $|\varepsilon|=0$.
By $A^{\omega}$ we denote the set of infinite
sequences of
elements of $A$ (i.e., the mappings $\Nat\rightarrow A$).
\\
Given 
$L\subseteq A^*\cup A^{\omega}$
and $u\in\act^*$, by $u\lquot L$ we mean the
\emph{left quotient} of $L$ by $u$, i.e. the set
$\{v\in\calA^*\cup\calA^{\omega}\mid uv\in L\}$. 
By $\pref(L)=\{u\in\act^* \mid uv\in L$ for some
$v\}$ we denote the set of (finite) prefixes of the words in $L$.

\subsubsection*{(Deterministic) labelled transition systems and 
(stratified) trace
equivalence}

A \emph{labelled transition system (LTS)} is a tuple
$(\calS,\calA,(\gt{a})_{a\in\calA} )$
where $\calS$ is the set of \emph{states}, $\calA$ the set of
\emph{actions} 
and $\gt{a}\subseteq \calS\times\calS$ is the set of
\emph{transitions labelled with} $a$.
\\
We extend $\gt{a}$ to relations $\gt{w}\subseteq
\calS\times \calS$ for all $w\in\calA^*$ inductively:
$s\gt{\varepsilon}s$; if $s\gt{a}s_1$ and $s_1\gt{v}s_2$ then
$s\gt{av}s_2$. We say that $s'$ is \emph{reachable 
from} $s$ \emph{by
a word} $w$ if $s\gt{w}s'$.
A \emph{state} $s\in\calS$ \emph{enables (a trace)}
$w\in\act^*$, denoted $s\gt{w}$, if there is $s'$ such that 
 $s\gt{w}s'$. An infinite sequence 
 $\alpha=a_1a_2a_3\dots\in\act^{\omega}$ 
is enabled in $s_0$, written $s_0\gt{\alpha}$, if there are
$s_1,s_2,s_3,\dots$ such that $s_i\gt{a_{i+1}}s_{i+1}$ 
for all $i\in\Nat$.
The \emph{trace-set} of a state $s\in \calS$ is defined as
\begin{center}
$\traces(s)=\{w\in\act^*\mid s\gt{w}\}$. 
\end{center}
For $k\in\Nat$, 
we define
$\traces^{\leq k}(s)=\{w\in\act^*\mid w\in\traces(s)$ and $|w|\leq k\}$;
we also put
$\omtraces(s)=\{\alpha\in\act^{\omega}\mid s\gt{\alpha}\}$.
On the set $\calS$, we define the \emph{(trace) equivalence} $\sim$, and the \emph{family  
of equivalences} $\sim_k$, $k\in\Nat$, as follows:
\begin{center}
$r\sim s$ $\defined$ $\traces(r)=\traces(s)$ 
\ and 
\ $r\sim_k s$ $\defined$ $\traces^{\leq k}(r)=\traces^{\leq k}(s)$.
\end{center}

\begin{observ}\label{prop:basicstratif}
$\sim_0=\calS\times\calS$.
$\forall k\in\Nat: \sim_k\supseteq \sim_{k+1}$.
$\cap_{k\in\Nat}\sim_k=\sim$.

\end{observ}
The \emph{equivalence level}, or the \emph{eq-level},
of a pair of states is defined as follows:
\begin{quote}
$\eqlevel(r,s)=k$ if $r\sim_k s$ and $r\not\sim_{k+1} s$\,;
$\eqlevel(r,s)=\omega$ if $r\sim s$.
\end{quote}
The shortest words 
showing nonequivalence for a pair $r,s$ (if $r\not\sim s$)
are called \emph{offending words}:
$\offpl(r,s)=\{w\mid w$ is a shortest word in 
$(\traces(r)\smallsetminus\traces(s))\cup
(\traces(s)\smallsetminus\traces(r))\}$. 
The elements of $\pref(\offpl(r,s))$ are called \emph{offending
prefixes} for the pair $(r,s)$.
We now note some trivial facts, point $3$ being of particular 
interest:

\begin{observ}\label{prop:basicoffwords}
(1.) $r\sim s$ iff $\offpl(r,s)=\emptyset$ iff $\varepsilon\not\in \pref(\offpl(r,s))$.
\\
(2.) If $r\not\sim s$ then $\eqlevel(r,s)=|w|{-}1$ for any
$w\in\offpl(r,s)$.
\\
(3.) If $\eqlevel(r,s)=k$ and $\eqlevel(r,q)\geq k+1$ then 
\\
\hspace*{3em}$\eqlevel(q,s)=k$ and $\offpl(r,s)=\offpl(q,s)$. 
\end{observ}
An \emph{LTS}  $(\calS,\calA,(\gt{a})_{a\in\calA})$ is 
\emph{deterministic} 
if each $\gt{a}$ is a partial
function, i.e.:
if $r\gt{a}s_1$ and $r\gt{a}s_2$ then
$s_1=s_2$. (Recall now the left quotient operation $u\lquot L$.)

\begin{observ}\label{prop:basicquottraces}
In any \emph{deterministic} LTS,
if $r\gt{u} r'$ then 
$\traces(r')=u\lquot \traces(r)$, and
$\omtraces(r')=u\lquot \omtraces(r)$.
Moreover, $\traces(r)=\traces(s)$ implies  
$\omtraces(r)=\omtraces(s)$.
\end{observ}
We use notation $(r,s)\gt{u} (r',s')$ as a shorthand meaning
$r\gt{u} r'$ and $s\gt{u} s'$. 

\begin{prop}\label{prop:basiceqleveldrop}
Assume a \emph{deterministic} LTS, and suppose
$(r,s)\gt{u} (r',s')$. Then:
\\
1. $\eqlevel(r,s)-|u|\leq \eqlevel(r',s')$ 
(in particular, $r\sim s$ implies
$r'\sim s'$);
\\
2. if $r\not\sim s$ then 
$\eqlevel(r,s)-|u|= \eqlevel(r',s')$ iff 
$u\in\pref(\offpl(r,s))$; 
\\
3. if $u\in\pref(\offpl(r,s))$ then 
$\offpl(r',s')=u\lquot\offpl(r,s)$.
\end{prop}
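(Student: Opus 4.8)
The whole proposition rests on a single ``shift'' observation that converts distinguishing words for $(r',s')$ into distinguishing words for $(r,s)$ prefixed by $u$. Since the LTS is deterministic and $r\gt{u}r'$, $s\gt{u}s'$, Observation~\ref{prop:basicquottraces} gives $\traces(r')=u\lquot\traces(r)$ and $\traces(s')=u\lquot\traces(s)$. Hence for every $v\in\act^*$ we have $v\in\traces(r')\iff uv\in\traces(r)$ and $v\in\traces(s')\iff uv\in\traces(s)$, so $v$ lies in exactly one of $\traces(r'),\traces(s')$ precisely when $uv$ lies in exactly one of $\traces(r),\traces(s)$. In other words, $v$ is a distinguishing word for $(r',s')$ iff $uv$ is a distinguishing word for $(r,s)$. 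I would isolate this equivalence first, as all three parts are quick corollaries of it; throughout I use that an offending word is exactly a \emph{shortest} distinguishing word, of length $\eqlevel+1$ by Observation~\ref{prop:basicoffwords}(2).

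For part~1, I split on whether $r\sim s$. If $r\sim s$ there is no distinguishing word for $(r,s)$, hence none for $(r',s')$, so $r'\sim s'$ and both sides equal $\omega$ (using $\omega-|u|=\omega$). If $r\not\sim s$, set $k=\eqlevel(r,s)$, so the shortest distinguishing word for $(r,s)$ has length $k+1$. If moreover $r'\not\sim s'$, take a shortest distinguishing word $v_0$ for $(r',s')$, of length $\eqlevel(r',s')+1$; then $uv_0$ distinguishes $(r,s)$, whence $k+1\leq|u|+\eqlevel(r',s')+1$, i.e.\ $\eqlevel(r,s)-|u|\leq\eqlevel(r',s')$. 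If $r'\sim s'$ the claim is immediate, and when $|u|>k$ the left-hand side is $\leq0$ and the inequality is trivial.

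For part~2, with $r\not\sim s$ and $k=\eqlevel(r,s)$, the key is that $u\in\pref(\offpl(r,s))$ says exactly that some length-$(k{+}1)$ distinguishing word for $(r,s)$ starts with $u$, i.e.\ equals $uv$ with $|v|=k+1-|u|$; by the shift observation this is the same as the existence of a distinguishing word $v$ for $(r',s')$ of length $k+1-|u|$. One also checks that $u\in\pref(\offpl(r,s))$ forces $|u|\leq k$, since $u$ is itself a common trace and so cannot be offending. The forward direction then supplies a distinguishing word for $(r',s')$ of length $k+1-|u|$, giving $\eqlevel(r',s')+1\leq k+1-|u|$, which with part~1 pins $\eqlevel(r',s')=k-|u|$. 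The converse runs backwards: if $\eqlevel(r',s')=k-|u|$ (so $|u|\leq k$), a shortest distinguishing word $v_0$ for $(r',s')$ has length $k+1-|u|$, so $uv_0$ is a length-$(k{+}1)$ distinguishing word for $(r,s)$, i.e.\ an offending word with prefix $u$.

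Part~3 is then a direct set equality. Assuming $u\in\pref(\offpl(r,s))$, part~2 gives $\eqlevel(r',s')=k-|u|$, so $\offpl(r',s')$ is the set of distinguishing words for $(r',s')$ of length $k+1-|u|$. On the other hand, $u\lquot\offpl(r,s)=\{v\mid uv\in\offpl(r,s)\}$ is the set of $v$ such that $uv$ distinguishes $(r,s)$ with $|uv|=k+1$, i.e.\ $|v|=k+1-|u|$; by the shift observation these are exactly the distinguishing words for $(r',s')$ of that length. Hence the two sets coincide. The structural heart, the shift observation, is entirely routine given determinism; the only points where I would be careful are the ordinal/length bookkeeping (treating $r\sim s$, $r'\sim s'$, and $|u|>\eqlevel(r,s)$ uniformly via $\omega-n=\omega$) and the consistent use of ``offending $=$ shortest distinguishing of length $\eqlevel+1$''.
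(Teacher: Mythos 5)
Your proof is correct and follows essentially the same route as the paper: the paper's (very terse) argument likewise reduces everything to Observation~\ref{prop:basicquottraces}, and your ``shift observation'' ($v$ distinguishes $(r',s')$ iff $uv$ distinguishes $(r,s)$) is exactly the content the paper extracts from it, e.g.\ in its remark that $uv\in\offpl(r,s)$ iff $v\in\offpl(r',s')$ for an offending prefix $u$. Your write-up merely spells out the length bookkeeping that the paper leaves implicit; there is no gap.
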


\begin{proof}
This follows almost trivially from 
Observation~\ref{prop:basicquottraces}. 
E.g., for Point 3. it is sufficient to note:
if $u\in\pref(\offpl(r,s))$ then 
$uv\in \offpl(r,s)$ iff
$v\in\offpl(r',s')$.  
\qed
\end{proof}

\subsubsection*{Finite and infinite regular terms and their
finite graph presentations} 

We now give (a variant of) standard definitions of first-order terms,
including infinite terms;
we fix a countable set 
$\calV=\{x_1,x_2,x_3,\dots\}$ of (first-order) \emph{variables}.

Let us now assume  a given \emph{finite} set $\calN$ of 
ranked symbols, called \emph{nonterminals}
(though we can also view them as function symbols).
Each $X\in \calN$ thus has  $\arity(X)\in\Nat$; we use $X,Y$ to range
over elements of $\calN$.

A (general) \emph{term} $E$ over $\calN$ (and $\calV$) is defined as a
partial mapping $E:\Nat^*\rightarrow \calN\cup\var$
\\
where  
the domain $\dom(E)\subseteq \Nat^*$ is prefix-closed,
i.e. $\dom(E)=\pref(\dom(E))$, and nonempty ($\varepsilon\in\dom(E)$);
moreover, for $\gamma\in\dom(E)$ we have 
$\gamma i\in \dom(E)$ iff $1\leq i\leq
\arity(E(\gamma))$ where the arity of variables $x_j\in\var$ is viewed as
$0$.

For each $\gamma\in\dom(E)$, by $E_{[\gamma]}$ we denote
\emph{the subterm occurring} at $\gamma$ in $E$ where
$E_{[\gamma]}(\delta)=E(\gamma\delta)$ for each 
$\delta\in\dom(E_{[\gamma]})=\gamma\lquot \dom(E)$;
this \emph{occurrence of subterm} $E_{[\gamma]}$ has \emph{depth}
$|\gamma|$ in $E$. 

For $X\in\calN$ and terms $G_1,G_2,\dots, G_m$, where 
$m=\arity(X)$, by $XG_1G_2\dots G_m$ we denote the term $E$ for which
$E(\varepsilon)=X$ and $E_{[i]}=G_i$ for each length-1 sequence $i$
where $1\leq i\leq m$; $X$ is the \emph{root nonterminal} of this term
$E$. Each variable $x_j\in\var$ is also viewed as
the term $E$ for which $E(\varepsilon)=x_j$ 
(and thus $\dom(E)=\{\varepsilon\}$). 

A \emph{term} $E$ is \emph{finite} (\emph{infinite})
if $\dom(E)$ is finite (infinite).
The \emph{depth-size} of a finite term $E$, denoted $\depthsize(E)$,
is the maximal $|\gamma|$ for $\gamma\in\dom(E)$
(i.e., the maximal depth of a subterm-occurrence in $E$).
A \emph{term} is \emph{regular} if the set of its subterms
is finite 
(though the subterms can have infinitely many occurrences).

By $\trees_{\calN}$ we denote the set of all
(finite and infinite) \emph{regular} terms, since we will not consider
nonregular terms anymore.
Hence from now on, when saying ``term'' we mean ``regular term''. 
$\gtrees_{\calN}$ denotes the set of all (regular)
\emph{ground terms}, i.e. the terms  in which no  variables $x_i$
occur.
We use symbols $T,U,V,W$ (possibly with sub- and
superscripts) for ranging over $\gtrees_{\calN}$;
symbols $E,F,G,H$ 
are used more generally, they range over $\trees_{\calN}$.

Regular terms can be infinite but they have natural finite
presentations, since they can be viewed as the unfoldings of finite
graphs: 

\begin{defn}
A \emph{graph presentation} of a regular term is 
a finite labelled (multi)graph $g$, where each node $v$ has a label
$\lambda(v)\in\calN\cup\var$ and  $m$ outgoing edges
labelled with $1,2,\dots,m$ where $m=\arity(\lambda(v))$ 
(and where different edges can have the same target); 
moreover, one node is selected  as the root. 
Graph $g$ represents term $\calT_g$ as follows:  $\dom(\calT_g)$
consists of sequences of edge-labels of
finite paths in $g$  starting in the root; 
$\calT_g(\gamma)=\lambda(v')$
where $v'$ is the end-vertex of the path with the edge-label sequence
$\gamma$.
\end{defn}
(Given $\calN$,) 
we can naturally define a notion of the \emph{size of a graph
presentation} $g$, e.g., as the number of nodes of $g$, or, to be
pedantic, as the length of a standard bit-string representation of $g$
(thus also handling the descriptions of indexes of variables $x_i$).
We define the
\emph{presentation size} of a term $F$, denoted 
$\pressize(F)$, as the size of the least graph
presentation of $F$. 
By $\pressize(E,F)$ for a pair $E,F$ we mean the sum 
$\pressize(E)+\pressize(F)$, say. 
On our level of reasoning, we do not need further technical details,
since the facts like the following one are sufficient for us.

\begin{observ}\label{prop:boundpressize}\hfill
\\
For any $s\in\Nat$, there are only finitely many pairs $(E,F)$ with
$\pressize(E,F)\leq s$.
\end{observ}
We will also use some facts concerning an effective (algorithmic)
work with finite
presentations of regular terms. The next observation is an example of
such a fact. Further we often leave such facts implicit.

\begin{observ}\label{prop:algorgraphequality}
There is an algorithm which, given ($\calN$ and) 
graph presentations $g_1$, $g_2$, decides if $\calT(g_1)=\calT(g_2)$.
\end{observ}

\subsubsection*{Substitutions, 
ground instances of a pair
$(E,F)$, a limit substitution}

By a \emph{substitution}
we mean a mapping $\sigma:\var\rightarrow \trees_{\calN}$.
The term $E\sigma$ arises from $E$ by 
replacing each occurrence $x_i$ in $E$ with $\sigma(x_i)$.
(Since $E$ and $\sigma(x_i)$ are regular, $E\sigma$ is regular.)
By writing $\sigma=[G_{i_1}/x_{i_1},\dots, G_{i_n}/x_{i_n}]$
we mean that $\sigma(x_i)=G_i$ if $i\in\{i_1,\dots,i_n\}$ and 
$\sigma(x_i)=x_i$ otherwise. Substitutions can be naturally composed;
associativity allows to omit the parentheses:
$E\sigma_1\sigma_2\sigma_3=((E\sigma_1)\sigma_2)\sigma_3=
E(\sigma_1(\sigma_2\sigma_3))=(E\sigma_1)(\sigma_2\sigma_3)$, etc.
\\
$F$ is an \emph{instance} of $E$ if there is 
a substitution $\sigma$ such that $E\sigma=F$. 

As usual, we sometimes write $E(x_{i_1},\dots,x_{i_n})$
to denote the fact
that all variables occurring in $E$  
are from the set
$\{x_{i_1},\dots,x_{i_n}\}$.
In fact, we only use the special case $E(x_{1},\dots,x_{n})$;
note that even a ground term $E$ 
can be viewed
as $E(x_1,\dots,x_n)$, for any $n$.
(We ignore the slight notational collision with 
the previous use $E(\gamma)$, since this should cause no problems.)

\smallskip
\noindent
\textbf{Convention.} When writing $F(G_1,\dots,G_n)$, we implicitly
assume $F=F(x_1,\dots,x_n)$, and we take 
 $F(G_1,\dots,G_n)$ as a
shorthand for $F[G_1/x_1,\dots, G_n/x_n]$. In particular
we note that $F(G_1(H_1,\dots,H_m),\dots,G_n(H_1,\dots,H_m))=
(F(G_1,\dots,G_n))(H_1,\dots,H_m)$.

\smallskip

\noindent
A ground term $U$ which is an instance of $E$ is called
a \emph{ground instance} of $E$.
We will in particular use the notion
of a \emph{ground instance of a pair} 
$(E(x_1,\dots,x_n),F(x_1,\dots,x_n))$: it is any 
pair $(E\sigma,F\sigma)$
where $\sigma$ is a substitution $[U_1/x_1,\dots, U_n/x_n]$ 
where $U_i$ are ground terms.
We usually write $(E(U_1,\dots,U_n),F(U_1,\dots,U_n))$ instead of 
 $(E\sigma,F\sigma)$.

By writing $E\sigma^1$ we mean $E\sigma$;  $E\sigma^{k+1}$
($k\in\Nat$) means $E\sigma^k\sigma$.
We need just a  special case of substitutions
$\sigma$ for which $E\sigma^{\omega}$ 
is well-defined; we use the graph presentations for the definition
(which also shows another aspect of the effective work with these
finite presentations).

\begin{defn}
Given (a regular term) $H$ we define $H[H/x_i]^{\omega}$, denoted as
$H\limtreei$, as follows: given a graph presentation $g$ where
$\calT(g)=H$, then $H\limtreei=\calT(g')$ where $g'$ arises from $g$
by redirecting each edge leading to a node labelled
with $x_i$ (in $g$) to the root (in $g'$).
(Hence if $H=x_i$ or $x_i$
does not occur in $H$ then 
$H\limtreei=H$.)
\end{defn}

\subsubsection*{Head-tails presentations of terms,
the $d$-prefix form of terms}

If $F=E(G_1,\dots,G_n)$ then we say that 
the (regular) \emph{head} $E$ and the (regular)
\emph{tails}  $G_1,\dots,G_n$ constitute a 
\emph{head-tails presentation} of $F$. 
We can also note that
the head $E$ itself can be presented by a head-tails
presentation $E=G(F_1,\dots,F_m)$, say, etc. 
\\
A particular head-tails presentation of a term is its 
$d$-prefix form; it suffices when we restrict ourselves to ground terms:

\begin{defn}\label{def:dprefix}
For a ground term $V$ and $d\in\Nat$, the
\emph{$d$-prefix
form of} $V$ arises as follows:
we take all (ordered) occurrences of subterms of $V$ with depth $d$ (if
any), say $T_1,\dots,T_n$, 
and replace them with variables $x_1,\dots,x_n$,
respectively. We thus get a \emph{finite} term
$P^V_d=P^V_d(x_1,\dots,x_n)$, 
the \emph{$d$-prefix of}
$V$. 
The head $P^V_d$ and the tails $T_1,\dots,T_n$ constitute the
\emph{$d$-prefix form} of $V=P^V_d(T_1,\dots,T_n)$. 
($P^V_d=V$ when $V$ is a finite term with $\depthsize(V)<d$.)
\end{defn}
\begin{observ}\label{prop:boundtails}
If $m$ is the maximal arity of nonterminals in $\calN$
then the number $n$ of tails in the
$d$-prefix form is bounded by $m^d$.
\end{observ}

We also note the next obvious fact.

\begin{observ}\label{prop:limdprefform}
If $F=F(x_1)\neq x_1$ then $F\limtreeone$ is a ground term, and for
any $T$ we
have $P^{F\limtreeone}_d=P^{F\sigma^d(T)}_d$ where 
$\sigma=[F/x_1]$.
\end{observ}

\subsubsection*{First-order grammars as generators
of LTSs}

\begin{defn}
A \emph{first-order grammar} is a structure 
$\calG=(\calN,\act,\calR)$ where $\calN$ is a finite set of 
ranked \emph{nonterminals}, 
$\act$ is a finite set of \emph{actions} (or terminals), 
and $\calR$ 
a finite set of \emph{(root rewriting) rules} of the form
\begin{equation}\label{eq:rewrule}
Xx_1x_2\dots x_m\gt{a} E(x_1,x_2,\dots,x_m)
\end{equation}
where $X\in \calN$, $\arity(X)=m$, 
$x_1,x_2,\dots,x_m\in \calV$,
$a\in\act$,
and $E$ is a
\emph{finite}  term over $\calN$ (and $\calV$) 
 in which all occurring
variables are from the set
$\{x_1,x_2,\dots,x_m\}$.
($E=x_i$ is a particular example.)
\emph{Grammar} $\calG$ is \emph{deterministic}, a \emph{det-first-order
grammar},
if for each pair $X\in\calN$, $a\in\act$ 
there is at most one rule of the form~(\ref{eq:rewrule}). 
\end{defn}

\smallskip

\noindent
\emph{Remark.}
Context-free grammars  in Greibach normal form
can be seen as  a special case,
where each nonterminal has arity $1$.
Classical rules like $A\rightarrow
aBC$, $B\rightarrow b$ can be presented as 
$Ax_1\gt{a}BCx_1$,
$Bx_1\gt{b}x_1$.

\begin{center}
We view $\calG=(\calN,\act,\calR)$
as a generator of the 
$LTS_{\calG}=(\trees_{\calN},\act,(\gt{a})_{a\in\act})$
\end{center}
where each (root) rewriting rule
$Xx_1x_2\dots x_m\gt{a} E(x_1,x_2,\dots,x_m)$
is a ``schema''
(a~``template'') of a set of transitions:
for every substitution $\sigma=[G_1/x_1,\dots,G_m/x_m]$
(including $\sigma$ with $\sigma(x_i)=x_i$ for all $i$)
we have 
$(Xx_1\dots x_m)\sigma\gt{a}E(x_1,\dots,x_m)\sigma$, 
i.e.
\begin{center}
$XG_{1} G_{2} \ldots G_{m} \gt{a} E(G_1,G_2,\dots,G_m)$.
\end{center}
\begin{observ}
When $\calG$ is deterministic then  $LTS_{\calG}$ is deterministic.
\end{observ}
Though the main result applies to deterministic grammars, we will also 
note some properties holding in the general (nondeterministic) case.

The notions and notation like $F\gt{w}$ ($w$ is enabled by $F$),
$F \gt{w} F'$ (for words $w \in \calA^{*}$), 
$F \gt{\alpha}$ (for $\alpha \in \calA^{\omega}$),
$\traces(F)$, $\traces^{\leq k}(F)$, $\omtraces(F)$,
trace equivalence $E\sim F$, etc.,
are inherited from $LTS_{\calG}$. 
(Note that the term $x_i$ enables no actions; nevertheless,
it is technically convenient to have also
nonground terms as states in  $LTS_{\calG}$.)
\\
We will be interested in comparing ground terms,
so we will use $T\sim_k U$, $T\sim U$, 
$\eqlevel(T,U)$, $\offpl(T,U)$, $\pref(\offpl(T,U))$
mainly for
$T,U\in\gtrees_{\calN}$.
We note (now explicitly)
another fact about the effective work with graph presentations:

\begin{observ}
Given ($\calG$ and) a graph presentation $g$ (of term $\calT(g)$), we can effectively 
find all rewriting rules in $\calR$ (of type~(\ref{eq:rewrule})) which
can be applied to $\calT(g)$ and for each such rule yielding 
$\calT(g)\gt{a}F$ we can effectively construct some $g'$ such that $\calT(g')=F$. 
\end{observ}

\subsubsection*{Root-locality of action performing, 
words exposing subterm occurrences}

Assuming a given first-order grammar $\calG=(\calN,\act,\calR)$,
we observe some consequences of
the fact that the root nonterminal of $XG_1\dots G_m$ 
determines if $a\in\act$ is enabled, and when a transition is performed
then the subterms $G_1,\dots,G_m$ play no role other than being copied
(or ``lost'')
appropriately.

\begin{observ}\label{prop:basictraceperform}
For  $a\in\act$ and  $u\in\act^*$,  
we have $au\in \traces(XG_1\dots G_m)$ iff 
there is a rule $Xx_1\dots x_m\gt{a}E(x_1,\dots,x_m)$ in
$\calR$ such that $u\in\traces(E)$ or $u=u_1u_2$ where 
$E\gt{u_1}x_i$ and $u_2\in\traces(G_i)$ (for some $i,1\leq i\leq m$).
\end{observ}
We say that 
$w\in\act^*$ \emph{exposes the $i$-th successor} of $X\in\calN$
if  $Xx_1\dots x_m\gt{w}x_i$.

\begin{observ}\label{prop:basicsubtermexpos}
Viewing a (regular) term $E$ as a partial mapping
$E:\Nat^*\rightarrow \calN\cup \calV$, we note that there is $u$ such that 
$E\gt{u}x_i$ iff there is $\gamma\in\dom(E)$ where $E(\gamma)=x_i$ and
for each  prefix $\delta j$ of
$\gamma$ there is some $w$ exposing the $j$-th successor of
$E(\delta)$. 
\end{observ}

\begin{observ}\label{prop:segmenttraces}
$\traces(E(G_1,\dots, G_n))=$
\\
$\traces(E(x_1,\dots, x_n))\cup
\bigcup_{1\leq i\leq n}\{uv\mid E\gt{u}x_i$
 and $v\in\traces(G_i)\}$.
\end{observ}
We now look at some simple algorithmic consequences 
of the above observations.

\begin{prop}\label{prop:computwXi}
There is an algorithm which, given 
$\calG=(\calN,\act,\calR)$, computes (and fixes) a word $w(X,i)$
for each pair $(X,i)$, where
 $X\in\calN$, $1\leq i\leq\arity(X)$, so that 
$w(X,i)$ is a shortest word exposing the $i$-th
successor of $X$ if any such word exists and
$w(X,i)=\varepsilon$ otherwise.
\end{prop}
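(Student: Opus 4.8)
The plan is to turn this into a finite shortest-distance computation driven by the recursive characterisation of exposing in Observation~\ref{prop:basicsubtermexpos}. A naive breadth-first search will not do: a derivation $Xx_1\dots x_m\gt{w}x_i$ can pass through terms of unbounded size, so the relevant part of $LTS_{\calG}$ is infinite. The point to exploit is that exposing a successor decomposes \emph{compositionally} — to walk down to an $x_i$-leaf one only ever needs to expose successors of the finitely many nonterminals of $\calN$ — and this collapses the search to a finite system over $\Nat\cup\{\infty\}$.

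Concretely, for each pair $(X,i)$ with $1\le i\le\arity(X)$ I introduce $\ell(X,i)\in\Nat\cup\{\infty\}$, the length of a shortest word exposing the $i$-th successor of $X$ (with $\ell(X,i)=\infty$ if none exists). For a finite term $E$ ranging over the right-hand sides of the rules in $\calR$ together with their subterms, and an index $i$, let $g(E,i)$ be the length of a shortest $v$ with $E\gt{v}x_i$. Observation~\ref{prop:basicsubtermexpos} gives $g(x_i,i)=0$, $g(x_{i'},i)=\infty$ for $i'\ne i$, and $g(YE_1\dots E_k,i)=\min_{1\le j\le k}\bigl(\ell(Y,j)+g(E_j,i)\bigr)$, the last equation using that $YE_1\dots E_k\gt{w(Y,j)}E_j$ holds by the template (substitution) semantics of the transitions once the $j$-th successor of $Y$ is exposed. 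Correspondingly $\ell(X,i)=\min\bigl(1+g(E,i)\bigr)$, the minimum taken over all rules $Xx_1\dots x_m\gt{a}E$ of $\calR$. Since each right-hand side $E$ is a fixed \emph{finite} term, $g(E,i)$ is obtained from the $\ell(Y,j)$ by one bottom-up pass over $E$; hence these equations form a finite monotone min/plus system in the finitely many unknowns $\ell(X,i)$.

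I would solve it by the iteration $\ell^{(0)}\equiv\infty$ and $\ell^{(t+1)}=F(\ell^{(t)})$, where $F$ re-evaluates the right-hand sides above. Each iterate assigns to $\ell(X,i)$ the length of a genuine exposing word (or $\infty$), so every iterate is an upper bound on the true shortest length, and by monotonicity of $F$ the sequence is pointwise non-increasing; a non-increasing sequence in $\Nat\cup\{\infty\}$ stabilises, and there are finitely many coordinates, so the iteration reaches a fixed point after finitely many steps, detected by comparing consecutive iterates. A routine induction on word length shows the stable value is exactly the vector of true shortest lengths, with $\infty$ precisely for the non-exposable pairs. To fix the words themselves, for $\ell(X,i)<\infty$ I pick a minimising rule $Xx_1\dots x_m\gt{a}E$ and, inside $E$, a minimising root-to-$x_i$ path through nonterminals $Y_1,Y_2,\dots$ via successors $j_1,j_2,\dots$, and set $w(X,i)=a\,w(Y_1,j_1)w(Y_2,j_2)\cdots$ following the recursion for $g$; for $\ell(X,i)=\infty$ I set $w(X,i)=\varepsilon$. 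This read-off is well founded because $\ell(X,i)=1+\sum_t\ell(Y_t,j_t)$ with all summands non-negative, so each $\ell(Y_t,j_t)\le\ell(X,i)-1<\ell(X,i)$ and the recursion only appeals to strictly smaller exposing lengths.

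The step needing the most care is the circular, mutually recursive dependence among the $\ell(X,i)$ combined with the a~priori unbounded length of exposing words. It is exactly the ``$1+$'' contributed by each rule application that resolves both difficulties: it forces every referenced sub-length to be strictly smaller (so word reconstruction terminates) and, together with monotonicity and the well-ordering of $\Nat$, it guarantees that the downward iteration from $\infty$ stabilises at the correct shortest lengths despite the infinite underlying state space. One should also take a moment to confirm that iterating from $\infty$ (rather than from $0$) is what yields the intended values, since ungrounded cycles among the unknowns must correctly stabilise at $\infty$.
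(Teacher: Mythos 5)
Your proposal is correct and takes essentially the same approach as the paper: the paper's two-line proof invokes the same compositional characterization (Observations~\ref{prop:basictraceperform} and~\ref{prop:basicsubtermexpos}) and appeals to a ``brute-force systematic search'' over the finitely many pairs $(X,i)$ that terminates once the remaining pairs are mutually dependent --- exactly the finite fixed-point computation you set up. Your min/plus system with downward iteration from $\infty$, together with the well-founded read-off of the words, is simply a fully justified implementation of that saturation.
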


\begin{proof}
Recalling Observations~\ref{prop:basictraceperform}
and~\ref{prop:basicsubtermexpos},
a brute-force systematic search of all $w(X,i)$ is
sufficient:
we finish when
finding that the remaining pairs $(X,i)$, i.e. those
for which $w(X,i)$ have not been
computed, are mutually dependent, i.e., the existence of an exposing 
 $w(X,i)$ for any of the 
 remaining pairs depends on the existence of exposing
 words for some remaining pairs. 
\qed
\end{proof}

\noindent
For later use we note that we can compute a bound bigger than any
$|w(X,i)|$, say
\begin{equation}\label{eq:Mzero}
M_0=1+\max\{\,|w(X,i)| \,\mid\,  X\in\calN, 1\leq i\leq
\arity(X)\,\}\,.
\end{equation}
We also note the bounded increase of the depth-size of finite terms, 
given by the fact that the right-hand sides of 
the (finitely many) rules~(\ref{eq:rewrule}) in $\calR$ are finite terms.

\begin{observ}\label{prop:binc}
(For $\calG$,) there is a (linear) nondecreasing function
 $\boundinc:\Nat\rightarrow\Nat$ such that:
if \mbox{$F\gt{u}F'$} for a finite term $F$ 
then $\depthsize(F')\leq
\depthsize(F)+\boundinc(|u|)$.
\end{observ}
Generally we cannot provide a similar \emph{lower} 
bound for $\depthsize(F')$, 
since some $x_i$ might not occur in $E$ in (\ref{eq:rewrule}).
But we can 
recall the $d$-prefix form from Definition~\ref{def:dprefix}
(defined for ground terms) and note the following obvious fact.

\begin{observ}\label{prop:prefixchange}
If $V=P^V_d(T_1,\dots,T_n)$ and   $|u|\leq d$ then
$V\gt{u}V'$
iff there is some (finite) $E$  such that 
$P^V_d(x_1,\dots,x_n)\gt{u}E(x_1,\dots,x_n)$ 
and
$V'=E(T_1,\dots,T_n)$ (where $\depthsize(E)\leq d+\boundinc(|u|)$). 
Hence if $P^{U}_d=P^V_d$ then  $U\sim_d V$.
\end{observ}

\subsubsection*{Congruence property of $\sim_k$ and $\sim$}

\begin{prop}\label{prop:congrproperty}
If $T\sim_k T'$ then $E(T)\sim_k E(T')$. 
($T\sim T'$ implies $E(T)\sim E(T')$.)
\end{prop}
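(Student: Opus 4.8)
The plan is to establish the $\sim_k$ statement directly and then obtain the $\sim$ statement by intersecting over all $k$. Under the single-variable convention, $E(T)$ means $E[T/x_1]$, so the natural starting point is the $n{=}1$ instance of Observation~\ref{prop:segmenttraces}, which gives
\[
\traces(E(T))=\traces(E(x_1))\cup\{uv\mid E\gt{u}x_1\ \mbox{and}\ v\in\traces(T)\},
\]
together with the analogous identity for $E(T')$. The crucial structural feature is that the first summand $\traces(E(x_1))$ depends only on the head $E$ (with $x_1$ treated as a dead end) and is therefore literally the same set in both decompositions; likewise the set of exposing prefixes $\{u\mid E\gt{u}x_1\}$ is the same. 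Hence every discrepancy between $\traces(E(T))$ and $\traces(E(T'))$ must be confined to the tail-dependent part, which is exactly where the hypothesis $T\sim_k T'$ will enter.

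The main step is then to compare the two trace-sets after truncation at length $k$. I would take an arbitrary $w\in\traces(E(T))$ with $|w|\leq k$ and split according to the decomposition. If $w\in\traces(E(x_1))$, then $w\in\traces(E(T'))$ as well, since this summand is shared. Otherwise $w=uv$ with $E\gt{u}x_1$ and $v\in\traces(T)$; here the key remark is the length bound $|v|\leq|uv|=|w|\leq k$, so $v\in\traces^{\leq k}(T)$. The assumption $T\sim_k T'$, i.e.\ $\traces^{\leq k}(T)=\traces^{\leq k}(T')$, now yields $v\in\traces(T')$, whence $w=uv\in\traces(E(T'))$. By the symmetry of the argument in $T$ and $T'$ this proves $\traces^{\leq k}(E(T))=\traces^{\leq k}(E(T'))$, which is precisely $E(T)\sim_k E(T')$.

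The full-equivalence case follows at once: if $T\sim T'$, then $T\sim_k T'$ for every $k$ by Observation~\ref{prop:basicstratif}, so by the above $E(T)\sim_k E(T')$ for every $k$, and intersecting over all $k$ (again Observation~\ref{prop:basicstratif}) gives $E(T)\sim E(T')$.

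The only genuinely delicate point is the length bookkeeping in the second case: one must notice that reaching the variable by the prefix $u$ can only shorten the relevant tail-suffix, so $|v|\leq|w|\leq k$ and the comparatively weak hypothesis $T\sim_k T'$ — which controls $T$ and $T'$ only up to length $k$ — already suffices to transfer membership of $v$. Everything else is the routine symmetric unfolding of Observation~\ref{prop:segmenttraces}. In particular no determinism is invoked, so the argument holds for arbitrary (possibly nondeterministic) first-order grammars, and the same computation with the full $n$-ary form of Observation~\ref{prop:segmenttraces} would give the simultaneous-substitution version needed for subterm replacement.
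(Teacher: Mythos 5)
Your proof is correct and follows the same route as the paper: both rest on the decomposition of $\traces(E(T))$ from Observation~\ref{prop:segmenttraces} into a shared head part and a tail part controlled by $|v|\leq|w|\leq k$. You simply make explicit the length bookkeeping and the intersection over $k$ that the paper leaves implicit.
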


\begin{proof}
The claim follows from Observation~\ref{prop:segmenttraces}:
$\traces(E(T))$ consists of traces $w\in\traces(E(x_1))$ 
and of traces of the form
$w=uv$ where $E(x_1)\gt{u}x_1$ and $v\in\traces(T)$.
\qed
\end{proof}

\noindent

\begin{prop}\label{prop:congrlim}
If $T\sim_k F(T)$ where $F\neq x_1$ then 
$T\sim_k F\limtreeone$.
\\
Hence $T\sim F(T)$ implies $T\sim F\limtreeone$.
\end{prop}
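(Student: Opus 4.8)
The plan is to reduce the infinite unfolding $F\limtreeone$ to a finite iterate of the substitution $\sigma=[F/x_1]$, whose $\sim_k$-behaviour can be controlled directly from the hypothesis. The $k=0$ case is trivial, since $\sim_0$ is universal (Observation~\ref{prop:basicstratif}), so I assume $k\geq 1$ below.

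First I would iterate the congruence property. From $T\sim_k F(T)$, applying Proposition~\ref{prop:congrproperty} with the context $F$ gives $F(T)\sim_k F(F(T))$, and since each $\sim_k$ is an equivalence (hence transitive), this yields $T\sim_k F(F(T))$. A straightforward induction then gives $T\sim_k F\sigma^{j}(T)$ for every $j\geq 1$, where $F\sigma^{j}(T)$ is the $(j{+}1)$-fold nesting $F(F(\cdots F(T)))$: the induction step applies congruence with context $F$ to $T\sim_k F\sigma^{j}(T)$ to get $F(T)\sim_k F\sigma^{j+1}(T)$, and then chains with $T\sim_k F(T)$ by transitivity.

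The crux is to connect these finite iterates with $F\limtreeone$ itself. Here I would invoke Observation~\ref{prop:limdprefform}, which (using $F\neq x_1$, so that $F\limtreeone$ is ground) asserts $P^{F\limtreeone}_d=P^{F\sigma^d(T)}_d$ for every $d$. Taking $d=k$ and feeding this equality of $k$-prefixes into Observation~\ref{prop:prefixchange} gives $F\limtreeone\sim_k F\sigma^{k}(T)$; the point of the choice $d=k$ is that $\sim_k$ depends only on traces of length $\leq k$, precisely the ones pinned down by the $k$-prefix. Combining $T\sim_k F\sigma^{k}(T)$ from the first step with $F\sigma^{k}(T)\sim_k F\limtreeone$ and transitivity yields $T\sim_k F\limtreeone$, as required.

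The ``hence'' statement follows immediately: $T\sim F(T)$ means $T\sim_k F(T)$ for all $k$, so the argument above gives $T\sim_k F\limtreeone$ for every $k$, i.e. $T\sim F\limtreeone$ by $\cap_{k}\sim_k=\sim$ (Observation~\ref{prop:basicstratif}). I do not expect a serious obstacle here; the only point requiring care is the index bookkeeping, namely that the single choice $d=k$ simultaneously makes the prefix identity of Observation~\ref{prop:limdprefform} available and makes it sufficient for $\sim_k$ via Observation~\ref{prop:prefixchange}, and that the nested term $F\sigma^{k}(T)$ produced by iterating congruence is exactly the one whose $k$-prefix Observation~\ref{prop:limdprefform} controls.
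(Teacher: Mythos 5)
Your proposal is correct and follows essentially the same route as the paper's own proof: iterate Proposition~\ref{prop:congrproperty} to get $T\sim_k F\sigma^k(T)$, then use Observation~\ref{prop:limdprefform} and Observation~\ref{prop:prefixchange} to identify $F\sigma^k(T)$ with $F\limtreeone$ up to $\sim_k$. The extra detail you supply (the explicit induction and the transitivity bookkeeping) is just an elaboration of the paper's ``repeated use'' phrasing.
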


\begin{proof}
If $T\sim_k F(T)$
then by repeated use of Proposition~\ref{prop:congrproperty} we get
$T\sim_k F\sigma^k(T)$ where $\sigma=[F/x_1]$. 
By Observation~\ref{prop:limdprefform} we get
$P^{F\sigma^k(T)}_k=P^{F\limtreeone}_k$ and
thus $F\sigma^k(T)\sim_k F\limtreeone$
(by Observation~\ref{prop:prefixchange}), which implies
$T\sim_k F\limtreeone$.
\qed
\end{proof}

\begin{cor}\label{prop:corcongrlimequat}
If $T_i\sim_k H(T_1,\dots,T_n)$ where $H\neq x_i$ then
$T_i\sim_k H\limtreei(T_1,\dots,T_n)$ (with no occurrence of $x_i$ in 
$H\limtreei$). In particular, 
if $T_n\sim_k H(T_1,\dots,T_n)$ where  $H\neq x_n$ then  
$T_n\sim_k H\limtreen(T_1,\dots,T_{n-1})$.
\end{cor}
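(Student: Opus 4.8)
The plan is to deduce this from the single-variable statement of Proposition~\ref{prop:congrlim} by freezing all arguments other than $x_i$ into a ground substitution. Let $\tau=[T_1/x_1,\dots,T_{i-1}/x_{i-1},T_{i+1}/x_{i+1},\dots,T_n/x_n]$ be the substitution replacing every variable except $x_i$ by the corresponding ground tail (and fixing $x_i$), and set $F=H\tau$. Since the $T_j$ are ground, the only variable that can occur in $F$ is $x_i$, so $F=F(x_i)$; and $H\neq x_i$ forces $F\neq x_i$ (the root of $F$ is either the root nonterminal of $H$ or some ground $T_j$). Because $\tau$ and $[T_i/x_i]$ act on disjoint variables, $F[T_i/x_i]=H(T_1,\dots,T_n)$, so the hypothesis $T_i\sim_k H(T_1,\dots,T_n)$ is exactly $T_i\sim_k F(T_i)$ in the single variable $x_i$.

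Next I would invoke Proposition~\ref{prop:congrlim}. Its proof uses only the congruence property, Observation~\ref{prop:limdprefform}, and Observation~\ref{prop:prefixchange}, all of which are symmetric in the name of the distinguished variable; hence the proposition holds verbatim with $x_i$ in place of $x_1$, and we obtain $T_i\sim_k F\limtreei$.

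The one point that needs genuine verification is the identification $F\limtreei=H\limtreei(T_1,\dots,T_n)$, i.e. that taking the limit $\limtreei$ commutes with the ground substitution $\tau$: $(H\tau)\limtreei=(H\limtreei)\tau$. I would argue this directly on graph presentations, which is how $\limtreei$ is defined. Starting from a presentation $g$ of $H$, the term $H\limtreei$ is obtained by redirecting every edge into an $x_i$-labelled node to the root, while $(H\limtreei)\tau$ additionally relabels the remaining variable leaves $x_j$ ($j\neq i$) by (presentations of) the $T_j$. Since the $x_i$-leaves and the $x_j$-leaves are disjoint and the $T_j$ contain no $x_i$, performing the relabelling first (yielding a presentation of $F=H\tau$) and then redirecting the $x_i$-edges to the root produces exactly the same graph; the resulting term is $(H\tau)\limtreei=F\limtreei$. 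As $\tau$ fixes $x_i$ and $H\limtreei$ contains no $x_i$ (by construction, using $H\neq x_i$), we have $(H\limtreei)\tau=H\limtreei(T_1,\dots,T_n)$, and combining with the previous step gives $T_i\sim_k H\limtreei(T_1,\dots,T_n)$. The \emph{in particular} clause is the special case $i=n$.

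The main obstacle is precisely this commutation lemma: one must check that the graph-surgery defining $\limtreei$ is insensitive to first plugging in the ground tails. Equivalently, one may phrase it as uniqueness of the (guarded, since $H\tau\neq x_i$) regular solution $Z$ of $Z=(H\tau)[Z/x_i]$, noting that both $(H\tau)\limtreei$ and $(H\limtreei)\tau$ satisfy this equation; I would prefer the explicit graph argument above to stay within the paper's definitional framework.
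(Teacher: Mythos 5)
Your derivation is correct and is exactly the argument the paper leaves implicit (the corollary is stated without proof as an immediate consequence of Proposition~\ref{prop:congrlim}): freeze the tails $T_j$, $j\neq i$, into a one-variable term $F=H\tau$, apply the proposition with $x_i$ in place of $x_1$, and check that $\limtreei$ commutes with the ground substitution $\tau$ on graph presentations. Your explicit verification of that commutation (the $T_j$ are ground, so they introduce no $x_i$-nodes and the root is unchanged) is the only step the paper glosses over, and you handle it correctly.
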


\subsubsection*{A normal form for first-order grammars}

\begin{observ}\label{prop:noexposeequiv}\hfill
\\
If there is no $u$ such that $E(x_1)\gt{u}x_1$ then 
$E(T)\sim E(T')$ for any $T,T'$.
\end{observ}

\begin{defn}\label{def:grammarnormalform}
$\calG=(\calN,\act,\calR)$ is in \emph{normal form} 
if for each  $X\in\calN$ and $i,1\leq i\leq\arity(X)$  there 
is (a shortest) $w(X,i)$ exposing the $i$-th
successor of $X$.
\end{defn}

\begin{prop}\label{prop:transinnormal}
Any $\calG=(\calN,\act,\calR)$ can be effectively transformed
to  $\calG'=(\calN',\act,\calR')$ in normal form, yielding also an
effective mapping $\transf:\trees_{\calN}\rightarrow\trees_{\calN'}$
such that
$T\sim\transf(T)$ (in the disjoint union of $LTS_{\calG}$ and
$LTS_{\calG'}$). 
\end{prop}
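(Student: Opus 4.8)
The plan is to remove from every nonterminal exactly those argument positions that can never be exposed, since by Observation~\ref{prop:noexposeequiv} the subterms sitting in such positions are invisible to the trace-set. First I would run the algorithm of Proposition~\ref{prop:computwXi}: for each pair $(X,i)$ it tells us whether there is a word exposing the $i$-th successor of $X$. Call $i$ \emph{relevant} for $X$ if such a word exists, and let $R(X)=\{i_1<\dots<i_k\}\subseteq\{1,\dots,\arity(X)\}$ collect the relevant positions. For each $X\in\calN$ I introduce a fresh nonterminal $X'$ with $\arity(X')=|R(X)|=k$, and define $\transf:\trees_{\calN}\rightarrow\trees_{\calN'}$ by the local graph operation that relabels every node $X$ to $X'$, deletes every edge pointing into an irrelevant position (thereby pruning the whole subterm hanging there), and reindexes the surviving relevant edges $i_1,\dots,i_k$ to $1,\dots,k$. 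Since this is a finite operation on graph presentations, $\transf$ is effective and sends regular terms to regular terms.

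Next I would transform the rules. Given a rule $Xx_1\dots x_m\gt{a}E$ of $\calR$, I define the new right-hand side as $\transf(E)$, viewing $E$ as a term over $\calN\cup\calV$ with the variables as leaves, and reindexing each surviving top variable $x_{i_l}$ to $x_l$. The point that makes this well-formed is that an irrelevant top variable $x_j$ (with $j\notin R(X)$) can occur in $E$ only beneath an irrelevant position: indeed, if $x_j$ had an occurrence all of whose ancestor positions were exposable, then by Observation~\ref{prop:basicsubtermexpos} we would have $E\gt{u}x_j$ for some $u$, whence $Xx_1\dots x_m\gt{au}x_j$, contradicting $j\notin R(X)$. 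Hence every occurrence of such an $x_j$ is pruned by $\transf$, and $\transf(E)$ mentions only the variables $x_1,\dots,x_k$; this gives the rule $X'x_1\dots x_k\gt{a}\transf(E)$ of $\calR'$. To see that $\calG'$ is in normal form (Definition~\ref{def:grammarnormalform}) I take, for a relevant $i_l$, a witness $Xx_1\dots x_m\gt{w}x_{i_l}$ and transport it step by step through $\transf$ (via the correspondence below), obtaining $X'x_1\dots x_k\gt{w}x_l$, so position $l$ of $X'$ is exposable.

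The heart of the argument, and the step I expect to cost the most bookkeeping, is the trace-preservation $T\sim\transf(T)$, which I would get from an exact one-step correspondence in the disjoint union of $LTS_{\calG}$ and $LTS_{\calG'}$: for every term and action $a$, each transition $T\gt{a}T_1$ is matched by $\transf(T)\gt{a}\transf(T_1)$, and every $a$-transition of $\transf(T)$ arises this way. Writing $T=XG_1\dots G_m$, a transition uses a rule $Xx_1\dots x_m\gt{a}E$ and yields $T_1=E(G_1,\dots,G_m)$, while the rule obtained from the same old rule yields $\transf(E)(\transf(G_{i_1}),\dots,\transf(G_{i_k}))$; the two targets coincide by the commutation $\transf(E(G_1,\dots,G_m))=\transf(E)(\transf(G_{i_1}),\dots,\transf(G_{i_k}))$, and conversely every $a$-transition of $\transf(T)$ comes from exactly one new rule, hence is the image of one for $T$. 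The commutation holds because $\transf$ is a purely local prune-and-relabel: at a surviving (relevant) leaf the substituted $G_{i_l}$ is transformed to $\transf(G_{i_l})$ exactly as $x_l$ is instantiated on the right, whereas at every irrelevant position both sides prune, so the values there (including all occurrences of irrelevant variables) never matter; relevance is fixed per nonterminal, so it is stable under substitution. Since the correspondence is label-preserving and exhaustive in both directions, induction on word length gives $\traces(T)=\traces(\transf(T))$, i.e.\ $T\sim\transf(T)$. The only genuinely delicate part is carrying this commutation through derivations that still contain variables (needed for the normal-form check), where one must keep track of which leaves are pruned; everything else is the routine verification that a local graph rewriting commutes with substitution.
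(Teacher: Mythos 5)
Your proposal is correct and follows essentially the same route as the paper: drop the unexposable successor positions of each nonterminal (computed via Proposition~\ref{prop:computwXi}), introduce a fresh nonterminal of reduced arity, prune and reindex both the terms (giving $\transf$) and the right-hand sides of the rules, and conclude via Observation~\ref{prop:noexposeequiv}. You supply more of the verification than the paper does --- in particular the observation that an unexposable top variable can only occur beneath an unexposable position (so pruning leaves the new rules well-formed) and the explicit commutation of $\transf$ with substitution --- but these are exactly the details the paper leaves implicit, not a different argument.
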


\begin{proof}
For each $Y\in\calN$, by $ES_Y=(i^Y_1, i^Y_2,\dots, i^Y_{k_Y})$ (Exposable
Successors of $Y$)
we denote the subsequence 
of $(1,2,\dots,\arity(Y))$ where $i\in ES_Y$ iff there is $w$
exposing the $i$-th successor of $Y$;
by $Y'$ we denote a fresh nonterminal with
$\arity(Y')=k_Y$.
We put $\transf(x_j)=x_j$ and 
$\transf(YG_1\dots G_m)=Y'\, \transf(G_{i^Y_1})\dots
\transf(G_{i^Y_{k_Y}})$.
Each rule $Xx_1\dots x_m\gt{a} E(x_1,\dots,x_m)$ in $\calR$ is transformed to 
the rule $X'x_1\dots x_{k_X}\gt{a} \transf(E)\sigma$
where $\sigma=[x_1/x_{i^X_1}, \dots, x_{k_X}/x_{i^X_{k_X}}]$; thus
$\calR'$ arises.
This guarantees that $Yx_1\dots x_m\gt{u}x_{i^Y_j}$
iff  $Y'x_1\dots x_{k_Y}\gt{u}x_j$.
The claim now follows from
Proposition~\ref{prop:computwXi} and
Observation~\ref{prop:noexposeequiv}.
\qed
\end{proof}

\subsubsection*
{Exposing equations for pairs $(E,F)$ in deterministic $LTS_{\calG}$}

We now note an important notion 
and make some 
observations.

\begin{defn}
We say that \mbox{$u\in\act^*$} 
\emph{exposes an equation for 
the pair} 
\\
$(E(x_1,\dots,x_n), F(x_1,\dots,x_n))$ 
if 
$E(x_1,\dots,x_n)\gt{u} \modr{x_i}$, 
$F(x_1,\dots,x_n)\gt{u} \modr{H(x_1,\dots,x_n)}$,
or vice versa, where $H(x_1,\dots,x_n)\neq
x_i$ (but might be $H=x_j$ for $i\neq j$).
\\
Formally we write this exposed equation 
as {\modr{$x_i\doteq H(x_1,\dots,x_n)$}}.
\end{defn}

\begin{prop}\label{prop:exposing}
For deterministic $\calG$,
if $E(T'_1,  \dots T'_{n})\sim_k F(T'_1,  \dots, T'_{n})$
and  $E(T_1,  \dots T_{n})\not\sim_k F(T_1,  \dots, T_{n})$
then
an offending prefix for  $(E(T_1,  \dots, T_{n}),F(T_1,  \dots,
T_{n}))$ exposes an equation for $E,F$.
\end{prop}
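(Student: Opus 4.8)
The plan is to fix a shortest offending word $w$ for the pair $(E(T_1,\dots,T_n),F(T_1,\dots,T_n))$ and to locate, among its prefixes, the first point at which one of the heads $E,F$ exposes a variable; this point will turn out to be an offending prefix exposing an equation. Recall that every offending prefix is a prefix of such a $w$, and by Observation~\ref{prop:basicoffwords} we have $|w|=\eqlevel(E(T_1,\dots,T_n),F(T_1,\dots,T_n))+1\le k$, since the pair is assumed $\not\sim_k$. Up to symmetry I assume $w\in\traces(E(T_1,\dots,T_n))\smallsetminus\traces(F(T_1,\dots,T_n))$; as trace-sets are prefix-closed and $w$ is a shortest distinguishing word, every proper prefix of $w$ lies in both trace-sets.

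First I would show that some prefix $u$ of $w$ exposes a variable in one of the heads, i.e.\ $E(x_1,\dots,x_n)\gt{u}x_i$ or $F(x_1,\dots,x_n)\gt{u}x_j$. The tool is the decomposition of Observation~\ref{prop:segmenttraces}, which splits $\traces(E(T_1,\dots,T_n))$ into the head-internal traces $\traces(E(x_1,\dots,x_n))$ and the traces that pass through a tail via an exposing prefix. If no prefix of $w$ exposed a variable, the tail part would be irrelevant along $w$, so $w\in\traces(E(x_1,\dots,x_n))\subseteq\traces(E(T'_1,\dots,T'_n))$; the hypothesis $E(T'_1,\dots,T'_n)\sim_k F(T'_1,\dots,T'_n)$ together with $|w|\le k$ would then force $w\in\traces(F(T'_1,\dots,T'_n))$, and applying Observation~\ref{prop:segmenttraces} to $F$ (again using that no prefix exposes) would give $w\in\traces(F(x_1,\dots,x_n))\subseteq\traces(F(T_1,\dots,T_n))$, contradicting $w\notin\traces(F(T_1,\dots,T_n))$.

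Let $u$ be the shortest prefix of $w$ exposing a variable in $E$ or $F$. The crux is to show $u$ is a \emph{proper} prefix of $w$ and that it genuinely exposes an equation. The same maneuver rules out $u=w$: if $F$ exposed a variable at $w$ then $w$ would already be enabled in $F(T_1,\dots,T_n)$, a contradiction; and if $E$ exposed at $w$ then $w\in\traces(E(T'_1,\dots,T'_n))$, so by $\sim_k$ also $w\in\traces(F(T'_1,\dots,T'_n))$, whence minimality of $u$ and Observation~\ref{prop:segmenttraces} again yield $w\in\traces(F(T_1,\dots,T_n))$ --- a contradiction. Thus $u$ is proper, hence enabled in both instances, and since no proper prefix of $u$ exposes, Observation~\ref{prop:segmenttraces} forces both heads to actually perform $u$; say (up to symmetry) $E(x_1,\dots,x_n)\gt{u}x_i$ and $F(x_1,\dots,x_n)\gt{u}H$ for some $H$ (possibly a variable). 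Finally I would show $H\neq x_i$: otherwise $F(x_1,\dots,x_n)\gt{u}x_i$ too, so by Observation~\ref{prop:basicquottraces} both instances reach the common term $T_i$ after $u$, giving $u\lquot\traces(E(T_1,\dots,T_n))=\traces(T_i)=u\lquot\traces(F(T_1,\dots,T_n))$ and contradicting that the continuation of $w$ past $u$ distinguishes them. Hence $u$ is an offending prefix exposing the equation $x_i\doteq H$, as required.

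The main obstacle --- and the only place where the $\sim_k$-hypothesis is genuinely used --- is ruling out the degenerate situations in which exposing could happen only at the very end $u=w$, where the ``other'' head need not respond and no equation would be witnessed. Both the existence step and the properness step rest on exactly the same idea: pushing $w$ (or a prefix) onto the primed instances, where equivalence up to level $k$ is available, and pulling the conclusion back onto the unprimed heads through Observation~\ref{prop:segmenttraces}, whose tail part is controlled precisely by the minimality of the chosen exposing prefix.
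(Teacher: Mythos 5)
Your proof is correct and rests on the same mechanism as the paper's: using root-locality (Observation~\ref{prop:segmenttraces}) to transfer the offending word through the heads to the primed instances, where the $\sim_k$ hypothesis forces a contradiction unless some prefix exposes one variable against a different term. The paper's version is terser --- it rules out $(x_i,x_i)$-exposure directly and handles the ``no exposure'' case by noting $E'\not\sim_1 F'$ at the last step would already distinguish the primed pair --- but the case analysis and the key ideas are the same.
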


\begin{proof}
There is surely no 
$u\in\pref(\offpl(E(T_1,  \dots, T_{n}),F(T_1,  \dots, T_{n})))$
such that $(E,F)\gt{u}(x_i,x_i)$. If there is 
$wa\in\offpl(E(T_1,  \dots, T_{n}),F(T_1,  \dots, T_{n}))$ 
(where $|w|<k$)
such that 
$(E,F)\gt{w}(E',F')$ where none of $E',F'$ is a variable and
$E'\not\sim_1 F'$ then 
$wa$ witnesses that
$E(T'_1,  \dots, T'_{n})\not\sim_k F(T'_1,  \dots, T'_{n})$ -- a
contradiction.
\qed
\end{proof}

\noindent
We recall that Observation~\ref{prop:basicquottraces} 
and Proposition~\ref{prop:basiceqleveldrop} apply
to $LTS_{\calG}$  when 
$\calG$ is deterministic, and observe the following:

\begin{observ}
For deterministic $\calG$, 
if $u$ exposes an equation $x_i\doteq H(x_1,\dots,x_n)$ for 
$E,F$ then 
$\eqlevel(E(T_1,\dots,T_n),F(T_1,\dots,T_n))-|u|\leq
\eqlevel(T_i,H(T_1,\dots,T_n))$
(for any $T_1,\dots,T_n$). 
Thus if $E(T_1,\dots,T_n)\sim F(T_1,\dots,T_n)$ then
$T_i\sim H\limtreei(T_1,\dots,T_n))$; 
otherwise (at least)
$T_i\sim_k H\limtreei(T_1,\dots,T_n))$ for 
$k=\eqlevel(E(T_1,\dots,T_n),F(T_1,\dots,T_n))-|u|$.
\end{observ}

\section{A ``word-labelling predicate'' $\models$ and its soundness}\label{subsec:dersystem}

\begin{defn}
(Given $\calG$,)
a \emph{pair}
$(E(x_1,\dots,x_n),F(x_1,\dots,x_n))$ is \emph{sound} if we
have  $E\sigma\sim F\sigma$ 
for all ground instances 
$(E\sigma,F\sigma)$ (i.e., $(E(T_1,\dots,T_n),F(T_1,\dots,T_n))$) 
of the pair $(E,F)$.
A \emph{set} $\calB$ \emph{of pairs} $(E,F)$ is \emph{sound} if each
element is sound. 
\\
By $\instances(\calB)$ we mean the set of all
ground instances of pairs in $\calB$.
\end{defn}
Let us now assume a given \emph{deterministic}
first-order grammar 
 $\calG=(\calN,\act,\calR)$ in normal form, and
a \emph{finite} set $\calB$, called
a \emph{basis}, 
of pairs (of regular terms) $(E,F)$, supposedly sound;
we 
always 
assume $\calB$ contains the pair $(x_1,x_1)$
which is obviously sound.

The derivation (or deduction) system in
Figure~\ref{fig:dersystem1}, assuming $\calG$ and $\calB$,
provides
an inductive definition of the predicate 
\begin{center}
$\modr{\models_{(T_0,U_0)}}\,\subseteq\, \act^*\times
((\gtrees_{\calN}\times \gtrees_{\calN})\cup\{\success,\fail\})$.
\end{center}
So it is in fact a family of predicates, parametrized with the initial
pair $(T_0,U_0)$ of regular \emph{ground} terms. We write just
\modr{$\models$} instead of 
$\models_{(T_0,U_0)}$ when $(T_0,U_0)$ is  clear from context.

We can see that Axiom,
Basic transition rule (1.) and Rejection rule (6.) guarantee that 
if $T_0\not\sim U_0$ and $ua$ is an offending word for 
$(T_0,U_0)$, which implies $(T_0,U_0)\gt{u}(T,U)$ where $T\not\sim_1 U$,
then  $u\models (T,U)$, which can be read as ``$u$ can be labelled
with the pair $(T,U)$'', and thus $\varepsilon \models \fail$; as
expected, 
 $\varepsilon \models \fail$ is intended to mean $T_0\not\sim U_0$.
 
In the case  $T_0\sim U_0$  we are guaranteed that 
if $(T_0,U_0)\gt{u}(T,U)$ then $u\models (T,U)$ but it is not clear
how to use this to conclude that $T_0\sim U_0$.
To this aim we introduce another ``label'': 
$u\models \success$ 
(read ``$u$ can be (also) labelled with $\success$'')
is intended
to imply
that 
\fial{$u$ is \cerv{N}ot an \cerv{O}ffending \cerv{P}refix 
for $(T_0,U_0)$
if $\calB$ is sound}. Thus $\varepsilon\models\success$ 
is intended to imply
\,$T_0\sim U_0$\, if $\calB$ is sound.
Having this in mind, Basis rule (4.) is clear;
 we will later
realize the reason for the condition $u\neq \varepsilon$.
Bottom-up progression rule (5.) is clear as well;
we note in particular that it enables 
to derive
$u\models \success$ when $u\models (T,U)$ and
$T,U$ do not enable any action. 

The most interesting is the 
Limit subterm replacement rule (2.),
with its particular case of Subterm replacement rule.
It allows to label $u$ also with other pairs than those derived just
by Basic transition; so one $u$ can get many pairs of terms as
``labels''.
This is meant to help
to create instances of the
basis and label the respective words with $\success$.
The condition $|v|<|u|$ is important for soundness of 
the predicate $\models$ (wrt its intended meaning).
The symmetry rule 
(3.) could be dropped if we included 
all symmetric cases in the Limit subterm replacement rule.

As usual, we write 
$u\models (T,U)$, or $u\not\models (T,U)$,
if the predicate is true (i.e., derivable) for the triple
$u,T,U$, or not true (not derivable), 
respectively; similarly for $\success$ and $\fail$.

\begin{figure}[t]
\begin{itemize}
\item
(Axiom) $\varepsilon \models (T_0,U_0)$
\item
(Primary derivation (deduction) rules, determining when $u\models (T,U)$)
\begin{enumerate}
\item (Basic transition) 
\\
If $u\models (T,U)$,  $T\sim_1 U$, $(T,U)\gt{a}(T',U')$
then $ua  \models (T',U')$.
\item (Limit subterm replacement) 
\\
If  $u\models (E(T),U)$, $|v|<|u|$,
  $v\models (T,F(T))$, $F\neq x_1$ 
then $u\models (E(F\limtree),U)$.
\\
(A particular case is Subterm replacement:
if $v\models (T,T')$ then $u\models (E(T'),U)$.)

\item (Symmetry)
If  $u\models (T,U)$ then  $u\models
(U,T)$.
\end{enumerate}
\item
(Secondary derivation rules, determining when $u\models\success$ and/or
$u\models\fail$ )
\begin{enumerate}
\setcounter{enumi}{3}
\item (Basis)
If $u\neq\varepsilon$, 
$u\models (T,U)$ and $(T,U)\in\instances(\calB)$ then 
$u\models\success$.
\item
(Bottom-up progression)
If $u\models (T,U)$, $T\sim_1 U$,
and $ua\models \success$
for all $a$ enabled by $T$ (and $U$)
then
$u\models\success$.
\item (Rejection)
If   $u\models (T,U)$ where $T\not\sim_1 U$ then 
$\varepsilon\models \fail$.
\end{enumerate}
\end{itemize}
\caption{Inductive definition of $\models_{(T_0,U_0)}$
(for given $\calG,\base$)
}\label{fig:dersystem1}
\end{figure}

We now recall Observation~\ref{prop:basicquottraces}
and Proposition~\ref{prop:basiceqleveldrop}, 
and show the following generalization  
in our case of det-first-order grammars.
This is the crucial point for showing
soundness (Proposition~\ref{prop:infsoundness}).
(In fact, Point 3. is used later 
 in the 
 completeness proof.)

\begin{prop}\label{prop:indsound}
Given (a det-first-order grammar $\calG$ and) an initial pair $(T_0,U_0)$:
\\
1. If $u\models (T,U)$  then 
$\eqlevel(T_0,U_0)-|u|\leq \eqlevel(T,U)$.
($T_0\sim U_0$ implies $T\sim U$.)
\\
2. If  $T_0\not\sim U_0$ and $u\models (T,U)$ where 
$u\in\pref(\offpl(T_0,U_0))$ 
then
\\
\hspace*{2em}$\eqlevel(T_0,U_0)-|u|=\eqlevel(T,U)$
and
$\offpl(T,U)=u\lquot \offpl(T_0,U_0)$.
\\
3.
If $T_0\sim U_0$ and $u\models (T,U)$ then
$\traces(T)=\traces(U)=u\lquot \traces(T_0) =u\lquot \traces(U_0)$\,.
\end{prop}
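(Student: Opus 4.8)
The plan is to prove all three points by a single induction on the derivation of $u\models(T,U)$. Since the secondary rules (4)--(6) only produce the labels $\success,\fail$, a judgement $u\models(T,U)$ can only arise from the Axiom or from one of the primary rules (Basic transition, Limit subterm replacement, Symmetry); these are the four cases to treat, with the induction hypothesis granting all three points for the (strictly shorter) derivations of the premises. The Axiom case is trivial: $u=\varepsilon$, $(T,U)=(T_0,U_0)$, and every claim holds with equality, using $\varepsilon\in\pref(\offpl(T_0,U_0))$ from Observation~\ref{prop:basicoffwords}(1) for Point 2. The Symmetry case is immediate from the symmetry of $\eqlevel$, of $\offpl$, and of equivalence.

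For the Basic transition case I write $u=u'a$ with premise $u'\models(\hat T,\hat U)$ and single step $(\hat T,\hat U)\gt{a}(T,U)$; here everything reduces to the one-step facts already recorded in Proposition~\ref{prop:basiceqleveldrop} and Observation~\ref{prop:basicquottraces}. Point 1 follows from the induction hypothesis plus the ``drop by at most one'' of Proposition~\ref{prop:basiceqleveldrop}(1). For Point 2, $u\in\pref(\offpl(T_0,U_0))$ forces $u'\in\pref(\offpl(T_0,U_0))$, so the induction hypothesis gives the exact level and $\offpl(\hat T,\hat U)=u'\lquot\offpl(T_0,U_0)$; then $a\in\pref(\offpl(\hat T,\hat U))$, and Proposition~\ref{prop:basiceqleveldrop}(2,3) deliver the exact drop and $\offpl(T,U)=a\lquot\offpl(\hat T,\hat U)=u\lquot\offpl(T_0,U_0)$. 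Point 3 is $\traces(T)=a\lquot\traces(\hat T)=u\lquot\traces(T_0)$ by determinism (Observation~\ref{prop:basicquottraces}), with $\traces(T)=\traces(U)$ supplied by Point 1.

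The only substantial case --- and the expected main obstacle --- is Limit subterm replacement, where $u\models(E(F\limtreeone),U)$ comes from $u\models(E(T),U)$, $v\models(T,F(T))$, $|v|<|u|$, $F\neq x_1$. Setting $k=\eqlevel(T_0,U_0)-|u|$, the whole point of the side condition $|v|<|u|$ is that Point 1 applied to $v\models(T,F(T))$ yields $\eqlevel(T,F(T))\geq\eqlevel(T_0,U_0)-|v|\geq k+1$, i.e.\ a one-level ``safety margin''. Proposition~\ref{prop:congrlim} then gives $T\sim_{k+1}F\limtreeone$ and the congruence Proposition~\ref{prop:congrproperty} gives $E(T)\sim_{k+1}E(F\limtreeone)$. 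Point 1 now follows by monotonicity and transitivity of the relations $\sim_k$ from $\eqlevel(E(T),U)\geq k$. For Points 2 and 3 the induction hypothesis on $u\models(E(T),U)$ (the \emph{same} $u$) gives the sharp data for the pair $(E(T),U)$, which I transfer to $(E(F\limtreeone),U)$ through the margin: Observation~\ref{prop:basicoffwords}(3), applied with $r=E(T)$, $q=E(F\limtreeone)$, $s=U$, gives simultaneously $\eqlevel(E(F\limtreeone),U)=k$ and the \emph{unchanged} offending set $\offpl(E(F\limtreeone),U)=u\lquot\offpl(T_0,U_0)$ (Point 2), while for Point 3, $T\sim F\limtreeone$ and congruence give $E(F\limtreeone)\sim E(T)$, whence $\traces(E(F\limtreeone))=\traces(E(T))=u\lquot\traces(T_0)$. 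The delicate bookkeeping is exactly the exact-equality-versus-inequality distinction: Point 1 needs only transitivity, whereas Points 2--3 require the sharper Observation~\ref{prop:basicoffwords}(3), which is precisely where the ``replacement leaves the offending words intact'' phenomenon advertised in the introduction is used.
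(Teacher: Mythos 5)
Your proof is correct and follows essentially the same route as the paper's: induction on the length of the derivation, with the Basic transition case handled via Proposition~\ref{prop:basiceqleveldrop} and Observation~\ref{prop:basicquottraces}, and the Limit subterm replacement case handled via Propositions~\ref{prop:congrlim} and~\ref{prop:congrproperty} for Points 1 and 3 and via Observation~\ref{prop:basicoffwords}(3) for Point 2. You merely spell out in more detail (correctly) the role of the side condition $|v|<|u|$ in producing the $k{+}1$ safety margin that the paper leaves implicit.
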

\begin{proof}
We proceed by induction on the length of derivations.
The axiom $\varepsilon\models (T_0,U_0)$ trivially satisfies the
conditions. Suppose that the conditions are satisfied for 
all $u\models (T,U)$ derived by derivations upto length $m$, and
consider a derivation deriving  $u'\models (T',U')$ by $m+1$ applications
of the derivation rules. 

If the last rule was 1. (Basic transition),
using $u\models (T,U)$ and $(T,U)\gt{a}(T',U')$, 
then $ua\models (T',U')$ satisfies the conditions by (the induction
hypothesis and) 
Propositions~\ref{prop:basicquottraces},~\ref{prop:basiceqleveldrop}:
We have $\eqlevel(T,U)-1\leq \eqlevel(T',U')$. If  
$ua\in \pref(\offpl(T_0,U_0))$ then $u\in\pref(\offpl(T_0,U_0))$,
hence $\offpl(T,U)=u\lquot \offpl(T_0,U_0)$ and thus
$a\in\pref(\offpl(T,U))$;
this implies
$\offpl(T',U')=a\lquot \offpl(T,U)$
and thus $\offpl(T',U')=(ua)\lquot \offpl(T_0,U_0)$.

If the last rule was Limit subterm replacement (2.),
so from $u\models (E(T),U)$, $|v|<|u|$,
  $v\models (T,F(T))$, $F\neq x_1$ we have derived
$u\models (E(F\limtree),U)$,
then
the conditions 1. and 3. (for  $u\models (E(F\limtree),U)$) 
follow from (the induction hypothesis and)
Propositions~\ref{prop:congrlim} 
and~\ref{prop:congrproperty};  the condition 2. follows
from Point 3. of Observation~\ref{prop:basicoffwords}.

Rule 3. (Symmetry) obviously preserves the conditions. 
\qed
\end{proof}

\begin{cor}
$T_0\not\sim U_0$ iff $\varepsilon \models  \fail$.
\end{cor}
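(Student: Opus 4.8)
The plan is to read this corollary as the combination of the soundness already established in Proposition~\ref{prop:indsound} with one short completeness observation, so I would prove the two implications separately. For the direction ``$\varepsilon\models\fail$ implies $T_0\not\sim U_0$'' I would first note that, inspecting Figure~\ref{fig:dersystem1}, the label $\fail$ can only be produced by the Rejection rule (6.); hence $\varepsilon\models\fail$ forces the existence of some $u\models (T,U)$ with $T\not\sim_1 U$. I would then invoke Point 1.\ of Proposition~\ref{prop:indsound}: were $T_0\sim U_0$, it would give $T\sim U$, hence $T\sim_1 U$, contradicting $T\not\sim_1 U$. So this direction is essentially immediate from soundness.

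The converse ``$T_0\not\sim U_0$ implies $\varepsilon\models\fail$'' is where the actual work sits, and it is the completeness claim sketched in the text just before Proposition~\ref{prop:indsound}. First I would fix an offending word $w=a_1\cdots a_m$ for $(T_0,U_0)$ (nonempty by Observation~\ref{prop:basicoffwords}(1)), recording via Observation~\ref{prop:basicoffwords}(2) that $\eqlevel(T_0,U_0)=m-1$. Writing $u_j=a_1\cdots a_j$ for its prefixes, I would use that each $u_j$ with $j\leq m-1$ is a proper prefix of $w$, hence lies in $\traces(T_0)\cap\traces(U_0)$ by minimality of $w$; in the deterministic $LTS_{\calG}$ this common trace determines a unique pair $(V_j,W_j)$ with $(T_0,U_0)\gt{u_j}(V_j,W_j)$, and since $u_j\in\pref(\offpl(T_0,U_0))$ I would apply Point 2.\ of Proposition~\ref{prop:basiceqleveldrop} to conclude $\eqlevel(V_j,W_j)=(m-1)-j$.

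With these eq-levels in hand I would construct a derivation of $\varepsilon\models\fail$ explicitly: start from the Axiom $\varepsilon\models (T_0,U_0)$ and apply the Basic transition rule (1.) once for each letter $a_{j+1}$, $0\leq j\leq m-2$. The delicate point, and the only real obstacle, is checking that rule (1.) is applicable at each intermediate stage: I need both $V_j\gt{a_{j+1}}V_{j+1}$ and $W_j\gt{a_{j+1}}W_{j+1}$ to be defined (which holds because $u_{j+1}$ is still a common trace), and I need the side condition $V_j\sim_1 W_j$, which holds precisely because $\eqlevel(V_j,W_j)=(m-1)-j\geq 1$ for $j\leq m-2$. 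The induction then delivers $u_{m-1}\models (V_{m-1},W_{m-1})$, where now $\eqlevel(V_{m-1},W_{m-1})=0$, i.e.\ $V_{m-1}\not\sim_1 W_{m-1}$, so the Rejection rule (6.) fires and yields $\varepsilon\models\fail$. I would finally note the boundary case $m=1$, where no transition step is needed and Rejection applies directly to the Axiom pair. The crux throughout is simply that the eq-level drops by exactly one along an offending prefix and first reaches $0$ at the state where $\sim_1$ fails, which is exactly where Rejection becomes enabled.
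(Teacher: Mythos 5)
Your proof is correct and follows essentially the same route the paper intends: the direction from $\varepsilon\models\fail$ to $T_0\not\sim U_0$ is exactly Point 1 of Proposition~\ref{prop:indsound} applied to the premise of the Rejection rule, and the converse is the argument already sketched in the text introducing Figure~\ref{fig:dersystem1} (follow an offending word by Basic transition, noting the eq-level drops by exactly one along each offending prefix so the side condition $\sim_1$ holds until the last step, then apply Rejection). Your explicit verification of the side conditions and the boundary case $m=1$ just spells out what the paper leaves implicit.
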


\begin{prop}\label{prop:infsoundness}(Soundness)
\\
If $\varepsilon\models_{(T,U)}\success$ for all
$(T,U)\in\{(T_0,U_0)\}\cup \instances(\calB)$
then $\base$ is sound and $T_0\sim U_0$.
\end{prop}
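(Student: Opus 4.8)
The plan is to argue by contradiction through a minimal counterexample, turning the apparent circularity of the definition (rule (Basis) lets $\success$ depend on instances of $\calB$, whose soundness is exactly what we want to prove) into a strictly decreasing measure by means of the exact eq-level drop recorded in Point 2 of Proposition~\ref{prop:indsound}. The guiding intuition is that a smallest non-equivalent pair cannot certify itself with $\success$, because along an offending word every label $\success$ would have to be justified either by an even smaller non-equivalent instance of $\calB$ or by a further step, and the steps run out exactly when the eq-level reaches $0$.

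First I would assume the conclusion fails, so that some pair in $\{(T_0,U_0)\}\cup\instances(\calB)$ is non-equivalent, and I would pick such a pair $(P,Q)$ whose eq-level $\ell:=\eqlevel(P,Q)$ is minimal among all non-equivalent pairs of this set (so $\ell\in\Nat$). Since $(P,Q)$ belongs to the set, the hypothesis gives $\varepsilon\models_{(P,Q)}\success$; from here on $\models$ denotes $\models_{(P,Q)}$. The only rules producing $\success$ are (Basis) and (Bottom-up progression), and (Basis) requires $u\neq\varepsilon$, so $\varepsilon\models\success$ must come from (Bottom-up progression); in particular $P\sim_1 Q$, whence $\ell\geq 1$. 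I then fix one offending word $w=a_1a_2\cdots a_{\ell+1}\in\offpl(P,Q)$, noting that each prefix $a_1\cdots a_j$ lies in $\pref(\offpl(P,Q))$.

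The heart of the proof is a descent along $w$: I would show $a_1\cdots a_j\models\success$ for every $j=0,1,\dots,\ell$. The case $j=0$ is the hypothesis. For the step, set $u:=a_1\cdots a_j\in\pref(\offpl(P,Q))$ with $u\models\success$ and inspect the rule that produced this label. If it was (Basis), then $u\models (T,U)$ for some $(T,U)\in\instances(\calB)$ with $u\neq\varepsilon$; as $u\in\pref(\offpl(P,Q))$, Point 2 of Proposition~\ref{prop:indsound} gives $\eqlevel(T,U)=\ell-j$, a finite value strictly below $\ell$ (since $j\geq 1$), so $(T,U)$ is a non-equivalent pair of the set with smaller eq-level, contradicting minimality. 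Hence it was (Bottom-up progression): $u\models (T,U)$ with $T\sim_1 U$ and $ua\models\success$ for every $a$ enabled by $T,U$. Point 2 again yields $\eqlevel(T,U)=\ell-j$ and $\offpl(T,U)=u\lquot\offpl(P,Q)$; the premise $T\sim_1 U$ forces $\ell-j\geq 1$, i.e. $j\leq\ell-1$, and since $a_{j+1}\cdots a_{\ell+1}=u\lquot w$ is an offending word for $(T,U)$, the action $a_{j+1}$ is a trace of one of $T,U$ and hence of both by $T\sim_1 U$. Taking $a:=a_{j+1}$ in the rule's premise gives $a_1\cdots a_{j+1}\models\success$, completing the step.

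Running the descent to $j=\ell$ produces the contradiction: $a_1\cdots a_\ell\models\success$ holds with $a_1\cdots a_\ell\in\pref(\offpl(P,Q))$, yet by Point 2 any pair $(T,U)$ labelling this word has $\eqlevel(T,U)=\ell-\ell=0$. Then (Basis) would again exhibit a non-equivalent instance of $\calB$ of eq-level $0<\ell$ (contradicting minimality), while (Bottom-up progression) is blocked because it demands $T\sim_1 U$, i.e. eq-level $\geq 1$; so $a_1\cdots a_\ell\models\success$ is underivable, a contradiction. Therefore no non-equivalent pair exists in $\{(T_0,U_0)\}\cup\instances(\calB)$, which means that every ground instance of $\calB$ is equivalent (so $\calB$ is sound) and that $T_0\sim U_0$. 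The only delicate point, and the one I would handle most carefully, is precisely this coupling of minimality with the exact eq-level drop of Point 2: it is what converts the self-referential appeal to $\instances(\calB)$ inside rule (Basis) into a strictly decreasing quantity, so that the smallest offending configuration cannot certify itself.
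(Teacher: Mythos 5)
Your proof is correct and follows essentially the same route as the paper: a minimal-eq-level counterexample in $\{(T_0,U_0)\}\cup\instances(\calB)$, the exact eq-level drop from Point 2 of Proposition~\ref{prop:indsound} along offending prefixes, and a case analysis showing that neither (Basis) nor (Bottom-up progression) can sustain the $\success$ labels all the way along an offending word. The paper phrases the terminal contradiction by taking the \emph{longest} offending prefix labelled $\success$ rather than running your forward induction to position $\ell$, but this is only a cosmetic difference.
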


\begin{proof}
By contradiction.
\\
Suppose the assumption holds but there is some
$\fial{(T'_0,U'_0)}\in\{(T_0,U_0)\}\cup \instances(\calB)$
with the {\fial{least finite eq-level}}.
Take the {\fial{longest (offending) prefix}} $\fial{u}$ 
of some $w\in\offpl(T'_0,U'_0)$ such that 
$u\models_{(T'_0,U'_0)}\success$. 
The rule deriving $u\models_{(T'_0,U'_0)}\success$
could not be the Basis rule since
this supposes $u\neq\varepsilon$ and $u\models_{(T'_0,U'_0)}(T,U)$
where $(T,U)\in\instances{\calB}$ but 
 Point 2. of Proposition~\ref{prop:indsound} implies that
$\eqlevel(T,U)$ is smaller than the eq-level of any pair in
$\instances(\calB)$.
The same Point 2. also implies that 
the deriving rule could not be Bottom-up
progression since this presupposes $ua\models_{(T'_0,U'_0)}\success$
for 
a longer prefix $ua$ of the above $w\in\offpl(T'_0,U'_0)$.
\qed
\end{proof}

\noindent
So we have a sound system, on condition $\calB$ is sound.
But we note that an algorithm surely cannot process infinitely many
pairs  $(T,U)\in \instances(\calB)$ (to show 
$\varepsilon\models_{(T,U)}\success$ for each of them).
Fortunately, it suffices to consider a ``critical instance'' for each pair
$(E,F)$ in
$\calB$ which has the least eq-level among
the ground instances of $(E,F)$:

\begin{defn}
The \emph{critical instance} of a pair 
 $(E(x_1,\dots,x_n),F(x_1,\dots,x_n))$ is the pair 
  $(E\sigma,F\sigma)$ where $\sigma=[L_1/x_1,\dots,L_n/x_n]$
for fresh 
nullary nonterminals $L_i$ (extending $\calG$) such that 
$L_i\not\sim_1 V$ for any $V\neq L_i$; e.g., $L_i$ gets its own
special action $\ell_i$ and a rule $L_i\gt{\ell_i}L_i$.
By $\worstinstances(\calB)$ we mean the set of the critical instances of
the pairs in $\calB$. 
\end{defn}

\begin{prop}\label{prop:worstinstance}
For any $E(x_1,\dots,x_n)$, $F(x_1,\dots,x_n)$, and any $T_1,\dots,T_n$,
\\
$\eqlevel(E(L_1,\dots,L_n),F(L_1,\dots,L_n))\leq 
\eqlevel(E(T_1,\dots,T_n),F(T_1,\dots,T_n))$
\\
if $L_1,\dots, L_n$
do not occur in $E,F$.
\end{prop}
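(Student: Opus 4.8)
The plan is to reduce to the stratified equivalences and argue by contraposition: since $\eqlevel(E(L_1,\dots,L_n),F(L_1,\dots,L_n))\leq\eqlevel(E(T_1,\dots,T_n),F(T_1,\dots,T_n))$ is equivalent to the implication ``$E(L_1,\dots,L_n)\sim_k F(L_1,\dots,L_n)$ entails $E(T_1,\dots,T_n)\sim_k F(T_1,\dots,T_n)$ for every $k$'', it suffices to show, for each $k\in\Nat$ and each tuple $T_1,\dots,T_n$, that $E(T_1,\dots,T_n)\not\sim_k F(T_1,\dots,T_n)$ forces $E(L_1,\dots,L_n)\not\sim_k F(L_1,\dots,L_n)$. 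So I would fix a word $w$ with $|w|\leq k$ witnessing $E(T_1,\dots,T_n)\not\sim_k F(T_1,\dots,T_n)$, and, without loss of generality, assume $w\in\traces(E(T_1,\dots,T_n))\smallsetminus\traces(F(T_1,\dots,T_n))$ (the other order being symmetric). From $w$ I will build a word $w'$ with $|w'|\leq|w|$ that is enabled in exactly one of $E(L_1,\dots,L_n),F(L_1,\dots,L_n)$.

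The construction rests on Observation~\ref{prop:segmenttraces}, which splits $\traces(E(G_1,\dots,G_n))$ into the head traces $\traces(E(x_1,\dots,x_n))$ and the subterm contributions $\{uv\mid E\gt{u}x_i,\ v\in\traces(G_i)\}$, combined with the defining properties of the critical nonterminals: each $L_i$ is nullary with the single rule $L_i\gt{\ell_i}L_i$, so $\traces(L_i)=\{\ell_i\}^{*}$, and $\ell_i$ is a fresh action. Because $L_1,\dots,L_n$ do not occur in $E,F$, every derivation $E(x_1,\dots,x_n)\gt{u}x_i$ (and likewise for $F$) uses only original actions; hence any word containing some $\ell_j$ can contribute to a trace-set of an $L$-instance only through the corresponding subterm $L_j$. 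This is the separation that makes the critical instance maximally distinguishing.

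I would then split according to how $w$ arises in $\traces(E(T_1,\dots,T_n))$. If $w\in\traces(E(x_1,\dots,x_n))$, then $w\in\traces(E(L_1,\dots,L_n))$ too; and were $w\in\traces(F(L_1,\dots,L_n))$, then, as $w$ carries no fresh action, Observation~\ref{prop:segmenttraces} would force $w\in\traces(F(x_1,\dots,x_n))\subseteq\traces(F(T_1,\dots,T_n))$, contradicting the choice of $w$, so here $w'=w$ already works. Otherwise $w\notin\traces(E(x_1,\dots,x_n))$, and there is a decomposition $w=uv$ with $E\gt{u}x_i$ and $v\in\traces(T_i)$, necessarily with $v\neq\varepsilon$; I set $w'=u\ell_i$, so $|w'|=|u|+1\leq|u|+|v|=|w|$, and $E(L_1,\dots,L_n)\gt{u}L_i\gt{\ell_i}L_i$ gives $w'\in\traces(E(L_1,\dots,L_n))$.

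The crux is to show $u\ell_i\notin\traces(F(L_1,\dots,L_n))$. Assuming the contrary, since the single trailing $\ell_i$ is fresh and $u$ uses only original actions, Observation~\ref{prop:segmenttraces} together with $\traces(L_j)=\{\ell_j\}^{*}$ forces a decomposition $u\ell_i=u''v''$ with $F\gt{u''}x_j$ and $v''\in\{\ell_j\}^{*}$; the trailing $\ell_i$ must lie inside $v''$, which pins down $j=i$, $v''=\ell_i$, and $u''=u$, so that $F\gt{u}x_i$. But then $F\gt{u}x_i$ and $v\in\traces(T_i)$ give $w=uv\in\traces(F(T_1,\dots,T_n))$ by Observation~\ref{prop:segmenttraces}, contradicting $w\notin\traces(F(T_1,\dots,T_n))$. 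Hence $w'$ distinguishes the two $L$-instances at length at most $k$, yielding $E(L_1,\dots,L_n)\not\sim_k F(L_1,\dots,L_n)$ and completing the contrapositive. I expect the only delicate point to be precisely this last separation step — identifying $j=i$ and $u''=u$ — which is exactly where the freshness of $\ell_i$ and the shape $\traces(L_i)=\{\ell_i\}^{*}$ are indispensable.
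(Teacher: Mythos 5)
Your proof is correct, but it takes a more elementary and self-contained route than the paper. The paper argues at the level of eq-levels: assuming $\eqlevel(E(L_1,\dots,L_n),F(L_1,\dots,L_n))>\eqlevel(E(T_1,\dots,T_n),F(T_1,\dots,T_n))$, it invokes Proposition~\ref{prop:exposing} (with $T'_i=L_i$) to obtain an offending prefix $u$ for the $T$-instance that exposes an equation $x_i\doteq H$, and then observes that $(E(L_1,\dots,L_n),F(L_1,\dots,L_n))\gt{u}(L_i,H(L_1,\dots,L_n))$ with $L_i\not\sim_1 H(L_1,\dots,L_n)$, forcing the $L$-instance's eq-level down to at most $|u|$ and yielding a contradiction. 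You instead work stratum by stratum directly with Observation~\ref{prop:segmenttraces}, taking a concrete distinguishing word $w$ for the $T$-instance and manufacturing a distinguishing word $w'$ with $|w'|\leq|w|$ for the $L$-instance (either $w$ itself, or $u\ell_i$ when $w$ dives into the $i$-th tail on one side); your "crux" step, pinning down $F\gt{u}x_i$ from $u\ell_i\in\traces(F(L_1,\dots,L_n))$ via the freshness of $\ell_i$, is in effect a hands-on re-proof of the relevant special case of Proposition~\ref{prop:exposing}. What the paper's route buys is brevity, since the exposing-equation machinery is already in place; what yours buys is independence from that machinery and, notably, from determinism (Proposition~\ref{prop:exposing} is stated only for deterministic $\calG$, whereas your trace-level construction never uses determinism), at the cost of a somewhat longer case analysis.
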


\begin{proof}
Proposition~\ref{prop:exposing} shows that if we had
$k_1=\eqlevel(E(L_1,\dots,L_n),F(L_1,\dots,L_n))>
k_2= \eqlevel(E(T_1,\dots,T_n),F(T_1,\dots,T_n))$ then there were an offending prefix
$u$ for  $(E(T_1,\dots,T_n),F(T_1,\dots,T_n))$ exposing an equation
$x_i\doteq H(x_1,\dots,x_n)$ for
$E,F$, which also means $k_2\geq |u|$; but then $(E(L_1,\dots,L_n),F(L_1,\dots,L_n))
\gt{u}(L_i, H(L_1,\dots,L_n))$ where $L_i\not\sim_1 H(L_1,\dots,L_n)$,
and thus
$k_1\leq |u|$
(a contradiction).
\qed
\end{proof}

\noindent
The next lemma summarizes some important ingredients for the decidability
of trace equivalence for det-first-order grammars, showing that it
is now sufficient to
prove
the completeness of $\models$.

\begin{lemma}
For a det-first-order grammar $\calG$ and a finite set
$\calB$ of pairs of
terms, if
\begin{equation}\label{eq:sound}
\varepsilon\models_{(T,U)}\success \mbox{ for all }
(T,U)\in\{(T_0,U_0)\}\cup \worstinstances(\calB)
\end{equation}
then $\base$ is sound and $T_0\sim U_0$.
Moreover, the condition~\mbox{\textnormal{(\ref{eq:sound})}} 
is semidecidable.
\end{lemma}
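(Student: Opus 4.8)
The plan is to prove the two assertions separately: first, that condition~(\ref{eq:sound}), phrased only over the \emph{finite} set $\{(T_0,U_0)\}\cup\worstinstances(\calB)$, already yields soundness of $\calB$ together with $T_0\sim U_0$; and second, that this condition is semidecidable because the predicate $\models$ is recursively enumerable.

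For the first part I would not try to derive $\varepsilon\models_{(T,U)}\success$ for \emph{all} ground instances (which would let me quote Proposition~\ref{prop:infsoundness} directly); instead I would re-run the contradiction argument of that proposition, with the critical instances playing the role of the minimizing pair. Concretely, suppose~(\ref{eq:sound}) holds but $\calB$ is unsound or $T_0\not\sim U_0$; then some pair in $\{(T_0,U_0)\}\cup\instances(\calB)$ has finite eq-level. The key point is that, by Proposition~\ref{prop:worstinstance}, for every $(E,F)\in\calB$ the critical instance in $\worstinstances(\calB)$ realizes the \emph{least} eq-level among all ground instances of $(E,F)$; since $(T_0,U_0)$ is already ground, a pair of least finite eq-level in $\{(T_0,U_0)\}\cup\instances(\calB)$ can therefore be chosen inside $\{(T_0,U_0)\}\cup\worstinstances(\calB)$. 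Call it $(T'_0,U'_0)$. For exactly this pair the hypothesis~(\ref{eq:sound}) gives $\varepsilon\models_{(T'_0,U'_0)}\success$, and since $T'_0\not\sim U'_0$ the empty word is an offending prefix, so there is a longest offending prefix $u$ with $u\models_{(T'_0,U'_0)}\success$. From here the argument is verbatim that of Proposition~\ref{prop:infsoundness}: Point~2 of Proposition~\ref{prop:indsound} forces the eq-level to drop strictly below the minimum whenever the Basis rule fires, and it forces a strictly longer offending prefix to carry $\success$ whenever the Bottom-up progression rule fires~--- both impossible. Hence no such pair exists, so $\calB$ is sound and $T_0\sim U_0$.

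For the second part I would observe that $\models_{(T,U)}$ is the least predicate closed under the finitely many rules of Figure~\ref{fig:dersystem1}, and that every side condition and every term produced by a rule is effective: equality of regular terms is decidable (Observation~\ref{prop:algorgraphequality}), $T\sim_1 U$ is decidable from the root nonterminals, the successors in $(T,U)\gt{a}(T',U')$ are computable, membership $(T,U)\in\instances(\calB)$ is a (one-sided) regular-term matching problem against the finitely many pairs of $\calB$ and is decidable, and the constructs $F\limtreeone$ and $E(F\limtreeone)$ are computable on graph presentations. Consequently, from any finite set of already-derived judgments only an effectively enumerable set of new judgments can be produced, so a saturation (forward-chaining) procedure enumerates all derivable judgments. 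Since $\{(T_0,U_0)\}\cup\worstinstances(\calB)$ is finite, condition~(\ref{eq:sound}) is a finite conjunction of derivability queries $\varepsilon\models_{(T,U)}\success$, each recursively enumerable, and the procedure halts and accepts once all of them have appeared; this is exactly semidecidability.

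The main obstacle is the first part, namely the reduction from ``all instances'' to the ``critical instances''. Everything hinges on transferring the minimality of eq-level, and thus on Proposition~\ref{prop:worstinstance}, so that the minimizing pair is guaranteed to lie in the finite set for which a $\success$-derivation has been assumed; once this is secured, the contradiction machinery of Proposition~\ref{prop:infsoundness} applies unchanged. For semidecidability the only delicate point is that the unbounded term constructions in the Limit subterm replacement rule still yield effectively enumerable conclusions, which follows from the effectiveness of the graph-presentation operations and suffices precisely because enumerability, not decidability, is required here.
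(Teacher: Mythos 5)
Your proof is correct and follows essentially the route the paper intends: the lemma is stated there without an explicit proof precisely because it is meant to be the combination of Proposition~\ref{prop:infsoundness}'s minimal-counterexample argument with Proposition~\ref{prop:worstinstance}, which guarantees that a pair of least finite eq-level can be taken from the finite set $\{(T_0,U_0)\}\cup\worstinstances(\calB)$, plus the routine enumeration of derivations for semidecidability. Your only (harmless) imprecision is saying the minimizing pair of $\{(T_0,U_0)\}\cup\instances(\calB)$ ``can be chosen inside'' $\{(T_0,U_0)\}\cup\worstinstances(\calB)$ --- the critical instances live over the extended grammar and are not literally members of $\instances(\calB)$, but their eq-levels minorize it, which is all the contradiction needs.
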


\section{Completeness of the predicate $\models$}
\label{subsec:completeness}

\begin{defn}
Given a det-first-order grammar $\calG$, a finite set $\calB$ of pairs 
of regular terms is
a \emph{sufficient basis} if $\calB$ is sound and 
we have:
if $T_0,U_0$ are regular ground terms where $T_0\sim U_0$
then $\varepsilon\models_{(T_0,U_0)} \success$ (wrt $\calG,\calB$).
\end{defn}
We now aim to prove that any det-first-order grammar (in normal form)
has a sufficient basis.
We use implicitly the fact that if $T_0\sim U_0$
then $u\models_{(T_0,U_0)}(T,U)$ implies 
$T\sim U$, 
$\traces(T)=\traces(U)=u\lquot \traces(T_0) =u\lquot \traces(U_0)$,
and 
$\omtraces(T)=\omtraces(U)=u\lquot \omtraces(T_0) =u\lquot \omtraces(U_0)$
(recall Propositions~\ref{prop:indsound} and~\ref{prop:basicquottraces}).
We start with a simple observation:

\begin{prop}\label{prop:infininter}
Given a det-first-order grammar $\calG$ 
and a sound basis $\calB$, 
if for every triple $T_0,U_0,\alpha$ where
$T_0\sim U_0$ and $\alpha\in\omtraces(T_0)$ ($=\omtraces(U_0)$) there is
a prefix $u$ of $\alpha$ such that $u\models_{(T_0,U_0)}\success$ then
$\calB$ is a sufficient basis.
\end{prop}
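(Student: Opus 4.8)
The plan is to fix an arbitrary pair with $T_0 \sim U_0$ and prove $\varepsilon \models_{(T_0,U_0)} \success$; since $\calB$ is sound by assumption, this immediately yields that $\calB$ is a sufficient basis. The argument is a K\"onig-style construction run contrapositively: from the failure $\varepsilon \not\models \success$ I would extract an infinite trace of $T_0$ none of whose prefixes is labelled with $\success$, contradicting the hypothesis.

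First I would record that every $u \in \traces(T_0)$ carries a pair-label. Since $T_0 \sim U_0$ we have $\traces(T_0) = \traces(U_0)$, and iterating the Axiom with the Basic transition rule along $u$ gives $u \models (T,U)$ where $(T_0,U_0) \gt{u} (T,U)$; the side condition $T^{(i)} \sim_1 U^{(i)}$ needed at each step is guaranteed by Point 1 of Proposition~\ref{prop:indsound}, which gives $T^{(i)} \sim U^{(i)}$ for every reachable pair. In particular the reached pair satisfies $T \sim U$, so $T$ and $U$ enable the same actions and $T \sim_1 U$.

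Next comes the key step, the contrapositive of the Bottom-up progression rule: if $u \in \traces(T_0)$ and $u \not\models \success$, then there is an action $a$ enabled by $T$ with $ua \not\models \success$. Indeed, with $u \models (T,U)$ and $T \sim_1 U$ as above, were $ua \models \success$ for every enabled $a$ (this includes the vacuous case where $T$ enables nothing), Bottom-up progression would force $u \models \success$. Hence a non-$\success$ word always possesses a non-$\success$ immediate successor, and that successor again lies in $\traces(T_0)$ because $a$ is enabled by $T$ and $T_0 \gt{u} T$.

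Finally I would run the construction. Assuming $\varepsilon \not\models \success$, I start from $\varepsilon \in \traces(T_0)$ and repeatedly apply the key step to choose $a_1, a_2, \dots$ so that each prefix $a_1 \cdots a_k$ lies in $\traces(T_0)$ and is not labelled with $\success$. The one point needing care is to ensure the construction never stalls and that the resulting action sequence is a genuine trace rather than merely a string of individually enabled letters; here determinism of $LTS_{\calG}$ makes this immediate, since every finite prefix determines a unique reached state, so the unique runs through the prefixes assemble into one infinite run and $\alpha = a_1 a_2 \cdots$ is a bona fide element of $\omtraces(T_0)$ with no $\success$-prefix, contradicting the hypothesis. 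The degenerate case $\omtraces(T_0) = \emptyset$ is covered automatically, as then the construction cannot proceed and $\varepsilon \models \success$ follows at once. Therefore $\varepsilon \models \success$ for every $T_0 \sim U_0$, and $\calB$ is a sufficient basis.
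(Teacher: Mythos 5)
Your proof is correct and follows essentially the same route as the paper: both arguments rest on labelling every trace of $T_0$ with a pair via Basic transition and then propagating $\success$ backwards with Bottom-up progression, the maximal (dead-end) traces being handled vacuously. Your contrapositive K\"onig-style construction of an infinite $\success$-free trace is simply the explicit justification of what the paper compresses into ``by repeated use of Bottom-up progression.''
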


\begin{proof}
Suppose $T_0\sim U_0$.
For every maximal $w\in\traces(T_0)$ 
(for which there is no $wa\in\traces(T_0)$)
we have $w\models_{(T_0,U_0)}\success$ by using 
Basic transition 
and Bottom-up progression. 
Thus the assumption that each 
$\alpha\in\omtraces(T_0)$ has a prefix $u$ 
for which $u\models_{(T_0,U_0)}\success$ implies that 
$\varepsilon\models_{(T_0,U_0)}\success$, by repeated use of Bottom-up
progression.
\qed
\end{proof}

\noindent
We now show 
a  sufficient condition for the existence
of a sufficient basis (Proposition~\ref{prop:secondsuf}), 
first introducing  some auxiliary notions to this aim.

\begin{defn}
We assume a det-first-order grammar $\calG$. 
Given $(T_0, U_0)$ where  
$T_0\sim U_0$, an infinite trace  $\alpha\in\omtraces(T_0)$ 
is \emph{$s$-bounded} (for $s\in\Nat$)
if it has a nonempty
prefix $u$ such that
$u\models_{(T_0,U_0)}(E(T_1,\dots,T_n), F(T_1,\dots,T_n))$ 
where 
the pair $(E(x_1,\dots,x_n), F(x_1,\dots,x_n))$ is sound and 
$\pressize(E,F)\leq s$.  
\\
\emph{Grammar} $\calG$ is \emph{$s$-bounded} 
if for each
$T_0\sim U_0$ all  $\alpha\in\omtraces(T_0)$ are $s$-bounded.
\\
Grammar $\calG$ \emph{has a stair-base of width}
$n\in\Nat$ 
if 
there is a function $g:\Nat\rightarrow\Nat$ such that: 
for every $T_0\sim U_0$ 
and every $\alpha\in\omtraces(T_0)$
either $\alpha$ is $g(0)$-bounded or there are 
some (unspecified) terms $T_1, \dots, T_{\cerv{n}}$
and infinitely
many nonempty, increasing prefixes $w_0, w_1, w_2,\dots $ of $\alpha$ such
that $w_j\models_{(T_0,U_0)} (\fial{E_j}(T_1,  \dots, T_n), \fial{F_j}(T_1,  \dots, T_n))$
for some (regular) $\fial{E_j},\fial{F_j}$ with
$\pressize(E_j,F_j)<g(j)$, for all $j=0,1,2,\dots$.
\end{defn}
For illustration and later use, we
first note a particular example of $s$-bounded traces where
$s=\pressize(x_1,x_1)$; then we show
that the stair-base property is indeed sufficient.

\begin{defn}\label{def:repeat}
Given $\calG$ and an initial pair $(T_0,U_0)$,
$w\in\act^*\cup\act^{\omega}$
\emph{has a repeat} if there are two
different prefixes $u_1, u_2$, $|u_1|<|u_2|$, of $w$  and some $T,U$ such that 
$u_1\models (T,U)$, $u_2\models (T,U)$.  
(By Subterm replacement we then derive
$u_2\models (U,U)$, where $(U,U)=(x_1\sigma,x_1\sigma)$ for
$\sigma=[U/x_1]$; so if $(x_1,x_1)\in\calB$ then
$u_2\models\success$.)
\end{defn}

\begin{prop}\label{prop:suffi}
A det-first-order grammar $\calG$ has a stair-base of width $0$ 
iff $\calG$ is $s$-bounded for some $s\in\Nat$.
If $\calG$ is $s$-bounded then it has a sufficient
basis.  
\end{prop}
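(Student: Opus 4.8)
The statement combines an equivalence with a one-way implication, so the plan is to dispatch three tasks: the two directions of the \emph{iff}, and then the construction of a sufficient basis under the $s$-bounded hypothesis.

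The easy direction, $s$-bounded $\Rightarrow$ stair-base of width~$0$, I would settle by taking the witnessing function $g$ with $g(0)=s$ (and $g(j)$ arbitrary otherwise): by definition every $\alpha\in\omtraces(T_0)$ for $T_0\sim U_0$ is already $s=g(0)$-bounded, so the first disjunct of the stair-base condition always holds and the staircase alternative is vacuous. For the converse I claim $s=g(0)$ again works. Fix $T_0\sim U_0$ and $\alpha$; if $\alpha$ is $g(0)$-bounded we are done, and otherwise the width-$0$ staircase supplies nonempty prefixes $w_j$ with $w_j\models_{(T_0,U_0)}(E_j,F_j)$ for \emph{ground} $E_j,F_j$ (there are no tails since $n=0$) and $\pressize(E_j,F_j)<g(j)$. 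The point is that one needs only the first rung $w_0$: since $T_0\sim U_0$ and $w_0\models(E_0,F_0)$, Point~1 of Proposition~\ref{prop:indsound} forces $E_0\sim F_0$, whence the ground pair $(E_0,F_0)$ is sound (it is its own unique ground instance); with $\pressize(E_0,F_0)<g(0)$ and $w_0$ nonempty this is exactly a witness that $\alpha$ is $g(0)$-bounded. Hence $\calG$ is $g(0)$-bounded.

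For the final implication, assume $\calG$ is $s$-bounded and let $\calB$ be the set of all \emph{sound} pairs $(E,F)$ with $\pressize(E,F)\leq s$, augmented with $(x_1,x_1)$. Observation~\ref{prop:boundpressize} makes $\calB$ finite, and it is sound by construction, so it is a legitimate basis. To apply Proposition~\ref{prop:infininter} I must verify that for every $T_0\sim U_0$ and every $\alpha\in\omtraces(T_0)$ some prefix $u$ satisfies $u\models_{(T_0,U_0)}\success$. By $s$-boundedness, $\alpha$ has a nonempty prefix $u$ with $u\models(E(T_1,\dots,T_n),F(T_1,\dots,T_n))$ for a sound $(E,F)$ of presentation size at most $s$; then $(E,F)\in\calB$ and $(E(T_1,\dots,T_n),F(T_1,\dots,T_n))\in\instances(\calB)$, so the Basis rule (applicable because $u\neq\varepsilon$) yields $u\models\success$. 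Proposition~\ref{prop:infininter} then concludes that $\calB$ is a sufficient basis.

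The only genuinely non-obvious step is the observation in the converse of the equivalence: a width-$0$ staircase collapses at its very first rung, so the possibly growing bounds $g(j)$ are irrelevant and one never has to follow the whole staircase. This works precisely because the equivalence $T_0\sim U_0$ promotes the derived ground pair $(E_0,F_0)$ to a sound pair via the soundness of $\models$. Everything else is routine bookkeeping with the definitions and with the already-established properties of $\models$ and $\pressize$.
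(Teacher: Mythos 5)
Your proposal is correct and follows the paper's own (very terse) argument exactly: the paper likewise reduces the width-$0$ staircase to $g(0)$-boundedness and takes $\calB$ to be all sound pairs of presentation size at most $s$, invoking Proposition~\ref{prop:infininter}. You have merely filled in the details the paper leaves implicit, in particular the key point that a ground pair derived by $\models$ under $T_0\sim U_0$ is automatically sound by Point~1 of Proposition~\ref{prop:indsound}.
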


\begin{proof}
The first part follows directly from the definitions (if the width is $0$ 
then $\calG$ is $g(0)$-bounded).
For the second part it suffices to define $\calB$ as
the set of all sound pairs $(E,F)$ with
$\pressize(E,F)\leq s$ (recalling Proposition~\ref{prop:infininter}).
\qed
\end{proof}

\begin{prop}\label{prop:secondsuf}
If  a det-first-order grammar $\calG$ has a stair-base (of some width)
then it is $s$-bounded (for some $s$)
and thus
has a sufficient basis.
\end{prop}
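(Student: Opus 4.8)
The plan is to induct on the width $n$ of the stair-base, proving that a stair-base of width $n$ forces a stair-base of width $n-1$; the base case $n=0$ is handed to us by Proposition~\ref{prop:suffi} (width $0$ means $s$-bounded), which also supplies the final clause ``and thus has a sufficient basis''. So I assume $\calG$ has a stair-base of width $n\geq 1$ with witnessing function $g$, fix $T_0\sim U_0$ and $\alpha\in\omtraces(T_0)$, and aim to produce the width-$(n{-}1)$ alternative for $\alpha$. If $\alpha$ is $g(0)$-bounded, nothing is needed for this $\alpha$, since it will be $g'(0)$-bounded for any $g'(0)\geq g(0)$. Otherwise I am handed $n$ tails $T_1,\dots,T_n$ and infinitely many increasing nonempty prefixes $w_j$ of $\alpha$ with $w_j\models (E_j(T_1,\dots,T_n),F_j(T_1,\dots,T_n))$ and $\pressize(E_j,F_j)<g(j)$.

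The first step is to extract a single tail equation from the leading witness $w_0$. I would note that $(E_0,F_0)$ cannot be sound: if it were, the nonempty prefix $w_0$ (with $\pressize(E_0,F_0)<g(0)$) would already witness $g(0)$-boundedness. Hence, by Proposition~\ref{prop:worstinstance}, the critical instance $(E_0(L_1,\dots,L_n),F_0(L_1,\dots,L_n))$ has finite eq-level $k$, and since there are only finitely many pairs of presentation size below $g(0)$ (Observation~\ref{prop:boundpressize}), this $k$ is bounded by a constant depending on $g(0)$ alone. Because $T_0\sim U_0$ gives $E_0(T_1,\dots,T_n)\sim F_0(T_1,\dots,T_n)$ (Proposition~\ref{prop:indsound}), Proposition~\ref{prop:exposing} then yields an offending prefix $u$ for the critical instance, of length at most $k+1$, that exposes an equation $x_i\doteq H(x_1,\dots,x_n)$ for $(E_0,F_0)$. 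Following $u$ by Basic transitions (legal since every derived pair is equivalent, hence $\sim_1$) gives the derivable pair $w_0u\models (T_i,H(T_1,\dots,T_n))$ with $T_i\sim H(T_1,\dots,T_n)$; by Corollary~\ref{prop:corcongrlimequat} I may pass to $H\limtreei$, obtaining $T_i\sim H\limtreei(T_1,\dots,T_n)$, a term over the remaining $n-1$ tails.

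The elimination step then uses this equation to remove $T_i$ from every later witness. For each $j$ with $|w_j|>|w_0u|$ (infinitely many, as the prefixes increase), I would apply the Limit subterm replacement rule (rule~2 of Figure~\ref{fig:dersystem1}) with the short side derivation $w_0u\models (T_i,H(T_1,\dots,T_n))$ to replace every occurrence of $T_i$ in the witness $w_j\models (E_j(T_1,\dots,T_n),F_j(T_1,\dots,T_n))$ by $H\limtreei(T_1,\dots,T_n)$, treating the second component via the Symmetry rule. This produces $w_j\models (\widetilde E_j,\widetilde F_j)$ in which both sides are built from the $n-1$ tails $T_1,\dots,T_{i-1},T_{i+1},\dots,T_n$ only. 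These constitute infinitely many increasing nonempty prefixes of $\alpha$ over $n-1$ tails; together with subsuming the $g(0)$-bounded alternative under $g'(0)\geq g(0)$, this is exactly a stair-base of width $n-1$ for a new size function $g'$. Iterating down to width $0$ and invoking Proposition~\ref{prop:suffi} delivers $s$-boundedness and a sufficient basis.

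The main obstacle, and the one point demanding care, is the \emph{uniform} control of the presentation size of the substituted heads: I must define $g'$ so that $\pressize(\widetilde E_j,\widetilde F_j)<g'(j)$ holds for \emph{all} choices of $T_0,U_0,\alpha$. This rests on two bounds. First, $H$ is reachable from $F_0$ within $|u|\leq k+1$ steps, so Observation~\ref{prop:binc} bounds its depth-size, and hence $\pressize(H)$ and $\pressize(H\limtreei)$, by a constant depending only on $g(0)$ (using that $k$ is uniformly bounded via Observation~\ref{prop:boundpressize}). Second, substituting a term of bounded presentation size for one variable inflates the presentation size only by a bounded factor, so one may take $g'(j)$ to be $g(j)$ scaled by this constant. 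The remaining verifications -- that the Limit subterm replacement rule is genuinely applicable (the length condition $|w_0u|<|w_j|$ and the form of the side derivation) and that both components are handled -- are routine.
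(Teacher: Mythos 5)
Your proof is correct and follows essentially the same route as the paper's: reduce the width by one by extracting an exposed equation from the unsound pair $(E_0,F_0)$, derive $w_0u\models(T_i,H(T_1,\dots,T_n))$ by Basic transitions, and eliminate the tail $T_i$ from all sufficiently long witnesses via Limit subterm replacement with uniform size control, then invoke Proposition~\ref{prop:suffi} at width $0$. The only cosmetic differences are that you bound the length of the exposing word via the eq-level of the critical instance rather than directly via the finiteness of the set of pairs of presentation size below $g(0)$, and that your final formula for $g'$ should also absorb the (uniformly bounded) index shift --- i.e.\ $g'(j)$ must dominate $g(j+k+1)$ times the substitution constant, not merely a constant multiple of $g(j)$.
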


\begin{proof}
Assume a fixed $\calG$ which has a stair-base of width
$n>0$, for a fixed function $g$.
By Proposition~\ref{prop:suffi}, we are done
once we show
that $\calG$ has also  a stair-base of width $n{-}1$.

So let us consider an arbitrary pair $T_0\sim U_0$
and some $\alpha\in\omtraces(T_0)$ which is not $g(0)$-bounded (for
$T_0,U_0$);
there are prefixes $w_0, w_1, w_2,\dots $ of $\alpha$ such
that 
$w_j\models_{(T_0,U_0)} (\fial{E_j}(T_1,  \dots, T_n), \fial{F_j}(T_1,  \dots, T_n))$
where $\pressize(E_j,F_j)<g(j)$. 
$(E_0,F_0)$ is not sound (since $\alpha$ is not $g(0)$-bounded)
but  $E_0(T_1,  \dots, T_n)\sim F_0(T_1,  \dots, T_n)$
(Proposition~\ref{prop:indsound}, Point 1.).
There is thus a
shortest $v$ exposing an equation for $(E_0,F_0)$ 
(recall Proposition~\ref{prop:exposing}); w.l.o.g. we can assume that
the equation is $x_n\doteq H(x_1,\dots,x_n)$ (where $H\neq x_n$), and thus
$w_0v\models_{(T_0,U_0)} (T_n, H(T_1,  \dots, T_n))$.
Since there are only finitely many pairs $(E,F)$ with
$\pressize(E,F)<g(0)$, there is some $s\in\Nat$ determined by 
$\calG$ and $g(0)$ (independent of $T_0,U_0,\alpha$) 
such that $|v|<s$ and $\pressize(H)<s$.

By Limit subterm replacement we get
$w'_j\models (\modr{E'_j}(T_1,  \dots, T_{\cerv{n-1}}), 
\modr{F'_j}(T_1,  \dots, T_{\cerv{n-1}}))$
for $j=0,1,2,\dots$, where
$w'_j=w_{s+j}$, 
$E'_{j}=E_{s+j}[H\limtreen/x_n]$
and
$F'_{j}=F_{s+j}[H\limtreen/x_n]$.
So defining $g'(j)=g(s+j)+f(s)$, for some trivial function $f$
(e.g. $f(s)=2s$, depending on the definition of the size of graph presentations), 
shows that  $\calG$ has a stair-base of width $n{-}1$
(since our reasoning was independent of $T_0,U_0,\alpha$).
\qed
\end{proof}

\noindent
Now we aim to show that any det-first-order
$\calG=(\calN,\act,\calR)$ (in normal form)
has a stair-base (of some width).
We use further auxiliary notions,
recalling 
$w(X,i)$,
$M_0$, $\boundinc$, and the $d$-prefix form ($d=M_0$ in our case).
 
 \begin{defn}\label{def:exposing}
Given $T$ and $w\in\act^*\cup\act^{\omega}$, $w$ \emph{exposes a
subterm} of $T$ \emph{in depth} $d$ if there is a prefix $u$ of $w$ such that 
$P_d^T\gt{u}x_i$ for some $i$. ($P^T_d$ is the $d$-prefix of $T$.)
\\
For  $T\gt{w}$, $w\in\act^*\cup\act^{\omega}$, 
$T$ \emph{sinks by} $w$ if 
$w$ exposes a subterm of $T$ in depth $1$; hence if  
$T$ \emph{does not sink by} 
$w$ then
$Xx_1\dots x_m\gt{w}$ where
$X$ is the root nonterminal of $T$.
\end{defn}
Now we introduce a key ingredient,
the notions of left- and right-balancing segments, with the right- and
left-balancing pivots; we also use $B$ for ranging over ground terms
(which serve as balancing pivots). We assume that
the \emph{underlying}
$\calG$ \emph{is in normal form} (discussing this issue afterwards).

\begin{defn}\label{def:balanc}
A triple $(T,B,v)$ where $|v|=M_0$ is an \emph{$\ell$-balancing
segment} if $T\sim_{M_0}B$, $T\gt{v}$ and 
$T$ does not sink by a proper prefix of $v$. $B$ is called the
\emph{(balancing) pivot} of this segment, an \emph{$r$-pivot}
in this case. 
The \emph{$\ell$-bal-result}  $\lbalres(T,B,v)$
of this segment 
is defined
as follows: if $T=XT'_1\dots T'_m$,
$Xx_1\dots x_m\gt{v}G(x_1,\dots, x_m)$, and 
$B=P^{B}_{M_0}(W_1,\dots,W_n)$ then 
$\lbalres(T,B,v)$ is the pair $(G(V_1,\dots,V_m),F(W_1,\dots,W_n))$ where
$P^{B}_{M_0}\gt{v}F$  
and
$V_i=F_i(W_1,\dots,W_n)$ where $P^{B}_{M_0}\gt{w(X,i)}F_i$
for $i=1,2,\dots, m=\arity(X)$.
$\lbalres(T,B,v)$
can be also presented as 
$(E(W_1,\dots,W_n),F(W_1,\dots,W_n))$ where
$E=G(F_1,\dots,F_m)$.

An \emph{$r$-balancing segment} $(B,T,v)$, with the \emph{$\ell$-pivot} $B$
and $\rbalres(B,T,v)$, is
defined symmetrically. We say just ``\emph{pivot}'' and
``\emph{bal-result}'' when
the side ($\ell$ or $r$) follows from context.
\end{defn}
Informally, Proposition~\ref{prop:balancelabel} 
captures the following simple idea:
if the ``left-hand side'' (lhs-) term does not sink 
by a segment of length
$M_0{-}1$, so it misses the opportunity to expose
a depth-1 subterm by a shortest word, 
then we can balance, i.e., replace its subterms (in depth $1$
originally) by using the
rhs-term ($r$-pivot), achieving
a pair with bounded finite heads and the
same tails, inherited from the $r$-pivot.
By symmetry the same holds for the case of
a non-sinking rhs-term and an $\ell$-pivot.
In what follows we sometimes 
leave implicit the parts of the claims which follow by
symmetry. 

\emph{Remark.}
The normal form assumption on $\calG$ is technically convenient
(though not really crucial). 
Definition~\ref{def:balanc} makes good sense also for
$w(X,i)=\varepsilon$ (recall Proposition~\ref{prop:computwXi}),
but Proposition~\ref{prop:balancelabel} and some later reasoning
would be slightly more complicated. An alternative to the normal
form assumption would 
be adding
a (harmless)
derivation rule in Figure~\ref{fig:dersystem1} 
enabling to replace
an unexposable subterm arbitrarily.

\begin{prop}\label{prop:balancelabel}
Given 
an initial pair $T_0,U_0$,
if $(T,B,v)$ is an $\ell$-balancing segment and
$(V,W)=\lbalres(T,B,v)$ then 
$u\models(T,B)$ implies
$uv\models (V,W)$. 
\end{prop}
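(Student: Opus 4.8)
The plan is to unfold the definition of the $\ell$-balancing segment and the $\ell$-bal-result, and then track how the Basic transition rule and the Limit subterm replacement rule act on the label $(T,B)$ as we append the word $v$. By Definition~\ref{def:balanc} we have $T\sim_{M_0}B$ with $|v|=M_0$, so in particular $T\sim_1 B$; moreover $T\gt{v}$ and $T$ does not sink by any proper prefix of $v$. First I would use the non-sinking hypothesis: since $T=XT'_1\dots T'_m$ and $T$ does not sink by any proper prefix of $v$, Definition~\ref{def:exposing} tells us $Xx_1\dots x_m\gt{v}G(x_1,\dots,x_m)$ for some $G$, and hence $T\gt{v}G(T'_1,\dots,T'_m)$. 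Because $T\sim_1 B$ and the transitions are deterministic, $B$ must enable the same first action, and iterating along $v$ (together with the congruence/eq-level-drop facts of Proposition~\ref{prop:basiceqleveldrop}) lets me apply the Basic transition rule $M_0$ times starting from $u\models(T,B)$, obtaining $uv\models\bigl(G(T'_1,\dots,T'_m),\,P^{B}_{M_0}(W_1,\dots,W_n)\gt{v}\text{-image}\bigr)$, i.e.\ $uv\models(G(T'_1,\dots,T'_m),F(W_1,\dots,W_n))$ where $P^{B}_{M_0}\gt{v}F$.

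The remaining task is to convert the tails $T'_1,\dots,T'_m$ of the lhs into the balanced tails $V_i=F_i(W_1,\dots,W_n)$ prescribed by $\lbalres(T,B,v)$, using the Limit subterm replacement rule (rule 2., in its Subterm replacement particular case). The key is that for each $i$, the word $w(X,i)$ exposes the $i$-th successor of $X$ (this is where normal form is used, guaranteeing such a word exists), so $T\gt{w(X,i)}T'_i$ while $B\gt{w(X,i)}F_i(W_1,\dots,W_n)=V_i$. Since $T\sim_{M_0}B$ and $|w(X,i)|<M_0$ (recall the definition of $M_0$ in~(\ref{eq:Mzero})), the eq-level drop leaves $T'_i\sim U'_i$-type information intact; more precisely I would derive $u\,w(X,i)\models(T'_i,V_i)$ by Basic transition along $w(X,i)$, giving a short side-label $v_i=u\,w(X,i)$ with $|v_i|<|uv|$ that licenses replacing each occurrence of $T'_i$ in the lhs term $G(T'_1,\dots,T'_m)$ by $V_i$.

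I would then apply Subterm replacement successively for $i=1,\dots,m$, each time using the short witness $v_i$, turning $G(T'_1,\dots,T'_m)$ into $G(V_1,\dots,V_m)=G(F_1(W_1,\dots,W_n),\dots,F_m(W_1,\dots,W_n))=E(W_1,\dots,W_n)$ with $E=G(F_1,\dots,F_m)$, exactly as in the definition of $\lbalres$. This yields $uv\models(E(W_1,\dots,W_n),F(W_1,\dots,W_n))=(V,W)$, as required. The length condition $|v_i|<|uv|$ needed by rule 2.\ holds because $|w(X,i)|<M_0=|v|$, so $|v_i|=|u|+|w(X,i)|<|u|+|v|=|uv|$.

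The main obstacle I expect is bookkeeping the side-conditions of rule 2.\ correctly: verifying that each replacement witness $v_i$ is genuinely shorter than the current word $uv$, and that the pairs $(T'_i,V_i)$ are actually derivable as labels (not merely equivalent), which forces the replacements to proceed via Basic transition from the \emph{same} anchoring label $u\models(T,B)$ rather than from the already-advanced label at $uv$. A secondary subtlety is handling the case where some $T'_i$ occurs several times in $G$ (the rule replaces all occurrences at once, which is fine) and the case $w(X,i)$ where the successor is genuinely exposable versus copied; the normal-form assumption is what lets me ignore the degenerate $w(X,i)=\varepsilon$ possibility and keep the argument uniform.
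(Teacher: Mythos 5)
Your proof is correct and follows essentially the same route as the paper's: Basic transition along $v$ to reach $(G(T'_1,\dots,T'_m),F(W_1,\dots,W_n))$, Basic transition along each $w(X,i)$ from the same anchor $u\models(T,B)$ to obtain the side-labels $u\,w(X,i)\models(T'_i,V_i)$, and then repeated Subterm replacement licensed by $|w(X,i)|<M_0=|v|$. Your additional bookkeeping (checking the $\sim_1$ side-condition at each intermediate step via Proposition~\ref{prop:basiceqleveldrop}, and anchoring the replacement witnesses at $u$ rather than at $uv$) is exactly what the paper leaves implicit.
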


\begin{proof}
Suppose the notation from Definition~\ref{def:balanc}. If  
$u\models(T,B)$ then by Basic transition rule we get
$uv\models(G(T'_1,\dots,T'_m),F(W_1,\dots,W_n))$
and  
$u(w(X,i))\models (T'_i,V_i)$ where
$V_i=F_i(W_1,\dots,W_n)$, for $i=1,2,\dots,m$.
Since $|w(X,i)|<M_0=|v|$, by repeated Subterm replacement 
we get
$uv\models(G(V_1,\dots,V_m),F(W_1,\dots,W_n))$.
\qed
\end{proof}

\noindent
Recalling Observations~\ref{prop:prefixchange}
and~\ref{prop:boundtails}, 
we easily observe the following facts.

\begin{observ}\label{prop:balbounds}
\hfill
\\
(1.) The bal-result of an $\ell$-balancing segment $(T,B,v)$ is determined
by the pivot $B$, the word $v$ (of length $M_0$) and by
the root nonterminal of $T$.
\\
(2.) The depth-size of (finite) terms
$F,G,F_i$ in the bal-result as in 
Definition~\ref{def:balanc} 
is bounded by $M_0+\boundinc(M_0)$. 
\\
(3.) The number $n$ of tails in the bal-result 
$(E(W_1,\dots,W_n),F(W_1,\dots,W_n))$
has an upper bound determined by $\calG$ (since $M_0$ is determined by
$\calG$).
\\
(4.)
If $\lbalres(T,B,v)=(V,W)=(E(W_1,\dots,W_n),F(W_1,\dots,W_n))$
where $E=G(F_1,\dots,F_m)$ as in 
Definition~\ref{def:balanc},  
and $V\gt{w}$ where $w$ exposes a subterm of $V$ in depth 
$(1+\boundinc(M_0))$ then $w$ necessarily exposes a
subterm of $V$ (of the form $F_i(W_1,\dots,W_n)$) 
which is reachable from $B$ by some $w(X,i)$ (of length
$<M_0$).
\end{observ}
We now try to use the possibility of balancing along
an infinite $\alpha\in\omtraces(T_0)$, for a pair $T_0\sim U_0$,
to show that 
$\alpha$ allows
a stair-base of width $n$, with a function $g$, 
which are independent of $T_0,U_0,\alpha$
(i.e., with $n$ and $g$ determined just by grammar $\calG$).
We first observe that if there are only
finitely many balancing opportunities then
$\alpha$ allows a repeat (recall Definition~\ref{def:repeat})
and the condition is trivial:

\begin{defn}\label{def:nextbal}
Assume an initial pair $T_0\sim U_0$ and a fixed 
$\alpha\in\omtraces(T_0)$. 
\\
For (the triple $u,T,U$ such that)
$u\models_{(T_0,U_0)} (T,U)$, $u$ being a prefix of
$\alpha$, we define the 
\emph{next $\ell$-segment} as the $\ell$-balancing segment $(T',B,v)$ 
for the shortest $w$ (if there is some)
such that $uwv$ is a prefix of $\alpha$ and $(T,U)\gt{w}(T',B)$. 
The \emph{distance of} this next $\ell$-segment is defined as $|w|$. 
Similarly we define the 
\emph{next $r$-balancing segment} for  $u\models_{(T_0,U_0)} (T,U)$.
\end{defn}

\begin{prop}
Given  $T_0,U_0,\alpha$ as in Definition~\ref{def:nextbal},
if there is no next $\ell$-segment and no next $r$-segment for some
$u\models_{(T_0,U_0)} (T,U)$, $u$ being a prefix of $\alpha$,
then $\alpha$ has a repeat.
\end{prop}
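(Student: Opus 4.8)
The plan is to show that if no balancing opportunity ever arises along $\alpha$, then the root nonterminals on both sides are preserved forever, and since there are only finitely many terms of bounded depth-size, two positions along $\alpha$ must carry the identical pair of terms, giving a repeat. First I would unpack what ``no next $\ell$-segment'' means: for every prefix $u$ of $\alpha$ with $u\models (T,U)$ and every continuation $w$ with $uwv$ a prefix of $\alpha$ (where $|v|=M_0$), the triple $(T',B,v)$ obtained from $(T,U)\gt{w}(T',B)$ fails to be an $\ell$-balancing segment. Examining Definition~\ref{def:balanc}, the segment requires $T'\sim_{M_0}B$, $T'\gt{v}$, and that $T'$ does not sink by a proper prefix of $v$. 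Since $T_0\sim U_0$, Proposition~\ref{prop:indsound} guarantees $T'\sim B$, hence certainly $T'\sim_{M_0}B$, and $T'\gt{v}$ holds because $v$ is read along $\alpha$; so the only way the segment can fail is that $T'$ \emph{does} sink by a proper prefix of $v$, i.e. some proper prefix exposes a depth-$1$ subterm of $T'$. The symmetric statement holds for the right side via the $r$-segments.

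The key step is then to argue that ``sinking never fails to happen'' forces the exposure of deeper and deeper subterms in a controlled way, which ultimately cannot continue without repetition. Concretely, I would track the root nonterminal of the left-hand term along $\alpha$. As long as the left term does not sink, Definition~\ref{def:exposing} tells us that $Xx_1\dots x_m\gt{w}$ where $X$ is the root; the root-locality Observations~\ref{prop:basictraceperform} and~\ref{prop:prefixchange} then say the sequence of root nonterminals, together with the bounded-depth prefixes, evolves through finitely many possibilities. The absence of \emph{both} next $\ell$- and next $r$-segments means neither side can realize a non-sinking window of length $M_0$: every attempt to read $M_0$ further actions is interrupted by a sink on at least one side. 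Following the shorter-sinking side repeatedly, I would produce an infinite sequence of positions $u_0, u_1, u_2,\dots$ along $\alpha$ at which we have labelled pairs $u_k\models (T_k,U_k)$ whose depth-sizes stay uniformly bounded (the sinking distances are bounded by $M_0$, and $\boundinc$ bounds the growth per step).

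Once the depth-sizes and presentation sizes of the labelled pairs $(T_k,U_k)$ are uniformly bounded, I invoke Observation~\ref{prop:boundpressize}: there are only finitely many pairs of bounded presentation size, so among $u_0,u_1,u_2,\dots$ two distinct prefixes $u_i, u_j$ with $|u_i|<|u_j|$ must satisfy $u_i\models (T,U)$ and $u_j\models (T,U)$ for the \emph{same} pair $(T,U)$. That is exactly the definition of a repeat in Definition~\ref{def:repeat}. The main obstacle I expect is the bookkeeping in the middle paragraph: making precise the inductive construction of the positions $u_k$ from the ``always-sinking'' hypothesis, and verifying that the presentation sizes of the resulting terms genuinely stay bounded rather than drifting upward. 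The delicate point is that a sink on one side need not coincide with a sink on the other, so I must choose at each step the side that sinks within the window and argue that the non-sinking side is simultaneously kept under control by the congruence and root-locality facts; I would lean on Observation~\ref{prop:prefixchange} to confine the evolution of the bounded-depth prefix and on the finiteness of $\calN$ and the rules to cap the depth-size increase.
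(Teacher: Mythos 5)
Your overall plan --- no balancing opportunity forces the labelled pairs along $\alpha$ to range over a finite set, and pigeonhole then yields a repeat --- matches the paper's. The gap is in the middle step, and you flag it yourself as ``the delicate point'' without resolving it: the claim that the presentation (or depth-) sizes of the pairs $(T_k,U_k)$ ``stay uniformly bounded'' because ``$\boundinc$ bounds the growth per step'' does not follow. A bound on the growth per step, applied along the infinite $\alpha$, gives no uniform bound at all; something must \emph{reset} the term, and your argument never identifies what that is. The missing observation is structural rather than metric: when a term $V$ sinks, the state reached is, by Definition~\ref{def:exposing}, a depth-$1$ \emph{subterm of $V$}. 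Since the absence of a next $\ell$-segment forces the left-hand term to sink within every window of length $M_0$, one gets by induction that the left-hand term returns, at least every $M_0$ steps, to a subterm of the \emph{original} $T$ (a subterm of a subterm of $T$ is a subterm of $T$, so the anchor set never grows). As $T$ is regular it has only finitely many subterms, and by determinism only finitely many terms are reachable from this fixed finite set by words of length $<M_0$; hence the left-hand terms along $\alpha$ range over a finite set, and likewise on the right. Without the ``sinking lands on a subterm of the window's starting term, hence on a subterm of $T$'' step, your finiteness claim is unsupported.

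A secondary problem: you weaken the hypothesis to ``every window is interrupted by a sink on \emph{at least one} side'' and then propose to ``follow the shorter-sinking side'', which is precisely what creates your worry about the non-sinking side drifting upward. The hypothesis actually gives more: no next $\ell$-segment means the \emph{left} term sinks in every $M_0$-window, and no next $r$-segment means the \emph{right} term does too, each independently of the other. There is no need to choose a side per window; the two sides are controlled separately, which is exactly what lets the pigeonhole argument on the pairs $(T_j,U_j)$ go through.
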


\begin{proof}
Consider performing $\alpha'=u\lquot \alpha=a_1a_2a_3\dots$ from $T$;
let $T=T_1\gt{a_1}T_2\gt{a_2}T_3\gt{a_3}\cdots$.
If there were a (first) segment $T_i\gt{a_{i}}T_{i+1}\gt{a_{i+1}}\dots
\gt{a_{i+M_0-1}}T_{i+M_0}$ where $T_i$ is a subterm of $T$ but none of 
$T_{i+1}, T_{i+2}, \dots, T_{i+M_0}$ is a subterm of $T$ (of $T_i$, in
fact) then we had an $\ell$-balancing segment. Hence each $T_i$ is
reachable from a subterm of $T$ by a word of length $<M_0$, which
means that there are only finitely many different $T_i$. Similarly for
$U_i$ on the rhs. This guarantees a repeat.
\qed
\end{proof}

\begin{prop}~\label{prop:ghasstairbase}\hfill
\\
Any det-first-order grammar $\calG$ in normal form
has
a stair-base (of some width).
\end{prop}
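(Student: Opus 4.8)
The plan is to exhibit a stair-base of width $n$, where $n$ is the uniform bound on the number of tails in a bal-result supplied by Observation~\ref{prop:balbounds}(3), together with a linear function $g$ (with $g(0)>\pressize(x_1,x_1)$), both depending only on $\calG$. Fix $T_0\sim U_0$ and an infinite trace $\alpha\in\omtraces(T_0)$; the task is to show that either $\alpha$ is $g(0)$-bounded or it displays the stair pattern over a single fixed tuple of $n$ tails.

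First I would dispose of the degenerate case. Starting from the axiom $\varepsilon\models(T_0,U_0)$ and following $\alpha$, if at some labelled prefix $u\models(T,U)$ there is neither a next $\ell$-segment nor a next $r$-segment, then the preceding proposition yields a repeat along $\alpha$; by Definition~\ref{def:repeat}, using $(x_1,x_1)\in\calB$, this makes $\alpha$ a $\pressize(x_1,x_1)$-bounded (hence $g(0)$-bounded) trace, and we are finished.

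In the remaining case a next balancing segment always exists, and I would build the increasing prefixes $w_0<w_1<\dots$ by repeatedly taking the closer of the next $\ell$- and next $r$-segments and applying Proposition~\ref{prop:balancelabel}. Each application produces a labelled pair $(E_j(T_1,\dots,T_n),F_j(T_1,\dots,T_n))$ whose heads have depth-size at most $M_0+\boundinc(M_0)$ (Observation~\ref{prop:balbounds}(2)) and whose tails are the depth-$M_0$ subterms of the pivot used at that step. The crux is to force all these labels to use \emph{one and the same} tuple $(T_1,\dots,T_n)$, letting only the heads grow, and only at the linear rate $g(j)$.

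The main obstacle is precisely this stabilization of the tails: a priori each balancing inherits its tails from the current pivot, whose depth-$M_0$ subterms may drift as the pivot evolves along $\alpha$. The decisive tool is Observation~\ref{prop:balbounds}(4): right after an $\ell$-balancing, any sinking of the balanced (left) side re-exposes a subterm $F_i(T_1,\dots,T_n)$ --- a bounded head over the \emph{same} tails --- so the balanced side can never descend below the fixed tuple. Using this together with its $r$-side symmetric counterpart, the plan is to keep both components of the pair permanently in the form ``bounded head over a fixed $(T_1,\dots,T_n)$'', always balancing on the side whose deep structure is preserved and switching side exactly when the pivot would otherwise sink into a tail. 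Making this choice globally consistent --- so that a single tuple serves every checkpoint while each balancing adds only a bounded head-increment (Observations~\ref{prop:binc} and~\ref{prop:balbounds}(2)), keeping $\pressize(E_j,F_j)<g(j)$ --- is the technical heart of the proof. Once it is secured, padding the tuple to length exactly $n$ delivers the required width-$n$ stair-base, uniformly in $T_0,U_0,\alpha$.
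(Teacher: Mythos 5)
Your proposal correctly sets up the frame of the paper's argument: the degenerate case (no next balancing segment on either side, hence a repeat, hence $g(0)$-boundedness via $(x_1,x_1)$) is handled exactly as in the paper, and you rightly identify that everything hinges on making all the bal-results $(E_j(T_1,\dots,T_n),F_j(T_1,\dots,T_n))$ share \emph{one} fixed tuple of tails. But that is precisely the point at which the proposal stops being a proof: you write that making this choice globally consistent ``is the technical heart of the proof'' and then assume it ``is secured''. No argument is given for where the single tuple $(T_1,\dots,T_n)$ comes from, and the tool you nominate for the job does not deliver it. Observation~\ref{prop:balbounds}(4) does \emph{not} say that the balanced side ``can never descend below the fixed tuple''; the tails of a bal-result are inherited from the \emph{current} pivot and change from one balancing step to the next. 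What Observation~\ref{prop:balbounds}(4) actually gives is that when the balanced side sinks to depth $1+\boundinc(M_0)$, the exposed subterm is reachable from the pivot by some short word $w(X,i)$ --- its role is to keep consecutive pivots close to each other along the pivot-path, not to freeze the tails.

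The missing chain of ideas in the paper is: (i) the specific switching strategy (continue on the same side if a next segment exists within distance $M_1$, otherwise follow the sinking side down to depth $1+\boundinc(M_0)$ and rebase) yields a pivot-path $B_1\gt{w_1}B_2\gt{w_2}\cdots$ with the property that every length-$M_2$ segment of it either contains a pivot or sinks; (ii) therefore, if the pivot-path exposed subterms of $B_1$ at \emph{all} depths, infinitely many pivots would coincide (each being reachable from a subterm of $B_1$ by a word of length $\le M_2$), forcing a repeat --- contradiction; (iii) hence there is a deepest exposed subterm $V_1$ of $B_1$ that never sinks thereafter, and the tails of its $M_0$-prefix form $V_1=P^{V_1}_{M_0}(T_1,\dots,T_n)$ are the fixed tuple, above which all later pivots are finite heads of linearly growing depth-size. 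Without (i)--(iii) the claim that a single tuple ``serves every checkpoint'' is unsupported. (A minor additional inaccuracy: $g$ need not be linear --- the depth-size of the heads grows linearly in $j$, but their presentation size may be exponential in that depth; fortunately the definition of a stair-base only requires some $g$.)
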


\begin{proof}
We assume a det-first-order grammar $\calG=(\calN,\act,\calR)$ in
normal form,
a pair $T_0\sim U_0$ and a fixed $\alpha\in\omtraces(T_0)$; 
we further write
$\models$ instead of $\models_{(T_0,U_0)}$. 
Assuming that $\alpha$ has no repeat, we show that it has
a stair-base of
width $n$, with function $g$, where $n,g$ are independent of
$T_0,U_0,\alpha$.

We will present $\alpha$ as $u_1v_1u_2v_2u_3v_3\dots$ where $|v_i|=M_0$,
attaching to each $v_i$ a triple $(side_i,T_i,U_i)$ and a pair 
$(T'_i,U'_i)$
such that
$side_i\in\{\ell,r\}$, $(T_i,U_i,v_i)$ is a $side_i$-balancing
segment,  $(T'_i,U'_i)=\sideibalres(T_i,U_i,v_i)$,
$(T_0,U_0)\gt{u_1}(T_1,U_1)$ and
$(T'_i,U'_i)\gt{u_{i+1}}(T_{i+1},U_{i+1})$ for $i=1,2,3,\dots$.
Hence  $u_1v_1\dots u_{i-1}v_{i-1}u_i\models (T_i,U_i)$ and
 $u_1v_1\dots u_{i}v_{i}\models (T'_i,U'_i)$.
We note that each $v_i$ has the corresponding pivot
$B_i$, i.e. one of $T_i, U_i$, depending on $side_i$.

$(T_1,U_1,v_1)$ is defined as the next $\ell$-segment
or the next $r$-segment for $\varepsilon\models (T_0,U_0)$; if both
exist, the one with the 
smaller distance is chosen
(recall Definition~\ref{def:nextbal}), and  we prefer
$\ell$, say, to break ties. This also induces $u_1$ 
(where $(T_0,U_0)\gt{u_1}(T_1,U_1)$). 

Suppose $u_1v_1\dots u_iv_i$ 
have been
defined, and assume that $(T_i,U_i,v_i)$ is an $\ell$-segment 
(so $B_i=U_i$ is an $r$-pivot; the other case is symmetrical). 
If for $u_1v_1\dots u_{i}v_{i}\models (T'_i,U'_i)$
there is the next $\ell$-segment
with the distance at most
\begin{equation}\label{eq:Mone}
M_1=(1+\boundinc(M_0))\cdot M_0
\end{equation}
then we use 
this segment to define $u_{i+1}v_{i+1}$ etc.; there was no switch, we
have $side_{i+1}=side_i=\ell$.
If there is no such ``close'' 
$\ell$-segment (since the $\ell$-side terms keep sinking),
we note
that a subterm of $T'_i$ in depth $(1+ \boundinc(M_0))$ has been
exposed by $w$ where $|w|=M_1$ and 
$u_1v_1\dots u_{i}v_{i}w$ is a prefix of $\alpha$;
let $(T_i',U_i')\gt{w}(T_i'',U_i'')$. 
Point 4. in Observation~\ref{prop:balbounds} implies that 
$T_i''$ is reachable from $B_i$, by a word arising from $v_iw$ by
replacing a prefix $v_iw'$ with some $w(X,j)$.
Here we define 
$(T_{i+1},U_{i+1},v_{i+1})$ as the 
next $\ell$- or $r$-segment
with the least distance for 
$u_1v_1\dots u_{i}v_{i}w\models (T_i'',U_i'')$ (so $w$ is a
prefix of $u_{i+1}$).
This might, but also might not, mean a switch of the pivot side.

Anyway,
$B_{i+1}$ is reachable from $B_i$ by $w_i$ where either $w_i=v_iu_{i+1}$ 
or $w_i$ arises from $v_iu_{i+1}$ by replacing a prefix
$v_iw'$
of length $\leq M_0+M_1$
by a (shorter) word $w(X,j)$.
We thus get a \emph{pivot-path}
$$B_1\gt{w_1} B_2\gt{w_2}B_3\gt{w_3}\cdots$$
We note that if some $w_i$ is longer than $M_0+M_1$, $w_i=w'_iw''_i$
where $|w'_i|=M_0+M_1$, then the ``pivot-path sinks''
in any segment of $w''_i$ of length $M_0$, i.e.:
for any partition $w''_i=w''_{i1}vw''_{i2}$, $|v|=M_0$, we have 
$B_i\gt{w'_iw''_{i1}}W_1\gt{v}W_2\gt{w''_{i2}}B_{i+1}$ where 
$W_1$ sinks by $v$.

This implies for any 
$B_1\gt{u}V\gt{\beta'}$ where $u\beta'=\beta=w_1w_2w_3\dots$
that there is a nonempty prefix $u'$ of $\beta'$ of length at most
\begin{equation}\label{eq:Mtwo}
M_2= (M_0+M_1) + (1+\boundinc(M_0+M_1))\cdot M_0
\end{equation}
such that
$V\gt{u'}B_i$ (for some $i$) or
$V$ sinks by $u'$. (Informally: any segment $V\gt{w}$
of the pivot path $B_1\gt{\beta}$ with length $|w|=M_2$
either contains a pivot $B_i$ or sinks.)
Hence if $\beta$ exposes subterms of $B_1$ in all depths
then infinitely many pivots $B_i$ are
equal (since reachable from subterms of $B_1$ by words of length
$\leq M_2$);
Point 1. in Observation~\ref{prop:balbounds}  shows 
that $\alpha$ would then have a repeat.

So there is the maximal depth $d$ such that $\beta=w_1w_2w_3\dots$ exposes
a (unique) subterm $V_1$ of $B_1$ in depth $d$; hence
$B_1\gt{u}V_1\gt{\beta'}$ where $u\beta'=\beta$ and 
$V_1$ does not sink by $\beta'$.
Let  $V_1=P_{M_0}^{V_1}(T_1,\dots,T_n)$ be the $M_0$-prefix form of
$V_1$, and let $k\in\Nat$ be the least such that 
$B_1\gt{u}V_1\gt{u'}B_k\gt{w_k}B_{k+1}\gt{w_{k+1}}\cdots$
($u'$ being a prefix of $\beta'$).

Then pivots $B_{k+j}$, $j=0,1,2,\dots$, are of the form 
 $G_{j}(T_1,\dots,T_n)$ where $G_j$ are finite terms in which
 each occurrence 
 of a variable has depth $M_0$
 at least. Moreover, $\depthsize(G_j)\leq M_0+\boundinc(M_2)\cdot
 (j+1)$ (by the above ``contains a pivot or sinks'' fact).
 
 Hence
the bal-results $(T'_{k+j},U'_{k+j})$ for $B_{k+j}$, $j=0,1,2,\dots$
are of the form
$(E_j(T_1,\dots,T_n),F_j(T_1,\dots,T_n))$ 
where $E_j,F_j$ are finite terms with the depth-size bounded by
$g'(j)$ for some $g'$ determined by
the grammar $\calG$ (recall Definition~\ref{def:balanc}, Point 2. in
Observation~\ref{prop:balbounds}, and the fact that 
$M_0, M_1, M_2, \boundinc$ are determined by
$\calG$).
There is thus $g:\Nat\rightarrow\Nat$
(independent of $T_0,U_0,\alpha$) such that 
$\pressize(E_j,F_j)<g(j)$, for $j=0,1,2,\dots$. 

Point 3. in
Observation~\ref{prop:balbounds} thus
implies that $\calG$ has a
stair-base (of some width).
\qed
\end{proof}

\noindent
In fact, we have thus shown the next completeness
lemma, and the main theorem.

\begin{lemma}\label{lem:completeness1} (Completeness)
\\
For each det-first-order grammar $\calG$ in normal form
there is a sufficient (sound) basis $\base$.
\end{lemma}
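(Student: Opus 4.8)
The plan is to read off the Completeness Lemma as an immediate corollary of the three structural results just proved, assembled so as to collapse the width of the stair-base to $0$ and expose a single finite basis that serves all initial pairs uniformly. First I would invoke Proposition~\ref{prop:ghasstairbase}, which supplies, for the given $\calG$ in normal form, a stair-base of some width $n$ witnessed by a function $g:\Nat\rightarrow\Nat$ depending only on $\calG$ (not on $T_0,U_0$ or $\alpha$). This is where all the real work lives, but I may take it as given.

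Next I would apply Proposition~\ref{prop:secondsuf} to descend on the width: a stair-base of any width forces $s$-boundedness for some $s\in\Nat$. Each round of the descent takes the non-sound head pair $(E_0,F_0)$ (whose ground instance is still equivalent by Proposition~\ref{prop:indsound}, Point~1.), exposes an equation $x_n\doteq H(x_1,\dots,x_n)$ for it via Proposition~\ref{prop:exposing}, eliminates the variable $x_n$ by Limit subterm replacement, and absorbs the bounded growth in presentation size into an updated bound function; after $n$ rounds the width is $0$, which by Proposition~\ref{prop:suffi} is exactly $s$-boundedness. Then I would make the basis explicit, following the construction inside Proposition~\ref{prop:suffi}: take $\calB$ to be the set of \emph{all} sound pairs $(E,F)$ with $\pressize(E,F)\leq s$. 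This is finite by Observation~\ref{prop:boundpressize}, sound by construction, and contains $(x_1,x_1)$. For sufficiency I would appeal to Proposition~\ref{prop:infininter}: given $T_0\sim U_0$ and $\alpha\in\omtraces(T_0)$, $s$-boundedness hands me a nonempty prefix $u$ of $\alpha$ with $u\models_{(T_0,U_0)}(E(T_1,\dots,T_n),F(T_1,\dots,T_n))$ for a sound $(E,F)$ with $\pressize(E,F)\leq s$; thus $(E,F)\in\calB$, the labelled ground pair lies in $\instances(\calB)$, and since $u\neq\varepsilon$ the Basis rule yields $u\models_{(T_0,U_0)}\success$. Proposition~\ref{prop:infininter} then lifts this to $\varepsilon\models_{(T_0,U_0)}\success$, so $\calB$ is a sufficient basis.

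The only genuine difficulty is hidden inside Proposition~\ref{prop:ghasstairbase}, namely the balancing/pivot-path argument producing a \emph{uniform} stair-base; granting that, the lemma is a short piece of bookkeeping. The one point I would be careful about is precisely this uniformity: I must check that $n$, $g$, and hence the resulting $s$ depend on $\calG$ alone, so that the single finite $\calB$ just defined works for every pair $T_0\sim U_0$ at once — which is exactly the quantifier structure demanded by the definition of a sufficient basis.
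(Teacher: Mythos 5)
Your proposal is correct and follows exactly the paper's own route: the paper states the Completeness Lemma as the immediate consequence of Proposition~\ref{prop:ghasstairbase} (existence of a stair-base of some width) combined with Propositions~\ref{prop:secondsuf}, \ref{prop:suffi} and \ref{prop:infininter}, which is precisely the chain you assemble, including taking $\calB$ to be all sound pairs of presentation size at most $s$. Your closing remark about the uniformity of $n$, $g$ and $s$ in $\calG$ alone is indeed the one point that needs checking, and it is exactly what the paper's proof of Proposition~\ref{prop:ghasstairbase} establishes.
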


\begin{theorem}\label{th:tracedecid}
Trace
equivalence for deterministic
first-order grammars 
is decidable. 
\end{theorem}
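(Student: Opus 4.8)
The plan is to derive decidability by showing that trace equivalence is simultaneously semidecidable and co-semidecidable, and then running the two procedures in parallel (dovetailing), so that whichever halts announces the answer. First I would pass to normal form: given an arbitrary deterministic first-order grammar and ground terms $T_0,U_0$, Proposition~\ref{prop:transinnormal} effectively yields a normal-form grammar $\calG'$ together with an effective map $\transf$ satisfying $T\sim\transf(T)$, so that $T_0\sim U_0$ iff $\transf(T_0)\sim\transf(U_0)$. It therefore suffices to decide equivalence for a grammar in normal form, which I assume henceforth; this assumption is genuinely needed because the completeness result (Lemma~\ref{lem:completeness1}) is stated for normal-form grammars.

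For the co-semidecidable (``not equivalent'') side I would invoke the Corollary asserting that $T_0\not\sim U_0$ iff $\varepsilon\models\fail$. The point worth stressing is that a derivation of $\varepsilon\models\fail$ uses only the Axiom together with the primary rules (Basic transition, Limit subterm replacement, Symmetry) and the Rejection rule, none of which consults the basis $\calB$. Hence an exhaustive search for such a derivation is a semidecision procedure for $T_0\not\sim U_0$ that is completely independent of any choice of basis.

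For the semidecidable (``equivalent'') side I would enumerate all finite candidate bases $\calB$ containing $(x_1,x_1)$, and for each of them run the semidecision procedure for condition~(\ref{eq:sound}) supplied by the lemma immediately preceding Section~\ref{subsec:completeness}. Should~(\ref{eq:sound}) ever succeed for some $\calB$, that lemma certifies at once that $\calB$ is sound and that $T_0\sim U_0$, so the procedure may safely report ``equivalent''; this gives correctness of positive answers. Termination in the equivalent case is precisely what completeness delivers: when $T_0\sim U_0$, Lemma~\ref{lem:completeness1} furnishes a fixed sufficient sound basis $\calB^{*}$, for which~(\ref{eq:sound}) is met. Indeed, the initial pair succeeds by sufficiency of $\calB^{*}$, and the finitely many critical instances in $\worstinstances(\calB^{*})$ succeed as well: each carries the least eq-level among the ground instances of its pair (Proposition~\ref{prop:worstinstance}), so soundness of $\calB^{*}$ forces them to be equivalent, whereupon sufficiency again yields $\success$. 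Thus the semidecision for~(\ref{eq:sound}) on $\calB^{*}$ halts, and since the enumeration eventually reaches $\calB^{*}$ (or some earlier working basis), the positive procedure is guaranteed to terminate.

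Dovetailing the $\fail$-search with the basis-enumeration, exactly one of the two eventually halts, thereby deciding $T_0\stackrel{?}{\sim}U_0$. I expect no genuine obstacle at the level of the theorem itself: the substantive work has already been discharged in the supporting results, above all the existence of a fixed sufficient sound basis for every normal-form grammar (Lemma~\ref{lem:completeness1}, which rests on the stair-base analysis of Proposition~\ref{prop:ghasstairbase}). The only points requiring a little care here are that the $\fail$-derivation is basis-independent, so that the negative semidecision is well defined, and that the reduction from all ground instances to the finitely many critical instances is legitimate, which is exactly what Proposition~\ref{prop:worstinstance} together with the soundness lemma provide.
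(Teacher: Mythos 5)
Your proposal is correct and takes essentially the same approach as the paper: the paper's own justification of Theorem~\ref{th:tracedecid} is precisely the dovetailing of the basis-independent semidecision for $\varepsilon\models\fail$ (equivalently, for ``$T_0\not\sim U_0$'') with an enumeration of finite bases checked against the semidecidable condition~(\ref{eq:sound}), termination on the positive side being guaranteed by the sufficient basis supplied by Lemma~\ref{lem:completeness1}. The details you spell out -- the normal-form reduction via Proposition~\ref{prop:transinnormal} and the restriction to the critical instances via Proposition~\ref{prop:worstinstance} -- are exactly the points the paper leaves implicit in its one-paragraph argument.
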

For deciding $T_0\stackrel{?}{\sim} U_0$\,, 
an algorithm based on soundness and completeness is clear
(using the effective manipulations with graph 
presentations of regular terms): when we
are allowed to generate any finite basis for a given initial pair
$T_0,U_0$ then both questions ``$T_0\not\sim U_0$~?'', 
``$T_0\sim U_0$~?'' are semidecidable; when verifying $T_0\sim U_0$,
we have to verify all (critical instances of) pairs included in the basis as well.

\emph{Remark.} By inspecting the proofs we could note that a sufficient
basis for a det-first-order grammar (in normal form) is, in fact,
computable (since we now know that
the value $s$ determined by $\calG$ and $g(0)$ in
the proof of Proposition~\ref{prop:secondsuf} is computable) 
but this
computability does not seem much helpful.

\subsection*{Conclusions}

The presented proof of the decidability of trace equivalence for
det-first-order grammars routinely applies 
to the dpda language equivalence, as also shown in Appendix 1. 
The novelty here is the presentation in the framework of first order terms,
resulting in a proof which seems technically simpler than the previous
ones.

Appendix 2. gives another look at the complexity result by 
Stirling~\cite{Stir:DPDA:prim}, showing that the
framework of first-order terms can be useful there as well.

\bibliographystyle{plain}
\bibliography{root}

\noindent
\textbf{Appendix 1.}

\section{DPDA 
language
equivalence problem presented via
trace equivalence for det-first-order grammars}

A \emph{deterministic pushdown automaton} (\emph{dpda})
is a tuple 
$M=(Q,\act,\Gamma,\Delta)$ consisting of
finite sets $Q$ of (control) states, 
$\act$ of actions (or terminals),
$\Gamma$ of stack symbols, and $\Delta$ of transition rules.
For each pair $pA$, $p\in Q$, $A \in \Gamma$, and each 
$a\in\act\cup\{\varepsilon\}$, $\Delta$ contains at most one rule
of the type  $pA \gt{a}q\alpha$, where $q\in Q$, $\alpha\in\Gamma^*$.
Moreover, any pair $pA$ is (exclusively) either \emph{stable}, 
i.e. 
having no rule 
$pA\gt{\varepsilon}q\alpha$, or
\emph{unstable}, in which case there
is (one rule $pA\gt{\varepsilon}q\alpha$ and) no rule
$pA\gt{a}q\alpha$ with $a\in\act$.

A dpda $M$ generates a labelled transition system
$(Q\times \Gamma^*,\act\cup\{\varepsilon\}, 
\{\gt{a}\}_{a\in\act\cup\{\varepsilon\}})$
where the states are
configurations  $q\alpha$ ($q\in Q$, $\alpha\in\Gamma^*$).
Having our grammars in mind, we view a rule $pA\gt{a}q\alpha$
as $pAx\gt{a}q\alpha x$ (for a formal variable $x$), inducing
$pA\beta \gt{a}q\alpha\beta$ for every $\beta\in\Gamma^*$. 
The transition relation is extended to words $w\in\act^*$ 
as usual; we note that $p \alpha \gt{w} q \beta$ 
can comprise more than  $|w|$ basic steps, due
to possible ``silent'' $\varepsilon$-moves.
Each configuration $p \alpha$
has its associated \emph{language}
$L(p \alpha)=\{w\in\act^*\mid p \alpha \gt{w}q \varepsilon$
for some $q \}$.
The \emph{dpda language equivalence problem} is:  
given a dpda $M$ and two configurations
$p \alpha$, $q \beta$,
is $L(p \alpha)=L(q\beta)$~?

\emph{Remark.}
It is straightforward to observe that this setting is
equivalent to the classical problem of language equivalence 
between deterministic pushdown automata with 
accepting states.
First, the disjoint
union of two dpda's is a dpda. Second, for languages
$L_1,L_2\subseteq\Sigma^*$ we have 
$L_1=L_2$ iff
$L_1\cdot\{\$\}=L_2\cdot\{\$\}$, for an endmarker $\$\not\in\Sigma$;
so restricting to prefix-free deterministic context-free languages,
accepted by dpda via empty stack,
does not mean losing generality.

Each dpda $M$ can be transformed by a standard polynomial-time
algorithm so that
all $\varepsilon$-transitions  are popping, i.e., of the type
$pA\gt{\varepsilon}q$, while  
$L(pA\alpha)$, for stable $pA$, keep unchanged.
(A principal point is that 
a rule $pA\gt{\varepsilon}qB\alpha$ where $qB\gt{a_1}q_1\beta_1$,
$\dots$,  $qB\gt{a_k}q_k\beta_k$ can be
replaced with rules $pA\gt{a_j}q_j\beta_j\alpha$; unstable pairs
$pA$ enabling only an infinite sequence of $\varepsilon$-steps are
determined and removed.) 

It is also harmless to assume that for each stable $pA$ and each
$a\in\act$ we have one rule $pA\gt{a}q\alpha$ 
(since we can introduce a `dead' state $q_d$ with rules $q_dA\gt{a}q_dA$
for all $A\in \Gamma, a\in\act$, and for every 
`missing' rule $pA\gt{a}..$ we add  $pA\gt{a}q_dA$). 
$L(p\alpha)$ are unchanged by this transformation.
Then  $w\in \act^*$ is not enabled in $p\alpha$ iff $w=uv$ where
$p\alpha\gt{u}q\varepsilon$ (for some $q$), so $u\in L(p\alpha)$, and
$v\neq\varepsilon$.
This reduces language equivalence to trace equivalence:
\begin{center}
$L(p\alpha)=L(q\beta)$ \ iff
\  $\forall w\in\act^*:p\alpha\gt{w}\,\Leftrightarrow\,q\beta\gt{w}$.
\end{center}

\begin{prop} \label{dpdaprop}
The dpda language equivalence problem 
is polynomial-time reducible 
to the deterministic first-order grammar 
equivalence problem.
\end{prop}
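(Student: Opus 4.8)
The plan is to encode each dpda configuration as a ground first-order term in continuation-passing style, where the arguments of a nonterminal record, for every control state, how the computation is to proceed once the current top stack symbol is eventually popped. Fix $Q=\{q_1,\dots,q_k\}$. For every \emph{stable} pair $pA$ ($p\in Q$, $A\in\Gamma$) introduce a nonterminal $\langle pA\rangle$ of arity $k$, whose $j$-th argument stands for the continuation to be used when $A$ is popped in state $q_j$; add one nullary nonterminal $\bot$ with no rule (so $\traces(\bot)=\{\varepsilon\}$), modelling the empty stack. Writing $\vec x=x_1,\dots,x_k$, I define a translation $\mathcal{E}_{q,\gamma}(\vec x)$ of ``state $q$ with a stack segment $\gamma\in\Gamma^*$ sitting on top of the continuations $\vec x$'' by three clauses: $\mathcal{E}_{q_j,\varepsilon}(\vec x)=x_j$; $\mathcal{E}_{q,B\gamma'}(\vec x)=\mathcal{E}_{q',\gamma'}(\vec x)$ when $qB$ is unstable with $qB\gt{\varepsilon}q'$; and $\mathcal{E}_{q,B\gamma'}(\vec x)=\langle qB\rangle\big(\mathcal{E}_{q_1,\gamma'}(\vec x),\dots,\mathcal{E}_{q_k,\gamma'}(\vec x)\big)$ when $qB$ is stable. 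The middle clause terminates because, after the preprocessing, all $\varepsilon$-moves are popping. Each dpda rule $pA\gt{a}q\gamma$ with $pA$ stable then becomes the grammar rule $\langle pA\rangle x_1\cdots x_k\gt{a}\mathcal{E}_{q,\gamma}(\vec x)$, and a configuration $p\alpha$ is translated to the ground term $\mathcal{E}_{p,\alpha}(\bot,\dots,\bot)$ (note that this term already ``absorbs'' any unstable top, and that $p\varepsilon$ maps to $\bot$).

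The heart of the correctness argument is a \emph{composition law}: for all $r$, and all stack segments $\gamma,\beta$, one has $\mathcal{E}_{r,\gamma}\big(\mathcal{E}_{q_1,\beta}(\vec y),\dots,\mathcal{E}_{q_k,\beta}(\vec y)\big)=\mathcal{E}_{r,\gamma\beta}(\vec y)$, which I would prove by a straightforward induction on $|\gamma|$, matching the three clauses of the definition. Writing $T_{q\beta}=\mathcal{E}_{q,\beta}(\bot,\dots,\bot)$, the law specialises to $\mathcal{E}_{q,\gamma}\big(T_{q_1\beta},\dots,T_{q_k\beta}\big)=T_{q\gamma\beta}$. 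Using this, I would show that for every stable configuration $pA\beta$ the single dpda step $pA\beta\gt{a}q\gamma\beta$ (from the unique rule $pA\gt{a}q\gamma$) is mirrored, after the subsequent popping $\varepsilon$-moves are folded into $\mathcal{E}$, by the single grammar step $T_{pA\beta}=\langle pA\rangle(T_{q_1\beta},\dots,T_{q_k\beta})\gt{a}T_{q\gamma\beta}$, and conversely every grammar step out of $T_{pA\beta}$ arises this way (the rules at root $\langle pA\rangle$ are exactly the $a$-rules of $pA$). Since $T_{q\beta}$ coincides with the term of the $\varepsilon$-normal form of $q\beta$, the translation is insensitive to silent moves, and a routine induction on $|w|$ yields $\traces(p\alpha)=\traces\big(\mathcal{E}_{p,\alpha}(\bot,\dots,\bot)\big)$. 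Combined with the reduction of language equivalence to trace equivalence of configurations established just above the statement, this gives $L(p\alpha)=L(p'\alpha')$ iff $\mathcal{E}_{p,\alpha}(\bot,\dots,\bot)\sim\mathcal{E}_{p',\alpha'}(\bot,\dots,\bot)$.

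It remains to check determinism and the resource bounds. The produced grammar is deterministic, since each stable pair $pA$ and action $a$ yield at most one rule; by the preprocessing each stable $pA$ has exactly one $a$-rule for every $a$ (via the dead state), so the output lies in the class of Theorem~\ref{th:tracedecid}. There are $O(|Q|\,|\Gamma|)$ nonterminals, each of arity $|Q|$, and $O(|Q|\,|\Gamma|\,|\act|)$ rules. To keep rule right-hand sides of polynomial size I would first put $M$ into the standard normal form in which every rule pushes at most two symbols ($|\gamma|\le 2$), achievable in polynomial time; then each $\mathcal{E}_{q,\gamma}(\vec x)$ is a tree of depth $\le 2$ with $O(|Q|^2)$ nodes. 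The one point that needs care is the initial terms $\mathcal{E}_{p,\alpha}(\bot,\dots,\bot)$: as plain trees they are exponential in $|\alpha|$ owing to the $k$-fold branching, but as \emph{graph presentations} they are polynomial, because for a fixed suffix $\gamma'$ of $\alpha$ the $k$ terms $\mathcal{E}_{q_1,\gamma'},\dots,\mathcal{E}_{q_k,\gamma'}$ differ only in their root and share all their subterms, leaving only $O(|Q|\cdot|\alpha|)$ distinct subterms; the graph-presentation machinery of the paper is precisely what legitimises this sharing.

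I expect the main obstacle to be the correctness of the continuation encoding rather than the (routine) determinism and size bookkeeping: specifically, formulating and proving the composition law, and verifying that the compiled-away popping $\varepsilon$-moves are faithfully accounted for by the middle clause of $\mathcal{E}$, so that the $\varepsilon$-closure on the automaton side matches the (purely syntactic) normalisation performed by $\mathcal{E}$ on the term side.
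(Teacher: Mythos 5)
Your proposal is correct and is essentially the paper's own construction: your $\mathcal{E}_{q,\gamma}(\vec x)$ is exactly the paper's $\calT(q\gamma x)$ (arity-$|Q|$ nonterminals for stable pairs, a nullary $\bot$ for the empty stack, unstable pairs absorbed syntactically), and the polynomial bound is likewise obtained via shared graph presentations. The only cosmetic difference is that you bound the rule right-hand sides by first normalising the dpda to push at most two symbols, whereas the paper relies on the same DAG-sharing argument it uses for the initial configurations.
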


\begin{proof}
Assume an ($\varepsilon$-popping) dpda $M=(Q,\act,\Gamma,\Delta)$
transformed as above
(so trace equivalence coincides with language equivalence).
We define  the first-order grammar $\calG_M=(\calN,\act,\calR)$ where 
$\calN=\{pA\mid pA $ is \emph{stable}$\}\cup\{\bot\}$; 
each $X=pA$ gets arity $m=|Q|$, and 
$\bot$ is a special nullary nonterminal not
enabling any action.
A dpda configuration $p \alpha$ 
is transformed to the term 
$\calT(p \alpha)$ defined inductively by rules 1.,2.,3. below,
assuming $Q=\{q_1,q_2,\dots,q_m\}$. 
\begin{enumerate}
\item
$\calT(q\varepsilon)=\bot$. 
\item
If $qA\gt{\varepsilon}q_i$ ($qA$ is unstable)
then $\calT(qA\beta)=\calT(q_i\beta)$.
\item
If $qA$ is stable then 
$\calT(qA\beta)=X\,\calT(q_1\beta) \dots \calT(q_m\beta)$
where $X=qA$.
\item
$\calT(q_ix)=x_i$.
\end{enumerate}
Rule 4. is introduced to enable  the smooth
transformation of a dpda rule  $pA\gt{a}q\alpha$, where $a\in\act$, 
rewritten in the form
 $pAx\gt{a}q\alpha x$,  to 
the $\calG_M$-rule
$\calT(pAx)\gt{a} \calT(q\alpha x)$, i.e. to
$Xx_1\dots x_m\gt{a} \calT(q\alpha x)$, where $X=pA$.
Thus $\calR$ in $\calG_M$ is defined (with no $\varepsilon$-moves).
We observe easily: 
if $p A\alpha \gt{\varepsilon} q \alpha$ 
(recall that $\varepsilon$-steps are popping) then
$\calT(pA\alpha)=\calT(q \alpha)$; 
if $p A\alpha \gt{a} q \beta\alpha$ ($a\in\act$, $pA$ stable)
then
$\calT(pA\alpha)\gt{a}\calT(q \beta\alpha)$.
This also implies: if $p\alpha\gt{w}q\varepsilon$ then 
$\calT(p\alpha)\gt{w}\bot$. Thus 
\begin{center}
$L(p\alpha)=L(q\beta)$ 
\ iff
\ $\big(\forall
w\in\act^*:p\alpha\gt{w}\,\Leftrightarrow\,q\beta\gt{w}\big)$
\ iff
\ $\calT(p\alpha)\sim \calT(q\beta)$. 
\end{center}
We note that $\calT(q\alpha)$ can have (at most)
$1+m+m^2+m^3+\cdots +m^{|\alpha|}$ subterm-occurrences,
but the natural finite graph presentation of $\calT(q\alpha)$
has at most 
 $1+m(|\alpha|-1)+1$ nodes and can be obviously constructed in
 polynomial time.
\qed
\end{proof}

\noindent
\textbf{Appendix 2.}

\vspace{-4mm}

\section{ A complexity bound}
\label{sec:complexitybound}

We have, in fact, not fully used the potential of the
pivot-path
$B_1\gt{w_1} B_2\gt{w_2}B_3\gt{w_3}\cdots$ discussed in the proof of 
Proposition~\ref{prop:ghasstairbase}.
For showing the existence of a sufficient basis
it was
sufficient to use just the stair-base subterm $V_1$ of $B_1$. 
We will now explore the idea of the described balancing strategy
further, which allows to derive a concrete computable function bounding the
length of potential offending words (i.e., the eq-level) when given
(a det-first-order grammar $\calG$ and) an initial pair $(T_0,U_0)$.

For our aims 
(in the context of upper complexity bounds),
an \emph{elementary function}
is a function (of the type $\Nat^k\rightarrow \Nat$)
arising by a finite composition of
constant
functions,
the elementary operations $+, -,\times, \div$\,, 
and the exponential operator $\uparrow$, where 
$a\uparrow n=a^n$\,.
When we say that a number is \emph{simply bounded}, we mean that there
is an elementary function of the size of $\calG$ giving an upper
bound; e.g., the constants $M_0,M_1,M_2$, the number of tails in the
$M_0$-prefix form, the depth-size of the heads in any balancing
result, etc., are obviously simply bounded.

The ``first'' nonelementary (hyper)operator is 
\emph{iterated exponentiation} $\uparrow\uparrow$,
also called \emph{tetration}:
$a\uparrow\uparrow n= 
a\uparrow (a\uparrow (a\uparrow ( \dots a \uparrow a )\dots ))$
where $\uparrow$ is used $n$-times.

Our analysis will yield the following bound on the length of
offending words, which has an obvious algorithmic consequence:

\begin{theorem}\label{th:lengthbound}
For any triple $\calG$, $T_0,U_0$ 
with the size $\inputsize$ (of a standard presentation),
where $\calG$ is a det-first-order grammar and 
$(T_0,U_0)$ is an initial pair such that $T_0\not\sim U_0$,
there is a sequence of actions 
which is enabled in just one of
 $T_0, U_0$ and its length is bounded by
$2\uparrow\uparrow f(\inputsize)$, where $f$ is an elementary
function independent of $\calG$, $T_0, U_0$.
\end{theorem}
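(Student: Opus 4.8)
The plan is to trace through the entire decidability argument while keeping quantitative control of every constant that appeared only qualitatively before. The key insight is that the proof of Proposition~\ref{prop:ghasstairbase} constructs a pivot-path $B_1\gt{w_1}B_2\gt{w_2}B_3\cdots$ along any offending $\alpha$, and the constants $M_0,M_1,M_2$ governing its behaviour are all \emph{simply bounded} (elementary in the size of $\calG$). So rather than merely asserting the existence of a sufficient basis, I would extract an explicit bound on how long the balancing process can proceed before either a repeat occurs or a sound basis-instance is exposed. Concretely, I would first fix the input size $\inputsize$ and record that the number of tails $n$ in any $M_0$-prefix form, the depth-size of heads in any bal-result (bounded by $M_0+\boundinc(M_0)$), and the width of the stair-base are all simply bounded.

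The heart of the analysis is the ``stair'' mechanism from Proposition~\ref{prop:secondsuf}: each reduction of the stair-width from $n$ to $n{-}1$ replaces the function $g$ by $g'(j)=g(s+j)+f(s)$, where $s$ is determined by $\calG$ and $g(0)$. Since the width is bounded by a simply-bounded $n$, I would iterate this $n$ times, at each stage composing the previous bound into the exponent of the next. It is precisely this repeated self-substitution of an exponential-type growth function, iterated a simply-bounded number of times, that produces a \emph{tetration}-style tower: each layer contributes one more $\uparrow$, and the number of layers is itself elementary in $\inputsize$. I would therefore track, stage by stage, the bound on $\pressize(E_j,F_j)$ and the bound on the eq-level of the critical instances, showing that after collapsing the width to $0$ (i.e.\ $s$-boundedness) the eq-level of any offending instance is at most $2\uparrow\uparrow f(\inputsize)$ for an elementary $f$ independent of $\calG,T_0,U_0$.

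The second ingredient is converting an eq-level bound into a length bound on a distinguishing word. By Observation~\ref{prop:basicoffwords}, Point~2, if $T_0\not\sim U_0$ then $\eqlevel(T_0,U_0)=|w|-1$ for any offending $w$, so a bound on the eq-level is exactly a bound on $|w|-1$; the conclusion then reads off immediately once the eq-level tower is in hand. To bound the eq-level I would argue contrapositively: if no repeat occurs within the pivot-path up to a tower-sized horizon, then infinitely many pivots $B_i$ coincide (by Observation~\ref{prop:balbounds}, Point~1, since each $B_{i+1}$ is reachable from $B_i$ by a word of length $\leq M_2$ with $M_2$ simply bounded), forcing a repeat and hence $u\models\success$ by Definition~\ref{def:repeat} --- contradicting that we follow an offending prefix with strictly dropping eq-level (Proposition~\ref{prop:basiceqleveldrop}, Point~2).

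The main obstacle, I expect, is making the width-reduction recurrence fully explicit and verifying that $n$ iterations of the map $g\mapsto g'$ genuinely yield tetration rather than a merely multiply-exponential bound, and that the constant $s$ used at each stage (the number of pairs of bounded presentation-size) is itself controlled by the \emph{current} $g(0)$ rather than spiralling. Because $s$ depends on $g(0)$ and $g$ grows at each stage, the recurrence is not a clean single-variable iteration; I would have to bound $s$ at stage $i$ by an elementary function of the stage-$i$ value $g_i(0)$, confirm that this keeps each individual layer at a single exponential, and then invoke that the total number of layers $n$ is simply bounded so that the overall height of the tower is $f(\inputsize)$ with $f$ elementary. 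The bookkeeping for the depth-size growth $\depthsize(G_j)\leq M_0+\boundinc(M_2)(j+1)$ along the pivot-path, and its translation into $\pressize$ via the graph-presentation size function, is the routine-but-delicate part that underlies the whole estimate.
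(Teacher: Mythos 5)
There is a genuine gap: you are quantifying the wrong argument. Making the width-reduction recurrence of Proposition~\ref{prop:secondsuf} explicit does yield a tower-sized bound on the presentation size $s$ for which $\calG$ is $s$-bounded, i.e.\ on the size of a sufficient basis --- but $s$-boundedness is a statement about \emph{equivalent} pairs $T_0\sim U_0$ and the existence of \emph{some} prefix (of unbounded length) hitting a sound pair. It gives no bound whatsoever on the eq-level of a non-equivalent pair, which is what Theorem~\ref{th:lengthbound} requires. The paper says this explicitly at the start of Appendix~2: the completeness argument used only the single stair-base subterm $V_1$ of $B_1$, and the complexity bound requires exploiting the pivot path much further. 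Your bridging step in the third paragraph does not close this gap: along an offending word a repeat is \emph{impossible} (two offending prefixes of different lengths cannot label the same pair, since the eq-level strictly drops), so ``forcing a repeat'' cannot terminate the pivot path; and the claim that each $B_{i+1}$ is reachable from $B_i$ by a word of length $\leq M_2$ is false in general --- the $M_2$ bound holds only for sinking segments, whereas the pivot path can climb indefinitely through stair sequences, producing ever-larger pivots that never coincide.

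What the paper actually does is follow the finite offending $\alpha$ with the balancing strategy, reduce $|\alpha|$ to an elementary function of the length $\ell$ of the longest \emph{stair sequence} (via the $size$ argument), and then bound $\ell$ by a completely different mechanism: the ``extended head pair'' machinery (Propositions~\ref{prop:newreplacement} and~\ref{prop:afixedpair}), the saturation notion with Corollary~\ref{cor:repeateqlevel} showing that two saturated maximal pairs with the same head and extensions cannot occur at distinct offending-prefix lengths, and the combinatorics of type-$n$ (Zimin) words (Propositions~\ref{prop:prefixyielding} and~\ref{prop:longword}). The tetration comes from the function $f_h$ with $f_h(n{+}1)=(1+f_h(n))\cdot h^{f_h(n)}$ in Proposition~\ref{prop:longword} --- a word over an alphabet of elementarily bounded size $h$ longer than $f_h(n{+}2)$ must contain two occurrences of a type-$(n{+}1)$ subword, which contradicts Corollary~\ref{cor:repeateqlevel} --- not from iterating the $g\mapsto g'$ recurrence of the completeness proof. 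None of this machinery appears in your proposal, so the central step of the theorem is missing.
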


\begin{cor}
Trace 
equivalence 
for deterministic first-order grammars
can be decided 
in time (and space) 
$O(2\uparrow\uparrow g(\inputsize))$
for an elementary function $g$.
\end{cor}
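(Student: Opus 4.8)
The plan is to derive the corollary directly from the length bound of Theorem~\ref{th:lengthbound} by a bounded exhaustive search. Write $N=2\uparrow\uparrow f(\inputsize)$ for the bound it supplies. The theorem says that whenever $T_0\not\sim U_0$ there is a word of length at most $N$ enabled in exactly one of $T_0,U_0$; equivalently, $T_0\sim U_0$ if and only if $T_0\sim_N U_0$. So deciding equivalence reduces to deciding the finite-horizon relation $\sim_N$, which I would test by traversing, in the deterministic $LTS_{\calG}$, the tree of pairs reachable from $(T_0,U_0)$ by words of length $\leq N$: at each reached pair $(T,U)$ I check whether $T$ and $U$ enable the same actions (i.e. $T\sim_1 U$), reporting inequivalence as soon as a mismatch is found and reporting equivalence if none is found up to depth $N$.

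Next I would bound the resources. Since $LTS_{\calG}$ is deterministic, each word $w$ leads from $(T_0,U_0)$ to at most one pair, so the search tree has branching $\leq|\act|$ and depth $N$, giving at most $|\act|^{N+1}$ pairs in total; a depth-first traversal needs to store only one root-to-leaf path at a time. For the size of each intermediate term I would use Observation~\ref{prop:binc}: a single transition replaces the root by (a substitution instance of) a fixed finite right-hand side, so in a graph presentation it adds only a bounded number of nodes pointing into the already-present tails. Hence after $k\leq N$ steps the presentation size of each component is at most $\pressize(T_0,U_0)+k\cdot R$ for a constant $R$ determined by $\calR$, i.e. $O(N)$; computing the successors under each action and testing $\sim_1$ at a pair is effective and costs time polynomial in the presentation size (by the effective manipulations with graph presentations noted earlier), hence $\mathrm{poly}(N)$ per node.

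Finally I would collapse these counts into the stated tetration bound. The total running time is $O(|\act|^{N+1}\cdot\mathrm{poly}(N))$ and the space is $O(N^2)$ (a path of $N$ pairs, each of presentation size $O(N)$). The only non-elementary factor is $|\act|^{N+1}$, and here the key arithmetic fact is that tetration absorbs one further exponentiation by a constant increase of its height: since $|\act|$ and the polynomial factors are bounded by elementary functions of $\inputsize$, we have $|\act|^{N+1}\cdot\mathrm{poly}(N)\leq 2^{c\cdot N}$ for an elementary $c=c(\inputsize)$, and $2^{c\cdot N}=2^{c\cdot(2\uparrow\uparrow f(\inputsize))}\leq 2^{2\uparrow\uparrow(f(\inputsize)+1)}=2\uparrow\uparrow(f(\inputsize)+2)$, using $c\cdot 2\uparrow\uparrow f\leq 2\uparrow\uparrow(f+1)$ (which holds because $2\uparrow\uparrow f(\inputsize)$ dominates any elementary function of $\inputsize$). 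Thus taking $g(\inputsize)=f(\inputsize)+2$, or a slightly larger elementary function to cover the polynomial overhead, yields the claimed time and space bound $O(2\uparrow\uparrow g(\inputsize))$. The main point to handle carefully is precisely this last collapsing step: one must check that the single-exponential breadth of the search, the per-node polynomial cost, and the $O(N)$-sized presentations all vanish into a constant bump of the tetration height rather than pushing the bound up a whole level, which is exactly what the inequality $c\cdot 2\uparrow\uparrow f\leq 2\uparrow\uparrow(f+1)$ secures.
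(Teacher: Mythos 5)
Your proposal is correct and takes essentially the same route as the paper: the paper states this corollary as the ``obvious algorithmic consequence'' of Theorem~\ref{th:lengthbound}, leaving the derivation implicit, and what you spell out---reducing $\sim$ to $\sim_N$ for $N=2\uparrow\uparrow f(\inputsize)$ and deciding $\sim_N$ by bounded search over the deterministic $LTS_{\calG}$ using graph presentations---is exactly that intended derivation. Your final accounting (per-step bounded growth of presentations, $|\act|^{N+1}$ search size, and absorbing the exponential overhead into a constant increase of the tetration height, with $g$ taken as a slightly larger elementary function) correctly justifies the stated time and space bound.
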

The analogous claims hold for language
 equivalence of deterministic pushdown automata, as follows from the
 reduction in the previous section.

\noindent
We now aim to prove Theorem~\ref{th:lengthbound}.
In the rest of this section 
we assume a fixed
det-first-order grammar
$\calG=(\calN,\act,\calR)$ in normal form, if not said otherwise.

Later we will consider a fixed initial pair 
$(T_0,U_0)$, where $T_0\not\sim U_0$, and 
a fixed offending word 
$\alpha\in\act^*$ for $(T_0,U_0)$. 
The word $\alpha$ is (now) finite, and 
our aim is to show an appropriate upper bound on its length 
which will prove 
Theorem~\ref{th:lengthbound}.
To this aim, we first show an ``upper-bound tool'' 
(Proposition~\ref{prop:afixedpair} with
Corollary~\ref{cor:repeateqlevel}),
and then we 
demonstrate that the balancing strategy along our finite $\alpha$
(the same strategy as used
in the proof of 
Proposition~\ref{prop:ghasstairbase} along the infinite $\alpha$ there)
guarantees that the upper-bound tool can be applied to bound the
length of $\alpha$.

We start with noting a possible new type of
sound subterm replacement; roughly speaking, 
if a pair of heads repeats then an equation (if some is exposable)
is available 
for a potential application. (This new subterm replacement serves just
for our reasoning, we will not extend the definition of $\models$.) 
In the next proposition, 
it might help to imagine that we have 
$u\models  (E(U_1\dots U_n),F(U_1\dots U_n))$ and 
 $v\models (E(V_1\dots V_n),F(V_1\dots V_n))$ where $|u|<|v|$ and we
would like to label $v$ also with  
$(E(V'_1\dots V'_n),F(V'_1\dots V'_n))$, where $V'_j$ arises from
$V_j$ by possible replacing
of some occurrences of $U_n$ with 
$H\limtreen(U_1,\dots,U_{n-1})$.

\begin{prop}\label{prop:newreplacement}
Assume a pair $(E(x_1,\dots,x_n), F(x_1,\dots,x_n))$ (of regular
terms) for which there is a shortest
$w\in\act^*$ 
exposing an equation, w.l.o.g. say $x_n\symbeq H(x_1,\dots,x_n)$ 
($H\neq x_n$).
Let us consider the following three pairs 
\begin{center}
$(E(U_1\dots U_n),F(U_1\dots U_n))$,
\ $(E(V_1\dots V_n),F(V_1\dots V_n))$,
\ $(E(V'_1\dots V'_n),F(V'_1\dots V'_n))$,
\end{center}
with eq-levels $k_1,k_2,k_3$, respectively, where
$V_j=G_j(U_n)$ and $V'_j=G_j(H\limtreen(U_1,\dots,U_{n-1}))$
(for $j=1,2,\dots,n$).
Then $k_3\geq \min\{k_1,k_2\}$, and $k_3=k_2$ if $k_1>k_2$.
\end{prop}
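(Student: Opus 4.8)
The plan is to route everything through the equation exposed by $w$, which links the two tail-families only at level~$1$, and then to transfer offending words between the second and third pairs. Write $W=H\limtreen(U_1,\dots,U_{n-1})$, so that $V_j=G_j(U_n)$ and $V'_j=G_j(W)$. First I would record the basic link: since $w$ exposes $x_n\doteq H$ for $(E,F)$ we have $(E(U_1,\dots,U_n),F(U_1,\dots,U_n))\gt{w}(U_n,H(U_1,\dots,U_n))$, so Proposition~\ref{prop:basiceqleveldrop} gives $\eqlevel(U_n,H(U_1,\dots,U_n))\ge k_1-|w|$, and Corollary~\ref{prop:corcongrlimequat} upgrades this to $U_n\sim_{k_1-|w|}W$. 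I stress that the only consequence I will use is the level-$1$ link $U_n\sim_1 W$, which by the congruence property (Proposition~\ref{prop:congrproperty}) propagates to $V_j\sim_1 V'_j$ for every $j$, and hence to $G(V_1,\dots,V_n)\sim_1 G(V'_1,\dots,V'_n)$ for any head $G$. This is the point that avoids the naive approach, which would try a single subterm replacement $U_n\rightsquigarrow W$ via Observation~\ref{prop:basicoffwords} and would therefore need $\eqlevel(U_n,W)>k_2$; the available bound $\eqlevel(U_n,W)\ge k_1-|w|$ is far too lossy for that.

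The core is a transfer principle: any offending word $\hat z=\hat z' a$ of one of the pairs $P_2=(E(\vec V),F(\vec V))$, $P_3=(E(\vec V'),F(\vec V'))$ with $|\hat z|\le k_1$ is also offending for the other. To prove it I would use $\vec U$ as the witness $\vec{T'}$ in Proposition~\ref{prop:exposing} (legitimate since $\eqlevel(E(\vec U),F(\vec U))=k_1\ge|\hat z|$, while the chosen pair is distinguished at length $|\hat z|$): the maximal proper prefix $\hat z'$ then exposes an equation of $(E,F)$, i.e.\ by determinism $(E,F)\gt{\hat z'}(x_i,H'')$ for some $H''\neq x_i$. In particular $|w|\le|\hat z'|<|\hat z|\le k_1$, so $k_1-|w|\ge 1$ and the level-$1$ link $U_n\sim_1 W$ of the first paragraph is in force for free. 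Instantiating the single deterministic head-run $(E,F)\gt{\hat z'}(x_i,H'')$ with the tails $\vec V$ and with $\vec V'$ yields $(V_i,H''(V_1,\dots,V_n))$ and $(V'_i,H''(V'_1,\dots,V'_n))$; since $V_i\sim_1 V'_i$ and $H''(V_1,\dots,V_n)\sim_1 H''(V'_1,\dots,V'_n)$, the distinguishing action $a$ behaves identically on both, so $\hat z$ is offending for the other pair as well.

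From this principle both claims drop out. For $k_3\ge\min\{k_1,k_2\}$, suppose instead $k_3<\min\{k_1,k_2\}$; the third pair then has an offending word of length $k_3+1\le k_1$, which transfers to an offending word of the second pair of length $k_3+1\le k_2$, contradicting $\eqlevel(E(\vec V),F(\vec V))=k_2$. For the sharper statement, assume $k_1>k_2$; then $k_2+1\le k_1$, so the length-$(k_2+1)$ offending word of the second pair transfers to the third, giving $k_3\le k_2$, and together with the lower bound $k_3\ge\min\{k_1,k_2\}=k_2$ this yields $k_3=k_2$.

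The one subtlety to flag is exactly the step that makes the lossy bound harmless: an offending word is always resolved at an exposed equation, where one side is a bare tail $V_i$ (or $V'_i$) and the decision is made at the root, so only the relation $\sim_1$ of the substituted subterms is consulted; and $U_n\sim_1 W$ is cheaply guaranteed by the shortest-exposing-word property $|w|\le|\hat z'|$ combined with $|\hat z|\le k_1$. This is why no full-depth replacement---and hence no control of $\eqlevel(U_n,W)$ beyond level~$1$---is ever needed.
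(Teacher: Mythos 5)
Your overall architecture matches the paper's: compare the second and third pairs by locating, via Proposition~\ref{prop:exposing} with $U_1,\dots,U_n$ as the high-eq-level witness instance, an offending prefix that exposes an equation for $(E,F)$, and exchange the two tail families at that point. The flaw is in the justification of your transfer principle. You assert that the \emph{maximal proper prefix} $\hat z'$ of the offending word exposes the equation, i.e.\ that $(E,F)\gt{\hat z'}(x_i,H'')$. Proposition~\ref{prop:exposing} only guarantees that \emph{some} offending prefix $u$ exposes an equation; this $u$ is the first prefix at which one of the heads becomes a variable, and it can be much shorter than $\hat z'$ (indeed, once $E\gt{u}x_i$ the head run cannot be continued at all, since a variable enables no actions, so $(E,F)\gt{\hat z'}$ need not even be defined). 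After the exposure the reached pair is $(V'_i,H''(V'_1,\dots,V'_n))$, and the remaining $|\hat z|-|u|$ letters of the offending word are played out \emph{inside} this pair, whose eq-level is $k'=k_3-|u|$, in general positive. Hence the level-$1$ link $V_j\sim_1 V'_j$, which you insist is the only consequence you use, cannot transfer the remaining distinguishing suffix from $(V'_i,H''(V'_1,\dots,V'_n))$ to $(V_i,H''(V_1,\dots,V_n))$; you would need $V_j\sim_{k'+1}V'_j$. Your flagged ``one subtlety'' --- that the offending word is ``resolved at'' the exposed equation with the decision made at the root --- conflates the moment the equation is exposed with the moment the trace sets actually diverge; these are separated by $k'$ further steps.

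The repair is exactly the paper's argument, whose ingredients you derive in your first paragraph and then discard: keep the full-strength link $V_j\sim_{k}V'_j$ with $k=k_1-|w|$ (from $U_n\sim_{k}H\limtreen(U_1,\dots,U_{n-1})$ and congruence), and use $|w|\le|u|$ (because $w$ is a shortest exposing word) together with $k_1>k_3$ to get $k\ge k_1-|u|>k_3-|u|=k'$, i.e.\ $k\ge k'+1$. Then $V'_i\sim_{k'+1}V_i$ and $H''(V_1,\dots,V_n)\sim_{k'+1}H''(V'_1,\dots,V'_n)$, while $\eqlevel(V_i,H''(V_1,\dots,V_n))\ge k_2-|u|\ge k'+1$, so transitivity of $\sim_{k'+1}$ contradicts $\eqlevel(V'_i,H''(V'_1,\dots,V'_n))=k'$. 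Your final derivation of the two claims from the transfer principle is fine; only the proof of the transfer itself is broken, and fixing it collapses your route back onto the paper's.
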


\begin{proof}
We note that 
$U_n\sim_{k} H(U_1,\dots, U_{n})\sim_{k} 
H\limtreen(U_1,\dots, U_{n-1})$ where $k=max\{k_1{-}|w|,0 \}$;
hence $V_j\sim_{k} V'_j$ for $j=1,2,\dots,n$.

Suppose now $k_3<\min\{k_1,k_2\}$. Then 
there is an offending prefix $w'$ for the third pair 
which exposes an equation 
for $E,F$ (recall Proposition~\ref{prop:exposing});
necessarily $|w'|\geq |w|$.
We thus have
$(E(V'_1,\dots,V'_n),F(V'_1,\dots,V'_n))\gt{w'} 
(V'_i,G(V'_1,\dots,V'_n))$ (or vice versa)
for some $i$ and $G$,
where
$\eqlevel(V'_i,G(V'_1,\dots,V'_n))= k_3-|w'|=k'$. 
But
$\eqlevel(V_i,G(V_1,\dots,V_n))\geq
k_2{-}|w'|\geq k'{+}1$ 
and $V_i\sim_{k'+1}V'_i$  (since $k=k_1{-}|w|>k'$), 
which yields a contradiction.
Similarly we would contradict the case $k_2<\min\{k_1,k_3\}$.
Hence the claim follows.
\qed
\end{proof}

\noindent
A simple corollary is that if $u\models (E(T),F(T))$ and 
 $v\models (E(e(T)),F(e(T)))$, for a regular term
 $e=e(x_1)$ (a ``$1$-tail extension''), where $|u|<|v|$, then
 if $v$ is an offending prefix for
 $(T_0,U_0)$
then
$\eqlevel(E(e(T)),F(e(T)))$ is independent of $T$
(since $\eqlevel(E(e(T)),F(e(T)))=\eqlevel(E(e(H\limtreeone)),F(e(H\limtreeone)))$
for the appropriate $H$).
Hence if we also have $u'\models (E(T'),F(T'))$ and
 $v'\models (E(e(T')),F(e(T')))$, where $|u'|<|v'|<|v|$ 
then it is impossible that both $v,v'$ are offending prefixes
for $(T_0,U_0)$.

We now show a generalization, which seems related to
the Subwords Lemma in~\cite{Senizergues:ICALP03}.
We use a visually more convenient ``two-dimensional'' 
notation for (composed) terms: the first rectangle below is a
shorthand for $E\sigma_1\sigma_2\cdots\sigma_r$ 
where $\sigma_j=[e^{i_j}_1/x_1,\dots, e^{i_j}_n/x_n]$; it also
presupposes that the variables occuring in all terms in the rectangle
are from the set $\{x_1,\dots,x_n\}$.

\noindent
\begin{minipage}{0.6\textwidth}
Given a (head) pair
$(E(x_1,\dots,x_n), F(x_1,\dots,x_n))$, and 
$n$ tuples, called  \emph{head extensions}, 
$(e^1_1,\dots,e^1_n)$, $(e^2_1,\dots,e^2_n)$, $\dots$, 
$(e^n_1,\dots,e^n_n)$, where $E,F$ and 
$e^i_j= e^i_j(x_1,\dots,x_n)$ are regular terms, 
we call $(E',F')$ an \emph{extended head pair} if 
it can be presented as depicted, 
for $0\leq r\leq n$ and $1\leq i_1<i_2<\cdots < i_r\leq n$.
We note that there are $2^n$ such presentations.
By $(E'_{max},F'_{max})$, called the \emph{maximal pair},
we denote
the pair with $r=n$ and $i_1=1, i_2=2, \dots, i_n=n$.
\end{minipage}
\begin{minipage}{0.4\textwidth}
\begin{center}
\framebox{
\blaytree{15}{
$E$

$e^{i_1}_1 \dots e^{i_1}_n$

$e^{i_2}_1 \dots e^{i_2}_n$

$\dots$

$e^{i_r}_1 \dots e^{i_r}_n$
}
}
\framebox{
\blaytree{15}{
$F$

$e^{i_1}_1 \dots e^{i_1}_n$

$e^{i_2}_1 \dots e^{i_2}_n$

$\dots$

$e^{i_r}_1 \dots e^{i_r}_n$
}
}
\end{center}
\end{minipage}

\begin{prop}\label{prop:afixedpair}
Assume  
$(E(x_1,\dots,x_n), F(x_1,\dots,x_n))$, 
$(e^1_1,\dots,e^1_n)$, $(e^2_1,\dots,e^2_n)$, $\dots$, 
$(e^n_1,\dots,e^n_n)$ as above, 
and consider a tuple  $T_1,\dots,T_n$ (of regular ground
terms).
\\
If $k=\eqlevel(E'_{max}(T_1,\dots,T_n),F'_{max}(T_1,\dots,T_n))$ is
less than 
\\
$\eqlevel(E'(T_1,\dots,T_n),F'(T_1,\dots,T_n))$
for any other extended head pair $(E',F')$ then 
$k$ is independent of  $T_1,\dots,T_n$.
\end{prop}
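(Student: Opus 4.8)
The plan is to prove the statement by induction on $n$, reusing the contradiction argument behind Proposition~\ref{prop:newreplacement} as the engine; the base case $n=1$ is essentially the single-extension corollary recorded just after that proposition. Throughout I view the maximal pair instance as $(\Phi(T_1,\dots,T_n),\Psi(T_1,\dots,T_n))$, where $\Phi=E\sigma_1\cdots\sigma_n$ and $\Psi=F\sigma_1\cdots\sigma_n$ (with $\sigma_i=[e^i_1/x_1,\dots,e^i_n/x_n]$) form the combined head, a pair of terms over $x_1,\dots,x_n$ determined by $E,F$ and the extensions alone. I first note that $k$ must be finite, since otherwise the maximal eq-level could not lie strictly below every other one.

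The core step is to replace each tail $T_i$ by a head-determined term without changing $k$. Because the maximal pair is the strict minimum over all $2^n$ presentations, an offending prefix for $(\Phi(T_1,\dots,T_n),\Psi(T_1,\dots,T_n))$ must, by Proposition~\ref{prop:exposing}, expose an equation $x_i\doteq H(x_1,\dots,x_n)$ for the combined head $(\Phi,\Psi)$; crucially, the shortest exposing word $w$ and the term $H$ depend only on $\Phi,\Psi$, hence only on $E,F$ and the extensions, and not on the tails. The exposing observation at the end of Section~\ref{sec:definitions} then yields $T_i\sim_{k-|w|} H\limtreei(T_1,\dots,T_{i-1},T_{i+1},\dots,T_n)$, a term in which $x_i$ no longer occurs (Corollary~\ref{prop:corcongrlimequat}). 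I then argue, exactly as in the proof of Proposition~\ref{prop:newreplacement}, that substituting this head-determined term for $T_i$ leaves $k$ unchanged: any offending word of length $k{+}1$ either never exposes $x_i$, so the value of the $i$-th tail is irrelevant, or exposes it only after reading a prefix of length $\geq|w|$, after which the residual word is short enough for the $\sim$-approximation above to apply. Tails for variables that $\Phi$ cannot expose are irrelevant by Observation~\ref{prop:noexposeequiv}.

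The delicate point, and the place where strict minimality is indispensable, is the boundary case in which an offending word exposes $x_i$ after exactly $|w|$ steps and then distinguishes at residual length $k{+}1-|w|$, one more than the $\sim_{k-|w|}$ guaranteed above. This is precisely the off-by-one that the hypothesis $k_1>k_2$ resolves in Proposition~\ref{prop:newreplacement} by upgrading the relevant comparison to $\sim_{k'+1}$; I expect to recover the missing unit here in the same way, using that $k$ lies strictly below the eq-level of the presentation that drops the extension through which $x_i$ is reached. Once all tails have been turned into head-determined terms, I substitute $H\limtreei$ for $x_i$ throughout $E,F$ and the extensions to obtain a genuine instance with one fewer variable, verify that its maximal pair is again the strict minimum among its presentations so that the induction hypothesis applies, and conclude that $k$ equals the eq-level of a pair built from $E,F$ and the extensions alone. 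The main obstacle I anticipate is exactly this bookkeeping: controlling the tower of $n$ composed substitutions and the $2^n$ sub-presentations so that the strict-minimality hypothesis is provably inherited after each reduction in width, and confirming that the off-by-one is always covered by the strict gap to the appropriate competing presentation.
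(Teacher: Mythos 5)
Your overall architecture (induction on $n$, eliminating one variable per step via a limit term, with Proposition~\ref{prop:newreplacement} as the engine) matches the paper's, but the central step is not justified and I believe it fails as written. You apply Proposition~\ref{prop:exposing} to the \emph{combined} head $(\Phi,\Psi)=(E\sigma_1\cdots\sigma_n,F\sigma_1\cdots\sigma_n)$, claiming an offending prefix must expose an equation for $(\Phi,\Psi)$ ``because the maximal pair is the strict minimum over all $2^n$ presentations''. But Proposition~\ref{prop:exposing} needs a second ground instance of the \emph{same} head pair with higher eq-level, i.e.\ some tails $T'_1,\dots,T'_n$ with $\Phi(T'_1,\dots,T'_n)\sim_k\Psi(T'_1,\dots,T'_n)$; the hypothesis only supplies the other \emph{extended head pairs} $E\sigma_{i_1}\cdots\sigma_{i_r}$ evaluated at the \emph{same} tails, and these are instances of $(E,F)$ with different (layered) tails, not instances of $(\Phi,\Psi)$. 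The paper therefore exposes the equation for the bottom head $(E,F)$ (whose shortest exposing word $w$ and term $H$ indeed depend only on $E,F$) and performs the replacement at the level of the extensions, $e^{i_j}_\ell\mapsto e^{i_j}_\ell[H\limtreen/x_n]$, absorbing the first extension into a new head $\bar E=E(\bar e^1_1,\dots,\bar e^1_n)$ so that the width drops from $n$ to $n{-}1$.

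The second, related gap is the off-by-one, which you explicitly leave as an expectation. In Proposition~\ref{prop:newreplacement} the missing unit is recovered because the approximation $U_n\sim_{k_1-|w|}H\limtreen(U_1,\dots,U_{n-1})$ is extracted from a \emph{separate} pair whose eq-level $k_1$ lies strictly above that of the pair being modified, which yields $k_1-|w|\geq k'+1$ for the relevant residual level $k'$. In your scheme the approximation $T_i\sim_{k-|w|}H\limtreei(\cdots)$ is extracted from the very pair whose eq-level is $k$, and in the direction showing that the modified pair's eq-level does not exceed $k$ you need $\sim_{k-|w'|+1}$ for an exposing offending prefix $w'$ that may have $|w'|=|w|$ --- exactly one level short. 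Your proposed fix (``the presentation that drops the extension through which $x_i$ is reached'') does not connect, because your equation lives at the level of $(\Phi,\Psi)$ while the presentations with strictly higher eq-level are organized around the layered extensions of $(E,F)$. Finally, the replacement $H\limtreei(T_1,\dots,T_{i-1},T_{i+1},\dots,T_n)$ still contains the remaining tails, so ``once all tails have been turned into head-determined terms'' does not by itself yield tail-independence; and the per-step bookkeeping you defer --- showing that the maximal pair among the $2^{n-1}$ reduced presentations is again the strict minimum, which the paper establishes through a chain of applications of Proposition~\ref{prop:newreplacement} --- is precisely the substance of the proof rather than a routine verification.
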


\begin{proof}
The claim is trivial for $n=0$. We prove it for $n > 0$, assuming
it holds for $n{-}1$.
\\
If  there is no $w\in\act^*$ exposing an equation 
for $E,F$ then the claim is trivial since
$\eqlevel(E(U_1,\dots,U_n),F(U_1,\dots,U_n))$
is independent of 
$U_1,\dots,U_n$ (for any $U_1,\dots,U_n$). 
So we assume
a shortest $w\in\act^*$ exposing an equation 
for $E,F$, w.l.o.g. 
$x_n\symbeq H(x_1,\dots,x_n)$.

\medskip

\noindent
\begin{minipage}{0.6\textwidth}
Each pair $E'(T_1,\dots,T_n), F'(T_1,\dots,T_n)$ 
where $E',F'$ is an extended head pair with 
$i_1=1$
gives rise to
the depicted pair, by replacing
each $e^{i_j}_{\ell}(x_1,\dots,x_n)$ 
with 
$\bar{e}^{i_j}_{\ell}(x_1,\dots,x_{n-1})=
{e}^{i_j}_{\ell}[H\limtreen/x_n]$
and by omitting the now superfluous 
$T_n$ and $e^{i_j}_n$ (for $i_j\neq 1$). This procedure is independent
of trees $T_1,\dots,T_n$; these are handled as ``black boxes''.
\end{minipage}
\begin{minipage}{0.4\textwidth}
\begin{center}
\framebox{
\blaytree{15}{
$E$

$\bar{e}^{1}_1 \dots \bar{e}^{1}_n$

$\bar{e}^{i_2}_1 \dots \bar{e}^{i_2}_{n-1}$

$\dots$

$\bar{e}^{i_r}_1 \dots \bar{e}^{i_r}_{n-1}$

$T_1 \dots T_{n-1}$
}
}
\framebox{
\blaytree{15}{
$F$

$\bar{e}^{1}_1 \dots \bar{e}^{1}_n$

$\bar{e}^{i_2}_1 \dots \bar{e}^{i_2}_{n-1}$

$\dots$

$\bar{e}^{i_r}_1 \dots \bar{e}^{i_r}_{n-1}$

$T_1 \dots T_{n-1}$
}
}
\end{center}
\end{minipage}

\medskip

\noindent
We thus get $2^{n-1}$ pairs, with the head pair 
$(\bar{E},\bar{F})=
(E(\bar{e}^{1}_1 \dots \bar{e}^{1}_n),
F(\bar{e}^{1}_1 \dots \bar{e}^{1}_n))$ and head extensions
$(\bar{e}^{2}_1, \dots, \bar{e}^{2}_{n-1})$, 
$(\bar{e}^{3}_1, \dots, \bar{e}^{3}_{n-1})$, 
$\dots$,
$(\bar{e}^{n}_1, \dots, \bar{e}^{n}_{n-1})$.
Repeated use of Proposition~\ref{prop:newreplacement}
for the triples of the form 

\framebox{
\blaytree{15}{
$E$

$\bar{e}^{i_{\ell+1}}_1 .. \bar{e}^{i_{\ell+1}}_n$

$\dots$

$\bar{e}^{i_r}_1 \dots \bar{e}^{i_r}_{n}$

$T_1 \dots T_{n}$
}
}
\framebox{
\blaytree{15}{
$F$

$\bar{e}^{i_{\ell+1}}_1 .. \bar{e}^{i_{\ell+1}}_n$

$\dots$

$\bar{e}^{i_r}_1 \dots \bar{e}^{i_r}_{n}$

$T_1 \dots T_{n}$
}
}
\,,\,
\framebox{
\blaytree{15}{
$E$

${e}^{i_1}_1 \dots {e}^{i_1}_n$

$\dots$

${e}^{i_{\ell-1}}_1 .. {e}^{i_{\ell-1}}_n$

${e}^{i_{\ell}}_1 \dots {e}^{i_{\ell}}_n$

$\bar{e}^{i_{\ell+1}}_1 .. \bar{e}^{i_{\ell+1}}_n$

$\dots$

$\bar{e}^{i_r}_1 \dots \bar{e}^{i_r}_{n}$

$T_1 \dots T_{n}$
}
}
\framebox{
\blaytree{15}{
$F$

${e}^{i_1}_1 \dots {e}^{i_1}_n$

$\dots$

${e}^{i_{\ell-1}}_1 .. {e}^{i_{\ell-1}}_n$

${e}^{i_{\ell}}_1 \dots {e}^{i_{\ell}}_n$

$\bar{e}^{i_{\ell+1}}_1 .. \bar{e}^{i_{\ell+1}}_n$

$\dots$

$\bar{e}^{i_r}_1 \dots \bar{e}^{i_r}_{n}$

$T_1 \dots T_{n}$
}
}
\,,\,
\framebox{
\blaytree{15}{
$E$

${e}^{i_1}_1 \dots {e}^{i_1}_n$

$\dots$

${e}^{i_{\ell-1}}_1 .. {e}^{i_{\ell-1}}_n$

$\bar{e}^{i_{\ell}}_1 \dots \bar{e}^{i_{\ell}}_n$

$\bar{e}^{i_{\ell+1}}_1 .. \bar{e}^{i_{\ell+1}}_n$

$\dots$

$\bar{e}^{i_r}_1 \dots \bar{e}^{i_r}_{n}$

$T_1 \dots T_{n}$
}
}
\framebox{
\blaytree{15}{
$F$

${e}^{i_1}_1 \dots {e}^{i_1}_n$

$\dots$

${e}^{i_{\ell-1}}_1 .. {e}^{i_{\ell-1}}_n$

$\bar{e}^{i_{\ell}}_1 \dots \bar{e}^{i_{\ell}}_n$

$\bar{e}^{i_{\ell+1}}_1 .. \bar{e}^{i_{\ell+1}}_n$

$\dots$

$\bar{e}^{i_r}_1 \dots \bar{e}^{i_r}_{n}$

$T_1 \dots T_{n}$
}
}

\noindent
guarantees that 
the maximal pair (in the new $2^{n-1}$ pairs) 
has the same eq-level as the 
original maximal
pair (in the originally assumed $2^n$ pairs),
and this eq-level is less than the eq-level of any other pair.
We can thus apply the induction hypothesis.
\qed
\end{proof}

\noindent
For stating an important corollary we introduce the following notion
(which assumes a fixed triple $\calG, T_0,U_0$).
Given a head pair $E(x_1,\dots,x_n),F(x_1,\dots,x_n)$
and head extensions
$(e^1_1,\dots,e^1_n)$, $\dots$, $(e^n_1,\dots,e^n_n)$,
we say that, for a tuple $T_1,\dots,T_n$,
the pair
$(E'_{max}(T_1,\dots,T_n),F'_{max}(T_1,\dots,T_n))$
is \emph{saturated on level} $m\in\Nat$ 
if for each other extended head pair $(E',F')$ there is $u$, 
$|u|<m$, such that 
$u\models (E'(T_1,\dots,T_n),F'(T_1,\dots,T_n))$.

\begin{cor}\label{cor:repeateqlevel}
If $u\models (U,U')$ for an offending prefix $u$ for $(T_0,U_0)$
where $(U,U')$ can be presented
as a pair $(E'_{max}(T_1,\dots,T_n),F'_{max}(T_1,\dots,T_n))$
saturated on level $|u|$
then 
there cannot exist
an offending prefix $v$, $|v|\neq |u|$,
and a tuple $T'_1,\dots,T'_n$
such that $v\models (V,V')$ where 
$(V,V')$ can be presented
as $(E'_{max}(T'_1,\dots,T'_n),F'_{max}(T'_1,\dots,T'_n))$
saturated on level $|v|$ (where the head pair $E,F$ and the head
extensions are the same in both cases).
\end{cor}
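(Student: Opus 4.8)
The plan is to combine the ``fixed pair'' tool of Proposition~\ref{prop:afixedpair} with the eq-level bookkeeping for offending prefixes recorded in Proposition~\ref{prop:indsound}. The key observation is that, for an \emph{offending} prefix $u$, saturation on level $|u|$ forces the maximal pair to have \emph{strictly} smaller eq-level than every other extended head pair, which is precisely the hypothesis needed to invoke Proposition~\ref{prop:afixedpair}. Since $u$ is an offending prefix for $(T_0,U_0)$ we have $T_0\not\sim U_0$, so $\eqlevel(T_0,U_0)$ is a finite natural number throughout.

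First I would unwind the saturation hypothesis for $u$. Because $u$ is an offending prefix and $u\models (E'_{max}(T_1,\dots,T_n),F'_{max}(T_1,\dots,T_n))$, Point~2 of Proposition~\ref{prop:indsound} gives $\eqlevel(E'_{max}(T_1,\dots,T_n),F'_{max}(T_1,\dots,T_n))=\eqlevel(T_0,U_0)-|u|$. On the other hand, saturation on level $|u|$ supplies, for every other extended head pair $(E',F')$, a word $u'$ with $|u'|<|u|$ and $u'\models (E'(T_1,\dots,T_n),F'(T_1,\dots,T_n))$; Point~1 of the same proposition then yields $\eqlevel(E'(T_1,\dots,T_n),F'(T_1,\dots,T_n))\geq \eqlevel(T_0,U_0)-|u'|>\eqlevel(T_0,U_0)-|u|$. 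Hence the maximal pair lies strictly below every other extended head pair, so Proposition~\ref{prop:afixedpair} applies and its value is a constant $c$ determined only by the (shared) head pair $(E,F)$ and head extensions; in particular $c=\eqlevel(T_0,U_0)-|u|$.

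Next I would run the identical argument for the hypothetical second witness $v$ and tuple $T'_1,\dots,T'_n$. Exactly the same reasoning (saturation on level $|v|$ together with Points~1 and~2 of Proposition~\ref{prop:indsound}) shows that the maximal pair over $T'_1,\dots,T'_n$ is strictly below every other extended head pair, so Proposition~\ref{prop:afixedpair} again gives $\eqlevel(E'_{max}(T'_1,\dots,T'_n),F'_{max}(T'_1,\dots,T'_n))=c$ --- the \emph{same} constant, since it depends only on the head pair and head extensions, which are shared by hypothesis. But Point~2 of Proposition~\ref{prop:indsound} simultaneously forces this eq-level to equal $\eqlevel(T_0,U_0)-|v|$. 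Combining the two evaluations of $c$ we obtain $\eqlevel(T_0,U_0)-|u|=c=\eqlevel(T_0,U_0)-|v|$, hence $|u|=|v|$, contradicting the assumption $|v|\neq|u|$; this contradiction proves the corollary.

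The only delicate point is the bookkeeping that the constant $c$ furnished by Proposition~\ref{prop:afixedpair} is literally the \emph{same} for both tuples. This is guaranteed exactly by the corollary's stipulation that the head pair $E,F$ and the head extensions coincide in the two cases, so the map sending a qualifying tuple to the maximal-pair eq-level is one and the same function, and its (tuple-independent) value is shared. Everything else is a direct substitution of Points~1 and~2 of Proposition~\ref{prop:indsound}, so I do not expect any genuine obstacle beyond this verification.
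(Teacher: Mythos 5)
Your proposal is correct and is essentially the argument the paper intends (the corollary is stated without explicit proof, as a direct consequence of Proposition~\ref{prop:afixedpair}): saturation on level $|u|$ plus Points 1 and 2 of Proposition~\ref{prop:indsound} verify the strict-minimality hypothesis of Proposition~\ref{prop:afixedpair} for both tuples, and the tuple-independence of the resulting eq-level forces $\eqlevel(T_0,U_0)-|u|=\eqlevel(T_0,U_0)-|v|$, hence $|u|=|v|$. Your bookkeeping of why the constant is shared (same head pair and head extensions) is exactly the point the corollary's parenthetical remark is there to secure.
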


\noindent
We now fix $T_0,U_0$, where $T_0\not\sim U_0$,
and a (finite) $\alpha\in\offpl(T_0,U_0)$, and use  
the same (balancing) strategy 
along $\alpha$
as we used along the infinite
$\alpha$ in the proof of Proposition~\ref{prop:ghasstairbase}; i.e.,
we present $\alpha$ in the appropriate form $u_1v_1u_2v_2\dots$ as long as
possible. We note that we are guaranteed that $\alpha$ does not allow
a repeat, since it is offending;
if $u\models (T,U)$ for a prefix $u$ of $\alpha$
then $\eqlevel(T,U)=\eqlevel(T_0,U_0)-|u|$. 
Instead of an infinite pivot path 
we now get
a finite pivot path 
\begin{center}
$B_1\gt{w_1} B_2\gt{w_2}\cdots\gt{w_{k-1}}B_k$\,,
\end{center}
maybe with $k=0$ (i.e., with no balancing step at all).
But recall the \emph{pivot-path property}: each segment $V\gt{w}$
of length $|w|=M_2$ either contains a pivot, i.e.
$V\gt{u}B_i$ for a prefix $u$ of $w$, or sinks, i.e. $V$ sinks by $w$.

Each pivot $B_i$, for $i=1,2,\dots, k{-}1$,
has the unique subterm $V_i$ (maybe $V_i=B_i$)
which is \emph{exposed by a proper prefix of} 
$w_i$ but none of its 
proper subterms is thus exposed.
This yields the path 
\begin{equation}\label{eq:finitepivotpath}
B_1\gt{w_{11}}V_1\gt{w_{12}} B_2\gt{w_{21}}V_2
\gt{w_{22}}B_3\gt{w_{31}}
\cdots
\gt{w_{k-2,2}}B_{k-1}
\gt{w_{k-1,1}}V_{k-1}\gt{w_{k-1,2}} B_k
\end{equation}
where $w_{i1}$ can be empty but $w_{i2}$ are nonempty and 
 $|w_{i2}|\leq M_2$.
We note that 
for each segment $V_i\gt{w_{i2}} B_{i+1}\gt{w_{i+1,1}}V_{i+1}$
we either have that $V_i$ does not sink by 
${w_{i2}}{w_{i+1,1}}$ or $V_{i+1}$ is a subterm of $V_i$.

\begin{defn}
We call a subsequence $(i_0, i_1, \dots , i_r)$ of 
the sequence $(1,2,\dots,k{-}1)$ 
a \emph{stair sequence} 
if for each
$j\in\{0,1,\dots,r{-}1\}$
we have that 
$V_{i_j}$ does not sink by ${w}$ where 
$V_{i_j}\gt{w}V_{i_{j+1}}$ is the appropriate segment
of~(\ref{eq:finitepivotpath}), so $w=w_{(i_j,2)}w_{i_j+1}\dots
w_{i_{j+1}-1}w_{(i_{j+1},1)}$.
\\
A~\emph{stair sequence} is \emph{maximal} if 
it is not a proper subsequence of any other stair sequence.
\end{defn}

\begin{prop}\label{prop:stairpres}
If  $(i_0, i_1, \dots , i_r)$ is a maximal stair sequence then
$V_{i_0},V_{i_1},V_{i_2},\dots, V_{i_r}$ 
can be presented as

$V_{i_0}=$
\framebox{
\blaytree{14}{
$G_0$

$T_1  \dots T_n$
}
}
,
$V_{i_1}=$
\framebox{
\blaytree{14}{
$G_1$

$e^1_1  \dots e^1_n$

$T_1  \dots T_n$
}
}
,
$V_{i_2}=$
\framebox{
\blaytree{14}{
$G_2$

$e^2_1  \dots e^2_n$

$e^1_1  \dots e^1_n$

$T_1  \dots T_n$
}
}
,
$\dots$,
$V_{i_r}=$
\framebox{
\blaytree{14}{
$G_r$

$e^r_1  \dots e^r_n$

$\dots$

$e^2_1  \dots e^2_n$

$e^1_1  \dots e^1_n$

$T_1  \dots T_n$
}
}

\smallskip
\noindent
where $G_j$ are $M_0$-prefixes (so $n$ is simply bounded) and
$\depthsize(e^i_j)\leq \boundinc(M_2)$. 
\end{prop}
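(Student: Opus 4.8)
The plan is to prove the claim by induction on $j$ along the stair sequence, building the staircase one floor at a time. Writing $\tau_j$ for the tuple $(T^{(j)}_1,\dots,T^{(j)}_n)$ of depth-$M_0$ subterm-occurrences of $V_{i_j}$ and $G_j=P^{V_{i_j}}_{M_0}$ for its $M_0$-prefix, the invariant I would carry is $V_{i_j}=G_j(\tau_j)$ together with $\tau_{j+1}=(e^{j+1}_1(\tau_j),\dots,e^{j+1}_n(\tau_j))$, i.e. the new tails are finite-term images of the old ones; unfolding this invariant gives exactly the nested presentation in the statement. The base case is nothing but the $M_0$-prefix form of $V_{i_0}$ (Definition~\ref{def:dprefix}), with $\tau_0=(T_1,\dots,T_n)$; by Observation~\ref{prop:boundtails} the width $n$ is bounded by $m^{M_0}$, which is simply bounded, and I would fix this common width once and for all (taking $n$ as a uniform upper bound and padding when the number of depth-$M_0$ occurrences varies).

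The structural heart of the inductive step is the non-sinking hypothesis. Since $(i_0,\dots,i_r)$ is a stair sequence, $V_{i_j}$ does not sink by the word $w$ with $V_{i_j}\gt{w}V_{i_{j+1}}$; hence, writing $X$ for the root nonterminal of $V_{i_j}=XT'_1\dots T'_m$, we have $Xx_1\dots x_m\gt{w}\widehat G$ for a \emph{finite} term $\widehat G$ with a nonterminal root, so that $V_{i_{j+1}}=\widehat G(T'_1,\dots,T'_m)$. The point I would stress is that $\widehat G\neq x_\ell$ forces every occurrence of a variable in $\widehat G$ to sit at depth $\geq 1$; since each tail in $\tau_j$ occurs at depth $M_0{-}1$ inside the depth-$1$ subterms $T'_\ell$ of $V_{i_j}$, every occurrence of $\tau_j$ lands at depth $\geq M_0$ in $V_{i_{j+1}}$. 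Consequently the $M_0$-prefix cut of $V_{i_{j+1}}$ lies entirely \emph{above} the tails $\tau_j$: composing $\widehat G$ with the $(M_0{-}1)$-prefixes of the $T'_\ell$ yields a single head $K$ with $V_{i_{j+1}}=K(\tau_j)$ whose variable-occurrences all sit at depth $\geq M_0$, so each depth-$M_0$ subterm of $V_{i_{j+1}}$ is genuinely of the form $K_{[\gamma]}(\tau_j)=:e^{j+1}_\ell(\tau_j)$. This is precisely the invariant, and it is what rules out the bad case in which a tail would be split across the prefix boundary.

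It remains to bound $\depthsize(e^{j+1}_\ell)$, which equals $\depthsize(K_{[\gamma]})\leq\depthsize(K)-M_0\leq\boundinc(|w'|)$, where $w'$ measures the net growth of the finite head of $V_{i_{j+1}}$ over $\tau_j$. I expect controlling this quantity by $M_2$ to be the main obstacle. The tools are the pivot-path property (any length-$M_2$ segment of the pivot path either contains a pivot or sinks) together with the maximality of the stair sequence. Since $V_{i_j}$ never sinks along $w$, a pivot occurs within every $M_2$ symbols of this stretch, and each exposure of the deepest subterm of such a pivot stays inside the finite head over $\tau_j$ (descending into $\tau_j$ would be a sink). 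Maximality then prevents two genuinely distinct ``up'' floors from being hidden between $i_j$ and $i_{j+1}$, since any intermediate non-sinking subterm $V_\ell$ could otherwise be inserted into the sequence; pinning down this last combinatorial point is what reduces the net head growth to a single inter-pivot gap of length $\leq M_2$, yielding $\depthsize(e^{j+1}_\ell)\leq\boundinc(M_2)$ (the same kind of bound as in Definition~\ref{def:balanc} and Point~2 of Observation~\ref{prop:balbounds}). Applying the invariant for $j=0,1,\dots,r$ assembles the displayed staircase, and since $M_0$, $M_2$, $\boundinc$ and $n\leq m^{M_0}$ are all determined by $\calG$, the stated bounds are indeed simply bounded.
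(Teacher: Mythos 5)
Your proposal follows essentially the same route as the paper's proof: the non-sinking hypothesis gives a finite, non-variable head $F$ over the depth-$1$ subterms of $V_{i_j}$ (so all old tails land at depth $\geq M_0$ and the staircase composes with the $M_0$-prefix forms), and maximality of the stair sequence forces every intermediate $V_{i_j+\ell}$ to sink, whence $V_{i_{j+1}}$ is a subterm of $V_{i_j+1}$ and the pivot-path property bounds $\depthsize(F)$ by $1+\boundinc(M_2)$. The one step you flag as needing ``pinning down'' is exactly the paper's chaining argument (each sinking intermediate $V_{i_j+\ell}$ has a later $V_{i_j+\ell'}$ as a subterm), so your sketch is correct and matches the paper's.
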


\begin{proof}
Assuming a maximal stair sequence,
we first show that $V_{i_{j+1}}$ is a subterm of $V_{i_{j}+1}$:
For any $\ell\geq 1$ such that  $i_{j}+\ell<i_{j+1}$ we must have 
that $V_{i_{j}+\ell}$ sinks by $w'$ for the appropriate
segment $V_{i_{j}+\ell}\gt{w'}V_{i_{j+1}}$, which implies that 
there is $\ell'$ such that 
$i_{j}+\ell<i_{j}+\ell'\leq i_{j+1}$ where
 $V_{i_{j}+\ell'}$ is  a subterm of $V_{i_{j}+\ell}$\,.

By definition, for the segment 
$V_{i_j}\gt{w}V_{i_{j+1}}$ we have $Yx_1\dots
x_m\gt{w}F(x_1,\dots,x_m)$ where $Y$ is the root nonterminal of
$V_{i_j}$ and $F$ is not a
variable. Recalling the above pivot-path property and the fact
that   $V_{i_{j+1}}$ is a subterm of $V_{i_{j}+1}$, we deduce that
$1\leq\depthsize(F)\leq 1+\boundinc(M_2)$.
This easily implies the claim.
(Note that $e^i_j$ can be just a variable. It is also possible
that some $T_j$, $e^i_j$ get obsolete, do not really matter in the
respective substitutions.)
\qed
\end{proof}

\begin{prop}
There is an elementary function $g$, independent of $\calG,
T_0,U_0,\alpha$, such that $|\alpha|\leq g(\inputsize,\ell)$
where $\ell$ is the length of the longest stair sequence. 
\end{prop}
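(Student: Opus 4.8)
The plan is to bound $|\alpha|$ in two stages, via the number $k$ of pivots on the finite pivot path~(\ref{eq:finitepivotpath}): first to show $|\alpha|\le \Psi(\inputsize,\ell)\cdot k$ for an elementary $\Psi$, and then to bound $k$ itself by an elementary function of $\inputsize$ and $\ell$. Throughout I would use that, since $\alpha$ is offending, it has no repeat (Definition~\ref{def:repeat}); hence the labelled pairs attached to the pivots are pairwise distinct, for a coincidence of two of them would yield $\success$ at the later prefix and make $\alpha$ non-offending.

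For the first stage I would invoke the pivot-path property: between two consecutive pivots $B_i\gt{w_i}B_{i+1}$ there is no further pivot, so every length-$M_2$ window of $w_i$ must sink, i.e.\ expose a depth-$1$ subterm. Thus $w_i$ merely walks down a chain of subterms of $B_i$, each sinking step costing at most $M_2$ letters, so $|w_i|\le M_2\cdot\pressize(B_i)$; the same estimate controls the initial and final segments of $\alpha$. It therefore suffices to bound $\pressize(B_i)$ for every pivot.

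The structural core is the claim that every term occurring along~(\ref{eq:finitepivotpath})---in particular every pivot $B_i$---can be written as $K(Z_1,\dots,Z_p)$ where $K$ is a finite head with $\depthsize(K)\le M_0+\ell\cdot(1+\boundinc(M_2))$ and each $Z_j$ is a subterm of $B_1$. The bound on the head depth I would read off Proposition~\ref{prop:stairpres}: extending $\{i\}$ to a maximal stair sequence (of length $\le\ell$) presents the relevant $V_i$ as a head carrying at most $\ell$ extension layers, each of depth-size $\le 1+\boundinc(M_2)$ over its tails, and the pivots $B_i$ inherit the same bound since they arise from the balancing results---whose heads are bounded in this way---followed by sinking. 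That the tails stay among the subterms of $B_1$ I would prove as an invariant along~(\ref{eq:finitepivotpath}): the first base is $B_1$; a descent into the base passes to a subterm of the current base, hence by induction to a subterm of $B_1$, and resets the head; a climb applies a finite head of depth-size $\le 1+\boundinc(M_2)$ to the depth-$M_0$ subterms and leaves the tails untouched. Since the number $p$ of tails is simply bounded (Observation~\ref{prop:balbounds}, Point~3.), it follows that $\pressize(B_i)\le s(\inputsize,\ell)$ for an elementary $s$.

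Granting the structural claim, the two stages close quickly: by Observation~\ref{prop:boundpressize} there are at most elementarily-many pairs of presentation size $\le s(\inputsize,\ell)$, and as the pivot pairs are pairwise distinct this bounds $k$ by an elementary $\Theta(\inputsize,\ell)$, whence $|\alpha|\le g(\inputsize,\ell)$ with $g$ elementary and independent of $\calG,T_0,U_0,\alpha$. I expect the main obstacle to be the head-depth half of the structural claim: the head accumulated along a run that never sinks the current base must be governed by the \emph{length of a longest stair} over that base rather than by the raw number of climbs, because head-internal sinks interrupt the naive chain of climbs. Extracting this correspondence---each nested layer produced by Proposition~\ref{prop:stairpres} being exactly one stair step---while simultaneously verifying that descents into the base never raise the presentation size, is where the real work lies.
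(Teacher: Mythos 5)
Your overall strategy is the paper's: bound the head-depth (``size'') of every pivot by an expression linear in $\ell$, with the tails drawn from the subterms of $B_1$ (the paper works with subterms of $T_0,U_0$, which is the same up to $M_0$ steps), then use the no-repeat property of the offending $\alpha$ to bound the number of pivots and hence $|\alpha|$. However, the difficulty you flag in your last sentence is not a residual technicality but the entire content of the proposition, and your sketch does not close it. The step ``extend $\{i\}$ to a maximal stair sequence and read the bound off Proposition~\ref{prop:stairpres}'' does not work as stated: that proposition presents $V_{i_r}$ as at most $r\leq\ell$ bounded layers over the depth-$M_0$ tails of $V_{i_0}$, the \emph{bottom} of the maximal stair sequence, and maximality gives you no control over $V_{i_0}$ itself --- it only tells you that each earlier $V_{i'}$ sinks on the way to $V_{i_0}$, so that $V_{i_0}$ is a subterm of $V_{i_0-1}$, which may in turn sit on top of another long stair. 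Likewise, in your invariant a descent does not ``reset the head''; it strips one layer, so interleaving runs of $\ell$ climbs with single-level descents would, on the face of your argument, let the head grow without bound while no stair sequence ever exceeds length $\ell$.

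The paper closes exactly this gap with a minimal-counterexample induction: setting $M_3=1+\boundinc(M_2)$ and $size(V)$ to be the least depth-size of a finite head presenting $V$ over subterms of $T_0,U_0$, it shows that $size(V_i)>pM_3$ forces a stair sequence of length $p{+}1$ ending at $i$, by chaining backwards --- each predecessor either fails to sink (and then the stair sequence ending there, guaranteed by minimality, extends to $i$, a contradiction) or sinks and hence contains $V_i$ as a subterm and inherits $size>pM_3$, until one reaches $V_1$, whose size is at most $M_3$. You must supply this argument (or an equivalent one); without it the head-depth bound, and with it the proposition, is unproven. A secondary, fixable point: the estimate $|w_i|\leq M_2\cdot\pressize(B_i)$ for inter-pivot segments needs more care for infinite regular terms, since a chain of depth-one descents can revisit the same subterm; one should instead count the elementarily many distinct pairs reachable from subterms of the pivots by words of bounded length and invoke the absence of repeats.
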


\begin{proof}
For any ground term $V$ which can be presented as
$V=F(W_1,\dots,W_m)$ where $F$ is a finite term and  
 $W_i$ are subterms of $T_0$ or $U_0$, let us define
$size(V)$
as the least $\depthsize(F)$ in such presentations;
we note that if $V'$ is a subterm of $V$ then $size(V')\leq size(V)$.
Since either $T_0\gt{u_1}B_1$ or  $U_0\gt{u_1}B_1$, the above $size$ is
well defined for all $B_i$ and $V_i$\,;
moreover,  $size(V_{i+1})\leq  size(V_{i})+M_3$ where we put
$M_3=1+\boundinc(M_2)$.

We now show that 
if $size(V_{i})> p M_3$  (for $p\in\Nat$)
then there is a stair sequence $(i_0,i_1,\dots,i_p)$ such that
$i_p=i$\,:
Suppose $V_i$ is a counterexample for the least $i$ and some $p$\,;
we necessarily
have $p\geq 1$.
Since $V_1$ is a subterm of $B_1$ and $B_1$ is reachable from a
subterm of $T_0$ or $U_0$ by less than $M_0$ moves, we surely 
have $size(V_1)\leq M_3$; hence $i>1$.
We have  $size(V_{i-1})> (p{-}1)M_3$ and
$V_{i}$ is a subterm of $V_{i-1}$ (since $V_{i-1}$ satisfies the claim for
$p-1$ and thus $V_{i-1}\gt{w}V_i$ necessarily sinks); hence
$size(V_{i-1})> p M_3$.  Then $i{-}1>1$ and 
we also have  $size(V_{i-2})> (p{-}1)M_3$ and $V_i$ is a
subterm of $V_{i-2}$, so $size(V_{i-2})> p M_3$; 
continuing this reasoning leads to a contradiction.

Hence if $\ell$  is the length of the longest stair sequence 
then we have $size(B_i)\leq (\ell+2)M_3$ for
each pivot $B_i$, which gives an elementary bound (in $\inputsize$ and
$\ell$) on the length of the pivot path, and thus an elementary bound
on $|\alpha|$ as well.
\qed
\end{proof}

\noindent
Thus to prove Theorem~\ref{th:lengthbound}, it is sufficient to show
Proposition~\ref{prop:boundstairsequence}.
We show this by using Corollary~\ref{cor:repeateqlevel} and
a few combinatorial facts. 
(The combinatorial facts could be surely found in the literature in
some form, e.g. for so called \emph{Zimin words},
but we provide 
short self-contained versions tailored to our aims.)

\begin{defn}\label{def:typewords}
Given an 
alphabet $\Sigma$, the empty word 
$\varepsilon$ is of \emph{type $0$};
for $n\geq 1$, a word $w\in\Sigma^*$ is of \emph{type $n$} if $w=vuv$
for some $v$ of type $n{-}1$ and some $u$, $|u|\geq 1$.
Each word $w$ of type $n>0$ has 
a \emph{type-$n$ presentation} given by
nonempty words $v_1,v_2,\dots,v_n$; these give rise to words
$w_1,w_2,\dots,w_n$, where $w_i$ is of type $i$, as follows: 
$w_1=v_1$, $w_2=w_1v_2w_1$ ($=v_1v_2v_1$),
$w_3=w_2v_3w_2$ ($=v_1v_2v_1v_3v_1v_2v_1$), $\dots$,
$w_n=w_{n-1}v_nw_{n-1}$, where $w_n=w$.
\end{defn}

\begin{prop}\label{prop:prefixyielding}
For a type-$(n{+}1)$ presentation of $w\in\Sigma^*$,
given by words $v_1,v_2,\dots,v_{n+1}$, there are words 
$u_1,u_2,\dots,u_n$, all beginning 
with the first symbol of $v_1$,
such that 
$u_{i_1}u_{i_2}\dots u_{i_r}$ is a suffix of
(the right quotient)
$w/v_1$
for all $1\leq r\leq n$ and $1\leq i_1<i_2<\cdots < i_r\leq n$.
\end{prop}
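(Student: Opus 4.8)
The plan is to exploit the self-similar structure of the words $w_j$ and reduce the statement to an induction on $n$. First I would record the elementary fact that, in any type-$j$ presentation, each $w_j$ ends with $v_1$: this is immediate from $w_1=v_1$ and $w_{j}=w_{j-1}v_jw_{j-1}$ (recall Definition~\ref{def:typewords}). Hence the right quotient $q_j:=w_j/v_1$ (the prefix of $w_j$ obtained by deleting the trailing occurrence of $v_1$) is well defined, and writing $w_j=q_jv_1$ and substituting into $w_{j+1}=w_jv_{j+1}w_j$ yields the self-similar recursion $q_{j+1}=q_j\,(v_1v_{j+1})\,q_j$ with $q_1=\varepsilon$. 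Thus $q_{n+1}=w/v_1$ is itself a Zimin-type word built from the blocks $s_j:=v_1v_{j+1}$, each of which begins with the first symbol of $v_1$ (all $v_i$ being nonempty).

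Next I would prove the statement by induction on $n$, with the induction hypothesis asserting the existence of $u_1,\dots,u_{n-1}$, all beginning with the first symbol of $v_1$, whose increasing concatenations are all suffixes of $q_n$. For the base case $n=1$ I simply set $u_1:=q_2=v_1v_2$. For the inductive step (so $n\geq 2$) I apply the induction hypothesis to the truncated type-$n$ presentation $v_1,\dots,v_n$, whose associated word is $w_n$ with $w_n/v_1=q_n$, obtaining $u_1,\dots,u_{n-1}$; I keep these unchanged and define the new word $u_n:=s_n\,q_n=v_1v_{n+1}\,q_n$, which again begins with the first symbol of $v_1$ and is manifestly a suffix of $q_{n+1}=q_n s_n q_n$.

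The verification then splits according to the largest index occurring in an increasing concatenation $u_{i_1}\cdots u_{i_r}$. If $i_r\le n-1$, the word is a suffix of $q_n$ by the induction hypothesis, and since $q_n$ is itself a suffix of $q_{n+1}=q_n s_n q_n$, it remains a suffix of $q_{n+1}$. If the concatenation ends in $u_n$, say $u_{i_1}\cdots u_{i_{r-1}}u_n$ with $i_{r-1}<n$ (the case $r=1$ giving the empty prefix $X=\varepsilon$), the induction hypothesis gives that $X:=u_{i_1}\cdots u_{i_{r-1}}$ is a suffix of $q_n$; using the trivial monoid fact that $X$ a suffix of $Y$ implies $XZ$ a suffix of $YZ$, with $Y=q_n$ and $Z=s_nq_n$, I conclude that $Xs_nq_n=u_{i_1}\cdots u_{i_{r-1}}u_n$ is a suffix of $q_ns_nq_n=q_{n+1}$, as required. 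This closes the induction (the empty concatenation $r=0$ being vacuously a suffix).

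The only genuinely delicate point is the choice of $u_n$: it must reach to the very end of $q_{n+1}$ and swallow the entire trailing copy of $q_n$ together with the middle block $s_n$, so that the suffix condition for sequences ending at index $n$ collapses \emph{exactly} to the induction hypothesis about suffixes of $q_n$. Everything else is bookkeeping; in particular the requirement that every $u_i$ begin with the first symbol of $v_1$ holds because each $u_i$ starts either with the block $s_i=v_1v_{i+1}$ or (in the base case) with $q_2=v_1v_2$.
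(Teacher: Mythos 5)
Your proof is correct and takes essentially the same route as the paper: you choose the identical words $u_i=v_1v_{i+1}(w_i/v_1)$ and rest everything on the same key identity $w_{i+1}/v_1=(w_i/v_1)\,u_i$, differing only in that you organize the induction on $n$ (with a case split on whether the largest index equals $n$) whereas the paper inducts on the length $r$ of the index subsequence.
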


\begin{proof}
The words $v_1,v_2,\dots,v_{n+1}$ give rise to $w_1,w_2,\dots,w_{n+1}$ as
in Definition~\ref{def:typewords}. We  denote $w'_i=w_i/v_1$, 
and put $u_i=v_1v_{i+1}w'_i$ for $i=1,2,\dots,n$
(so $w'_{i+1}=w'_iu_i=w'_iv_1v_{i+1}w'_i$). 
Hence

\begin{minipage}{0.3\textwidth}
\begin{tabbing}
$u_3=$\=$v_1v_4$\=$v_1v_2$\=$v_1v_3$\=$v_1$\=$v_2$\kill
$w'_1=$\>\>\>\>\>$\varepsilon$\\
$w'_2=$\>\>\>\>$v_1v_2$\\
$w'_3=$\>\>$v_1v_2v_1v_3v_1v_2$\\
$\dots$
\end{tabbing}
\end{minipage}
\hspace{1em}
\begin{minipage}{0.3\textwidth}
\begin{tabbing}
$u_3=$\=$v_1v_4v_1v_2$\=$v_1v_3$\=$v_1v_2$\kill
$u_1=$\>\>\>$v_1v_2$\\
$u_2=$\>\>$v_1v_3v_1v_2$\\
$u_3=$\>$v_1v_4v_1v_2v_1v_3v_1v_2$\\
$\dots$
\end{tabbing}
\end{minipage}

\noindent
Since $w'_i$ is a suffix of $w'_j$ for each $j>i$, it is sufficient to
show that 
$u_{i_1}u_{i_2}\dots u_{i_r}$ is a suffix of
$w'_{i_r+1}$.
Since $w'_{i_r+1}=w'_{i_r}u_{i_r}$ and by induction hypothesis we can
assume that
$u_{i_1}u_{i_2}\dots u_{i_{r-1}}$ 
is a suffix of $w'_{i_{r-1}+1}$, and thus of $w'_{i_r}$, the claim is
clear.
\qed
\end{proof}

\noindent
For $h\in\Nat$, we define  function $f_h:\Nat\rightarrow\Nat$
(recursively) 
and note the next proposition:
\[ f_h(0)=0,\ \  f_h(n+1)=(1+f_h(n))\cdot h^{f_h(n)}\,. \]

\begin{prop}\label{prop:longword}
If $|\Sigma|=h$ then 
each $w\in \Sigma^*$, $|w|\geq f_h(n)$, 
contains a subword of type $n$.
\end{prop}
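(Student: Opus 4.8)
The plan is to prove Proposition~\ref{prop:longword} by induction on $n$, reading ``type $n$'' as ``contains a factor that is an instance of the Zimin-style word $w_n$ of Definition~\ref{def:typewords}''. The base cases are immediate: for $n=0$ there is nothing to show, since $\varepsilon$ is a factor of every word and $f_h(0)=0$; for $n=1$ the claim is just that every word of length $\geq 1=f_h(1)$ has a nonempty factor. So the entire content lies in the induction step: assuming the statement for $n$, and writing $m:=f_h(n)$, I want to show that any $w\in\Sigma^*$ with $|w|\geq (1+m)h^m=f_h(n+1)$ contains a factor of type $n+1$, i.e.\ a factor of the form $vuv$ with $v$ of type $n$ and $|u|\geq 1$.

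The step is a pigeonhole argument that manufactures a \emph{repeated} type-$n$ factor with a gap between the two copies. First I would place a family of ``checkpoints'' along $w$, consecutive checkpoints being $1+m$ positions apart, and at each checkpoint read off the length-$m$ factor of $w$ starting there; taking $h^m+1$ checkpoints is what the length bound $f_h(n+1)=(1+m)h^m$ is designed to afford. Since there are only $h^m=|\Sigma^m|$ words of length $m$ but $h^m+1$ windows, two distinct checkpoints carry the same content $x\in\Sigma^m$. Now $|x|=m=f_h(n)$, so by the induction hypothesis $x$ itself contains a type-$n$ factor $v$, occurring at some fixed offset inside $x$. Because the two windows are literally equal, $v$ occurs at the \emph{same} offset in both, yielding two occurrences of one and the same type-$n$ word $v$ inside $w$; the block $u$ of $w$ strictly between these two occurrences then produces the factor $vuv$, which is of type $n+1$ as soon as $u$ is nonempty.

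The one genuinely delicate point — and the reason the recursion carries the factor $(1+m)$ rather than merely $h^m$ — is guaranteeing $|u|\geq 1$. The separation between the two occurrences of $v$ equals the distance between the two chosen checkpoints minus $|v|$; since the checkpoints are spaced by $1+m$ and $|v|\leq|x|=m$, this separation is at least $(1+m)-m=1$, so $u$ is nonempty. I would stress that the spacing $1+m$ is exactly what forces a nonempty separator even in the worst case $|v|=m$, where the type-$n$ factor fills the whole window; with spacing only $m$ one could have adjacent equal windows producing $vv$ with $u=\varepsilon$, which is \emph{not} a type-$(n+1)$ occurrence. The main obstacle is therefore precisely this gap bookkeeping: pinning down the placement of the $h^m+1$ length-$m$ windows so that the nonempty-separator condition holds while the total span stays within $f_h(n+1)$ (handling the borderline $|v|=m$, adjacent-collision case with a little care). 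Once that is settled the induction closes and Proposition~\ref{prop:longword} follows.
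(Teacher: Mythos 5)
Your argument is essentially the paper's: both proofs run the same induction, using the pigeonhole principle to find two occurrences of one length-$f_h(n)$ word separated by a nonempty gap and then applying the induction hypothesis inside the repeated window; your ``same offset in both windows'' extraction of $v$ and the paper's explicit type-$(n{+}1)$ subword $u_2u_3v_2u_1u_2$ are the same construction. One caveat: the window-counting that you describe as settled is in fact the one place where the arithmetic does not close. Placing $h^m+1$ windows of length $m$ at starting positions $0,\,m{+}1,\,2(m{+}1),\dots$ requires total length $h^m(m{+}1)+m$, which exceeds $f_h(n{+}1)=(1{+}m)h^m$ by $m$; in a word of length exactly $(1{+}m)h^m$ you fit only $h^m$ such windows, and pigeonhole then gives no collision. (Concretely, for $h=2$ and $n=2$ the word $aabb$ has length $f_2(2)=4$ but contains no type-$2$ subword, so the statement is literally false at this threshold.) This is exactly the same imprecision as in the paper's own one-line claim that a word of length $f_h(n{+}1)$ contains two occurrences of some length-$f_h(n)$ subword separated by a nonempty word, so you are in good company; and it is harmless for the intended application in Proposition~\ref{prop:boundstairsequence}, since replacing the recursion by $f_h(n{+}1)=(1+f_h(n))\cdot h^{f_h(n)}+f_h(n)$ repairs both proofs and still yields the required elementary, hence tetrational, bound.
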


\begin{proof}
By induction, using the pigeonhole
principle. For $n=0$ the claim is obvious.
Any word $w$ of
length $f_h(n+1)$ necessarily 
contains two occurrences of some
(sub)word $u$ of length $f_h(n)$ separated by a nonempty word. 
Thus $w=v_1uv_2uv_3$ where  $|v_2|\geq 1$ and 
$u=u_1u_2u_3$ with  $u_2$ of type $n$; 
this means that the subword
$u_2u_3v_2u_1u_2$ is of type $n+1$.
\qed
\end{proof}

\begin{prop}\label{prop:boundstairsequence}
There is an elementary function $g$,
independent of $\calG, T_0, U_0, \alpha$,
such that the length of any stair sequence 
has an upper bound 
$2\uparrow\uparrow g(size(\calG))$. 
\end{prop}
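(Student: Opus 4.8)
The plan is to read a stair sequence as a word over a finite alphabet and then use the two combinatorial Propositions~\ref{prop:longword} and~\ref{prop:prefixyielding} to manufacture two offending prefixes carrying the \emph{same} saturated maximal pair, which Corollary~\ref{cor:repeateqlevel} forbids; the contrapositive then bounds the length.

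First I would fix a maximal stair sequence $(i_0,\dots,i_r)$ and apply Proposition~\ref{prop:stairpres} to write each $V_{i_j}$ in its stacked form over a \emph{common} tail tuple $T_1,\dots,T_n$, with $n$ simply bounded (Observation~\ref{prop:boundtails}), $M_0$-prefix heads $G_j$, and head extensions $e^j=(e^j_1,\dots,e^j_n)$ satisfying $\depthsize(e^j_\ell)\leq\boundinc(M_2)$. Since finite terms of depth-size at most $\boundinc(M_2)$ over $\calN$ in the variables $x_1,\dots,x_n$ form a finite set of simply bounded cardinality, each $e^j$ (together with the root data of $G_j$, which I also record) ranges over a finite alphabet $\Sigma$ of simply bounded size $h$. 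Reading off $\mathbf{w}=e^1e^2\cdots e^r\in\Sigma^*$ yields a word of length exactly $r$, and the labelling built in the proof of Proposition~\ref{prop:ghasstairbase} attaches to each prefix of $\mathbf{w}$ an offending prefix of $\alpha$ labelling the corresponding bal-result pair.

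Next I would suppose $r\geq f_h(n{+}2)$ and invoke Proposition~\ref{prop:longword} to extract a subword of $\mathbf{w}$ of type $n{+}2$, say $w_{n+2}=w_{n+1}v_{n+2}w_{n+1}$ with presentation $v_1,\dots,v_{n+2}$. Its two occurrences of $w_{n+1}$ are type-$(n{+}1)$ words with the \emph{identical} presentation $v_1,\dots,v_{n+1}$, so Proposition~\ref{prop:prefixyielding} produces for each the \emph{same} $n$ head extensions $u_1,\dots,u_n$, each a concatenation of consecutive $e^j$-blocks (hence a legitimate composed head extension in the sense of Proposition~\ref{prop:afixedpair}), with every subsequence product $u_{i_1}\cdots u_{i_s}$ occurring as a suffix. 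Translating suffixes back to exposed sub-stacks, this says exactly that at the top endpoint of each copy the maximal pair built from the fixed head pair $(E,F)$ (the recorded head data) and these $n$ head extensions is \emph{saturated on the level of the corresponding offending prefix}, since every other extended head pair is labelled at a strictly shorter prefix. The two copies sit at different heights, so their endpoints give two offending prefixes of different lengths, both carrying a saturated maximal pair with the same head pair and the same head extensions --- precisely the situation excluded by Corollary~\ref{cor:repeateqlevel}.

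Therefore $r<f_h(n{+}2)$. Unwinding $f_h(n{+}1)=(1+f_h(n))\cdot h^{f_h(n)}$ shows that $f_h(n{+}2)$ is a tower of exponentials of height $\Theta(n)$ with base essentially $h$; as both $h$ and $n$ are simply bounded (elementary in the grammar size $\inputsize$), this gives $r\leq 2\uparrow\uparrow g(\inputsize)$ for an elementary $g$ independent of $\calG,T_0,U_0,\alpha$, which is the claim. The step I expect to be the main obstacle is the faithful dictionary in the previous paragraph: one must verify that a \emph{suffix} in the sense of Proposition~\ref{prop:prefixyielding} really corresponds to a sub-stack exposed and labelled by a strictly shorter offending prefix (so that ``saturated on level $|u|$'' genuinely holds), and that recording the head data in $\Sigma$ forces the top head pair $(E,F)$ --- not merely the head extensions --- to coincide at the two endpoints, so that Corollary~\ref{cor:repeateqlevel} applies verbatim.
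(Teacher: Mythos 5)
Your proposal follows essentially the same route as the paper's proof: encode the stair sequence as a word over a simply bounded alphabet, use Propositions~\ref{prop:longword} and~\ref{prop:prefixyielding} to exhibit two occurrences of a type-$(n{+}1)$ subword (with identical presentation) whose endpoints carry the same saturated maximal pair at two different offending-prefix lengths, contradict Corollary~\ref{cor:repeateqlevel}, and conclude $r<f_h(n{+}2)\leq 2\uparrow\uparrow g(\inputsize)$. The only divergence is cosmetic: where you record ``the root data of $G_j$'' in each alphabet letter, the paper records the full (simply bounded) bal-result heads $(E_j,F_j)$ together with $(e^j_1,\dots,e^j_n)$, which is exactly what makes the head pair $(E,F)$ coincide at the two endpoints --- the point you correctly flag as the remaining verification.
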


\begin{proof} 
Given a maximal stair sequence $(i_0, i_1, \dots , i_r)$
and the presentation of $V_{i_0},V_{i_1},V_{i_2},\dots, V_{i_r}$ as in
Proposition~\ref{prop:stairpres},
let us
consider the pivot (sub)sequence $B_{i_0+1},\dots, B_{i_{r-1}+1}$
($B_{i_j+1}$ is the first pivot after $V_{i_{j}}$).
The balancing results with 
$B_{i_0+1},\dots, B_{i_{r-1}+1}$
can be presented as 

\framebox{
\blaytree{13}{
$E_0$

$T_1  \dots T_n$
}
}
\framebox{
\blaytree{13}{
$F_0$

$T_1  \dots T_n$
}
}
,
\framebox{
\blaytree{13}{
$E_1$

$e^1_1 \dots e^1_n$

$T_1  \dots T_n$
}
}
\framebox{
\blaytree{13}{
$F_1$

$e^1_1 \dots e^1_n$

$T_1  \dots T_n$
}
}
,$\dots$,
\framebox{
\blaytree{17}{
$E_{r-1}$

$e^{r-1}_1 \dots e^{r-1}_n$

$\dots$

$e^2_1 \dots e^2_n$

$e^1_1 \dots e^1_n$

$T_1  \dots T_n$
}
}
\framebox{
\blaytree{17}{
$F_{r-1}$

$e^{r-1}_1 \dots e^{r-1}_n$

$\dots$

$e^2_1 \dots e^2_n$

$e^1_1 \dots e^1_n$

$T_1  \dots T_n$
}
}

\smallskip
\noindent
where $E_i,F_i,e^i_1 \dots e^i_n$ are finite terms
whose depth-size is simply bounded 
(unlike
in/around Proposition~\ref{prop:afixedpair}
where we considered general regular terms).
Hence 
the tuples $(E_i,F_i,e^i_1, \dots, e^i_n)$ can be viewed as
elements of an alphabet with $h$ elements, where 
$h$ is bounded by an elementary function of $size(\calG)$.
If $r{-}1\geq f_h(n{+}2)$ then the word
\begin{center}
$(E_{r-1},F_{r-1},e^{r-1}_1,\dots,e^{r-1}_n)$
$(E_{r-2},F_{r-2},e^{r-2}_1,\dots,e^{r-2}_n)$
$\dots$
$(E_1,F_1,e^1_1,\dots,e^1_n)$
\end{center}
contains two different occurrences  
of a subword of type $n{+}1$, 
by Proposition~\ref{prop:longword}.
By Proposition~\ref{prop:prefixyielding},
from this subword we can extract a pair $E,F$ (i.e., the
``head-projection'' of the first symbol of $v_1$),
and  $n$ words (``extension-projections'' of $u_1,\dots,u_n$)
$(\bar{e}^1_1, \dots, \bar{e}^1_n)$,
 $(\bar{e}^2_1, \dots, \bar{e}^2_n)$, $\dots$,
  $(\bar{e}^n_1, \dots, \bar{e}^n_n)$, where each 
$(\bar{e}^j_1, \dots, \bar{e}^j_n)$ corresponds to the substitution
arising by composing the substitutions corresponding to a segment
\framebox{
\blaytree{18}{

$e^{k+\ell}_1 \dots e^{k+\ell}_n$

$\dots$

$e^k_1 \dots e^k_n$
}
}
so that:
there are $U_1,U_2\dots,U_n$ determined by the first occurrence of
the type $n{+}1$ subword
such that
for each tuple $1\leq i_1<i_2<\cdots < i_r\leq n$ there 
is a pair 
\framebox{
\blaytree{13}{
$E$

$\bar{e}^{i_1}_1 \dots \bar{e}^{i_1}_n$

$\dots$

$\bar{e}^{i_r}_1 \dots \bar{e}^{i_r}_n$

$U_1  \dots U_n$
}
}
\framebox{
\blaytree{13}{
$F$

$\bar{e}^{i_1}_1 \dots \bar{e}^{i_1}_n$

$\dots$

$\bar{e}^{i_r}_1 \dots \bar{e}^{i_r}_n$

$U_1  \dots U_n$
}
}
in the above sequence, where the appropriate 
$(E'_{max}(U_1,\dots,U_n),F'_{max}(U_1,\dots,U_n))$
is saturated on the corresponding level. Similarly for
$V_1,V_2,\dots,V_n$ 
determined by the second occurrence of the type $n{+}1$ subword.
This 
would yield a contradiction
with Corollary~\ref{cor:repeateqlevel}.

Therefore $r-1<f_h(n+2)$. 
From the definition of $f_h$ we can easily derive 
$f_h(n+2)\leq h \uparrow\uparrow g_1(n)$ for an elementary function 
$g_1$.
Since $h$ and $n$ are bounded by elementary
functions of $size(\calG)$,
the claim follows.
\qed
\end{proof}

\end{document}